\definecolor{DarkGreen}{rgb}{0.1,0.5,0.1}
\renewcommand*{\backref}[1]{}
\renewcommand*{\backrefalt}[4]{%
    \ifcase #1 (Not cited.)%
    \or        (Cited on page~#2)%
    \else      (Cited on pages~#2)%
    \fi}
\Crefname{property}{Property}{Properties}
\Crefname{example}{Example}{Examples}
\Crefname{table}{Table}{Tables}
\tikzset{snake it/.style={decorate, decoration=snake}}
\colorlet{mygray}{gray!40}
\let\oldnl\nl% Store \nl in \oldnl
\newcommand{\nonl}{\renewcommand{\nl}{\let\nl\oldnl}}% Remove line number for one line
\theoremstyle{definition}
\newenvironment{example}{\pushQED{\qed}\examplex}{\popQED\endexamplex}
\theoremstyle{remark}
\newtheorem{remark}{Remark}
\Crefname{claim}{Claim}{Claims}
\renewcommand{\>}{\succ}
\renewcommand{\nsucc}{\cancel{\>}}
\newcommand{\DA}{\texttt{DA}}
\renewcommand{\O}{\mathcal{O}}
\begin{document}

\title{Accomplice Manipulation of the Deferred Acceptance Algorithm}
\date{}

\author[1]{Hadi Hosseini}
\author[2]{Fatima Umar}
\author[3]{Rohit Vaish}
\affil[1]{Pennsylvania State University\\
	{\small\texttt{hadi@psu.edu}}}
\affil[2]{Rochester Institute of Technology\\
	{\small\texttt{fu1476@rit.edu}}}
\affil[3]{Tata Institute of Fundamental Research\\
	{\small\texttt{rohit.vaish@tifr.res.in}}}

\maketitle

\begin{abstract}
The deferred acceptance algorithm is an elegant solution to the stable matching problem that guarantees optimality and truthfulness for one side of the market. Despite these desirable guarantees, it is susceptible to strategic misreporting of preferences by the agents on the other side. We study a novel model of strategic behavior under the deferred acceptance algorithm: manipulation through an \emph{accomplice}. Here, an agent on the proposed-to side (say, a woman) partners with an agent on the proposing side---an accomplice---to manipulate on her behalf (possibly at the expense of worsening his match). We show that the optimal manipulation strategy for an accomplice comprises of promoting exactly one woman in his true list (i.e., an \emph{inconspicuous} manipulation). This structural result immediately gives a polynomial-time algorithm for computing an optimal accomplice manipulation. We also study the conditions under which the manipulated matching is stable with respect to the true preferences. Our experimental results show that accomplice manipulation outperforms self manipulation both in terms of the frequency of occurrence as well as the quality of matched partners.
\end{abstract}

\section{Introduction}\label{sec:Introduction}

The deferred acceptance (\DA{}) algorithm~\citep{GS62college} is a crowning achievement of the theory of two-sided matching, and forms the backbone of a wide array of real-world matching markets such as entry-level labor markets~\citep{R84evolution,RP99redesign} and school choice~\citep{APR+05boston,APR05new}. Under this algorithm, one side of the market (colloquially, the \emph{men}) makes proposals to the other side (the \emph{women}) subject to either immediate rejection or tentative acceptance. A key property of the \DA{} algorithm is \emph{stability} which says that no pair of unmatched agents should prefer each other over their assigned partners. This property has played a significant role in the long-term success of several real-world matching markets~\citep{R91natural,R02economist}.

The attractive stability guarantee of the \DA{} algorithm, however, comes at the cost of incentives, as any stable matching procedure is known to be vulnerable to strategic misreporting of preferences~\citep{R82economics}. The special proposal-rejection structure of the \DA{} algorithm makes truth-telling a dominant strategy for the proposing side, i.e., the men~\citep{DF81machiavelli,R82economics}, implying that any strategic behavior must occur on the proposed-to side, i.e., the women. This model of strategic behavior by a woman---which we call \emph{self manipulation}---has been the subject of extensive study in economics and computer science~\citep{DF81machiavelli,GS85some,DGS87further,TS01gale,KM09successful,KM10cheating,VG17manipulating,DST18coalitional}.

Our interest in this work is in studying a different model of strategic behavior under the \DA{} algorithm called \emph{manipulation through an accomplice}~\citep{BH19partners}. Under this model, a woman teams up with an agent on the proposing side (a.k.a. an accomplice) to manipulate the outcome on her behalf, possibly worsening his match in the process. Such a strategic alliance can naturally arise in the assignment of students to schools, where a ``well-connected'' student could have a school administrator manipulate on his/her behalf, possibly at a small loss to the school.

At first glance, manipulation through an accomplice might not seem any more useful than self manipulation, as the latter provides direct control over the preferences of the manipulator. Interestingly, there exist instances where this intuition turns out to be wrong.

\begin{example}[\textbf{Accomplice vs. self}]
Consider the following preference profile where the \DA{} outcome is underlined.

\begin{table}[h]
%\vspace{-0.06in}
\centering %\small
    \begin{tabular}{p{0.019\textwidth}>{\centering}p{0.019\textwidth}>{\centering}p{0.019\textwidth}>{\centering}p{0.019\textwidth}>{\centering}p{0.019\textwidth}>{\centering}p{0.032\textwidth}>{\centering}p{0.019\textwidth}>{\centering}p{0.019\textwidth}>{\centering}p{0.019\textwidth}>{\centering}p{0.019\textwidth}>{\centering\arraybackslash}p{0.019\textwidth}}
         	$\boldsymbol{\textcolor{blue}{m_1}}$: & \underline{$w_3^{*}$} & $w_2$ & $w_1$ & $w_4$ && $\boldsymbol{\textcolor{blue}{w_1}}$: & $m_4$ & $m_3^{*}$ & $m_1$ & \underline{$m_2$}\\
            $m_2$: & \underline{$w_1$} & $w_4^{*}$ & $w_2$ & $w_3$ && $w_2$: & \underline{$m_4^{*}$} & $m_3$ & $m_2$ & $m_1$\\
            $m_3$: & $w_2$ & \underline{$w_4$} & $w_1^{*}$ & $w_3$ && $w_3$: & $m_3$ & \underline{$m_1^{*}$} & $m_2$ & $m_4$\\
            $m_4$: & \underline{$w_2^{*}$} & $w_1$ & $w_3$ & $w_4$ && $w_4$: & $m_2^{*}$ & $m_1$ & \underline{$m_3$} & $m_4$
    \end{tabular}%
    %\caption{}
    %\label{tab:Accomplice_beats_Self}
%\vspace{-0.06in}
\end{table}

Suppose $w_1$ seeks to improve her match via manipulation. The optimal self manipulation strategy for $w_1$ is truth-telling, as $m_2$ is the only man who proposes to her under the \DA{} algorithm. On the other hand, $w_1$ can improve her outcome by asking $m_1$ to misreport on her behalf. Indeed, if $m_1$ misreports by declaring $\>'_{m_1} \coloneqq w_1 \> w_3 \> w_2 \> w_4$, then $w_1$'s match improves from $m_2$ to $m_3$ (the new \DA{} matching is marked by $*$). Notice that the accomplice $m_1$ preserves his initial match, meaning he does not incur any `regret'.
\label{eg:Self_vs_Accomplice_Intro}
\end{example}

The above example highlights that accomplice manipulation could, in principle, have an advantage over self manipulation. However, it is not a priori clear how \emph{frequent} such an advantage might be in a typical matching scenario. To investigate the latter question, we take a quick experimental detour.

\paragraph{Accomplice manipulation is a viable strategic behavior.}
We simulate a two-sided matching scenario for an increasingly larger set of agents (specifically, $n \in \{3,\dots, 40\}$, where $n$ is the number of men/women) and for each setting, generate 1000 preference profiles uniformly at random. For each profile, we compute the optimal self manipulation under the \DA{} algorithm for a fixed woman~\citep{TS01gale}, as well as the optimal accomplice manipulation by any man (we allow any man to be chosen as an accomplice as long as he is not worse off, i.e., a \emph{no-regret} accomplice manipulation). \Cref{fig:AvTP_SvTP_AvS_SvA} illustrates the fraction of instances where accomplice and self manipulation are strictly more beneficial than truthful reporting, as well as how they compare against each other.

\begin{figure}[h]
    \centering
    \begin{minipage}{0.46\linewidth}
        \centering
        \includegraphics[width=\textwidth]{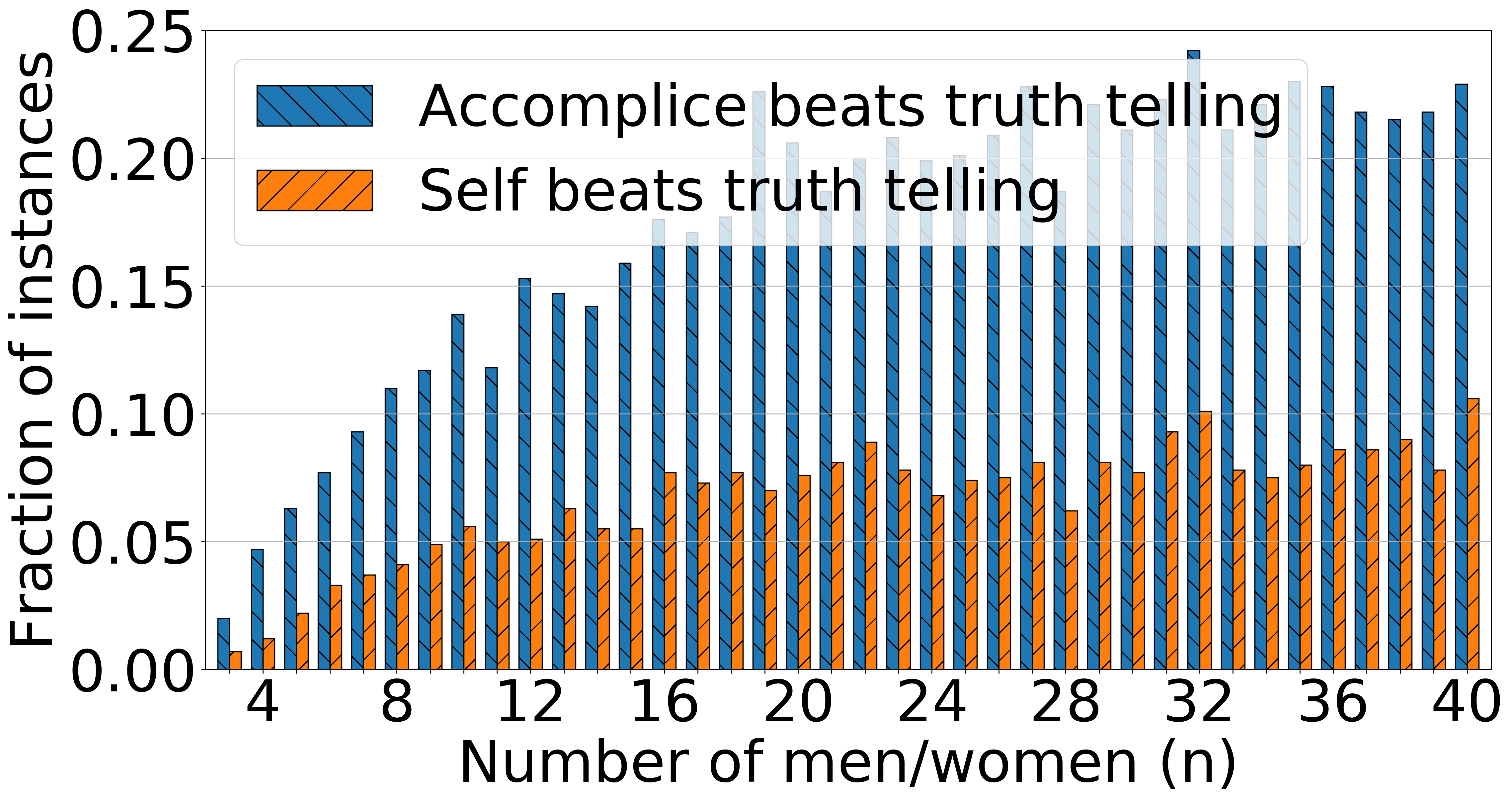}
    \end{minipage}
    \hfill
    \begin{minipage}{0.46\linewidth}
        \centering
        \includegraphics[width=\textwidth]{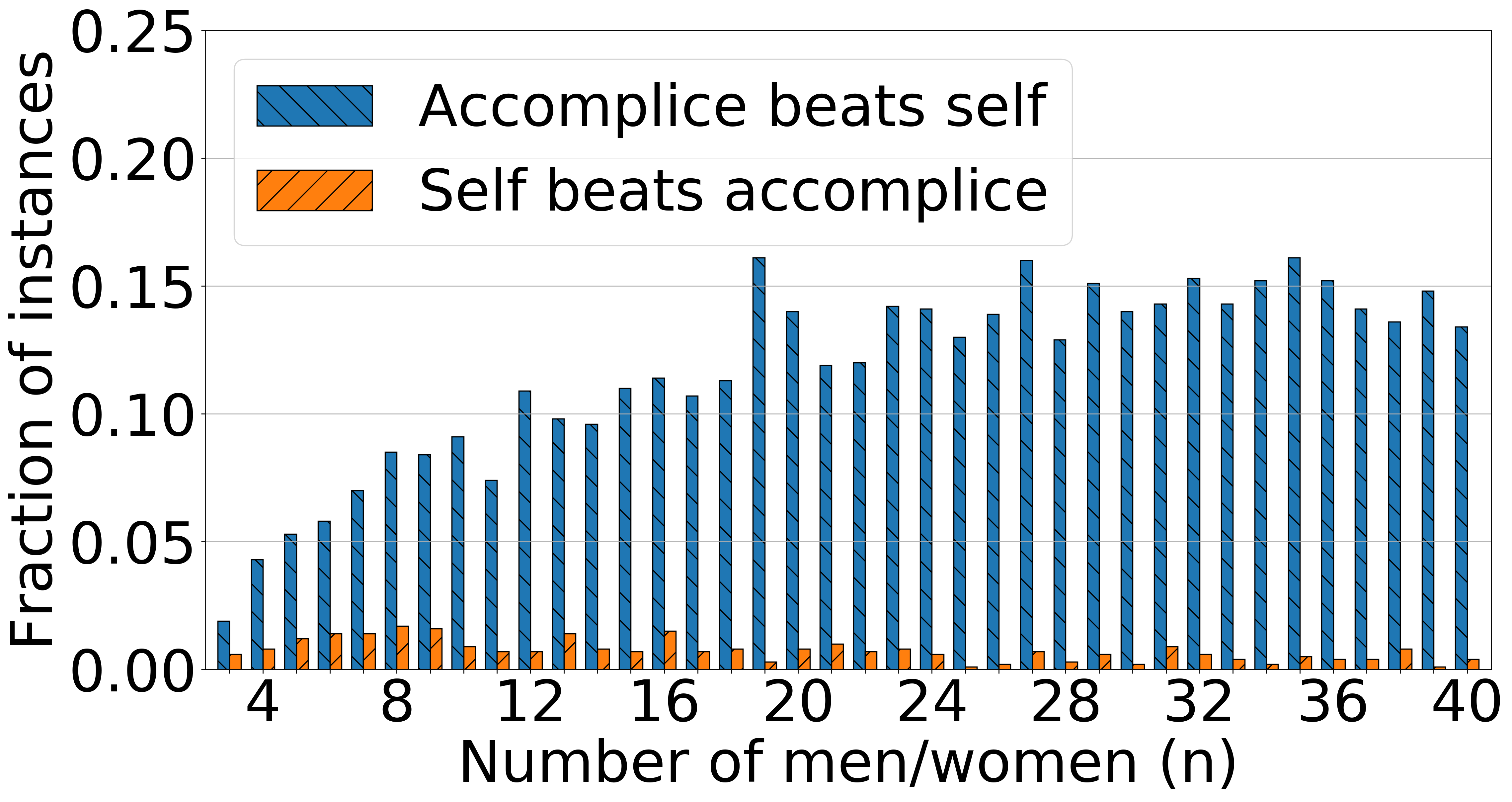}
    \end{minipage}
    \caption{Comparing no-regret accomplice manipulation and self manipulation against truthful reporting (left) and against each other (right).}
     \label{fig:AvTP_SvTP_AvS_SvA}
     %\vspace{-0.1in}
\end{figure}

From \Cref{eg:Self_vs_Accomplice_Intro} and \Cref{fig:AvTP_SvTP_AvS_SvA}, it is clear that the incentive for manipulation through an accomplice is not only present but actually more prevalent than self manipulation. Additionally, as we discuss later in our experimental results, women are expected to receive \emph{better} matches when manipulating through an accomplice (\Cref{fig:RankDiff_Accomplice_vs_Self}). These promising observations call for a systematic study of the structural and computational aspects of the accomplice manipulation problem, which is the focus of our work.

\paragraph{Our contributions.}
We consider two models of strategic behavior--- \emph{no-regret} manipulation (wherein the accomplice's match doesn't worsen upon misreporting) and \emph{with-regret} manipulation (where the accomplice could get a worse match)---and make the following contributions:

\begin{itemize}[leftmargin=*]
	\item \textbf{No-regret manipulation}: Our main theoretical result (\Cref{thm:Inconspicuous_NoRegret}) is that any optimal no-regret accomplice manipulation can be simulated by promoting exactly one woman in the true preference list of the accomplice; in other words, the manipulation is \emph{inconspicuous}~\citep{VG17manipulating}. This structural finding immediately gives a polynomial-time algorithm for computing an optimal manipulation (\Cref{cor:Optimal_No-regret_PolynomialTime}). We also show that the inconspicuous no-regret strategy results in a matching that is stable with respect to the true preferences (\Cref{cor:No-regret_Inconspicuous_Stable}). 
	
	\item \textbf{With-regret manipulation}: For the more permissible strategy space that allows the accomplice to incur regret, the optimal manipulation strategy once again turns out be inconspicuous (\Cref{thm:Inconspicuous_WithRegret}). However, in contrast to the no-regret case, the inconspicuous with-regret strategy is no longer guaranteed to be stability-preserving (\Cref{eg:WithRegret}). Nevertheless, any blocking pair can be shown to necessarily involve the accomplice (\Cref{prop:mStabilityExtension}). This property justifies the use of an accomplice who can be encouraged to tolerate some regret to benefit a woman.
	
	\item \textbf{Experiments}: On the experimental front, we work with preferences generated uniformly at random, and find that accomplice manipulation outperforms self manipulation with respect to the frequency of occurrence, the quality of matched partners, and the fraction of women who can improve their matches (\Cref{sec:Experiments}). 
\end{itemize}

\section{Related Work}

The impossibility result of \citet{R82economics} on the conflict between strategyproofness and stability has led to extensive follow-up research. Much of the earlier work in this direction focused on \emph{truncation} manipulation~\citep{GS85ms,R99truncation,CS14optimal,JKK14exhaustiveness}, where the misreported preference list is required to be a prefix of the true list. In the context of accomplice manipulation, however, the truncation manipulation problem becomes trivial. Indeed, for any fixed accomplice, it is easy to see that a truncation strategy is never better than truthful reporting as the set of proposals under the former is always a subset of those under the latter.

The literature on \emph{permutation} manipulation is more recent and has focused on computational aspects. \citet{TS01gale} provided a polynomial-time algorithm for computing the optimal permutation manipulation by a woman under the men-proposing \DA{} algorithm. \citet{VG17manipulating} showed that an optimal permutation manipulation is, without loss of generality, inconspicuous. They also studied conditions under which the manipulated outcome is stable with respect to the true preferences. \citet{DST18coalitional} generalized these results to permutation manipulation by a coalition of women.

Permutation manipulation has also been studied in the many-to-one setting (known as the \textit{college admissions} problem). In particular, \citet{A15susceptibility} have shown that under the student-proposing \DA{} algorithm, the problem of computing the optimal preference list for a college reduces to an analogous problem in the one-to-one setting, which, as noted above, can be solved in polynomial time~\citep{TS01gale}.

\citet{H06cheating} has studied (weakly) Pareto improving permutation manipulation by a coalition of men, revisiting the result of \citet{DF81machiavelli} on the impossibility of manipulations that are strictly improving for every member of the coalition.

Manipulation of the \DA{} algorithm has also been examined in a setting where the agents' preferences are drawn uniformly at random. \citet{K20instability} has shown that when one woman strategizes optimally, the outcome of the men-proposing \DA{} algorithm is ``close'' to the women-optimal matching (in terms of average rank of matched partners). The follow-up work of \citet{N20descending} studies how many independent strategic agents are needed (in expectation) in order to always achieve the women-optimal outcome.

The accomplice manipulation model was proposed by \citet{BH19partners}, who noted that manipulation through an accomplice can be strictly more preferable for the woman than optimal self manipulation. However, they left the structural and computational questions open.

\citet{HKN16improving} studied a closely related problem in school choice wherein the schools have the incentive to deliberately make themselves look less attractive to ``undesirable'' students (e.g., a private school that is legally required to cap its tuition fee for low-income students could make itself less attractive by increasing the rent in dormitories or requiring the students to purchase expensive uniforms). They showed that every stable matching mechanism is vulnerable to manipulation of quality scores by the schools, but such incentives vanish in a large-market setting.

%%%%%%%%%%%%%%%%%%%%%%%%%%%%%%%%%%%%%%%%%%%%%%%%%%%%%%%%%%%%%%%%%%%%%%%%

\section{Preliminaries}
\label{sec:preliminaries}

\subsection{Stable Matching Problem}
\label{sec:Preliminaries_Stable_Matching}

\paragraph{Problem setup} An instance of the \emph{stable marriage problem}~\citep{GS62college} is specified by the tuple $\langle M, W, \> \rangle$, where $M$ is a set of $n$ men, $W$ is a set of $n$ women, and $\>$ is a \textit{preference profile} which consists of the preference lists of all agents. The preference list of any man $m \in M$, denoted by $\>_m$, is a strict total order over all women in $W$ (for any $w \in W$, the list $\>_w$ is defined analogously).

We use the shorthand $w_1 \succeq_m w_2$ to denote `either $w_1 \>_m w_2$ or $w_1 = w_2$', and write $\>_{-m}$ to denote the preference lists of all men and women except man $m$; thus, $\> = \{\>_{-m}, \>_m\}$.

\paragraph{Stable matchings} A \emph{matching} is a function $\mu: M \cup W \rightarrow M \cup W$ such that $\mu(m) \in W$ for all $m \in M$, $\mu(w) \in M$ for all $w \in W$, and $\mu(m) = w$ if and only if $\mu(w) = m$. A matching $\mu$ admits a \emph{blocking pair} with respect to the preference profile $\>$ if there is a man-woman pair $(m, w)$ who prefer each other over their assigned partners under $\mu$, i.e., $w \>_m \mu(m)$ and $m \>_w \mu(w)$. A \emph{stable matching} is one that does not admit any blocking pair. We will write $S_{\>}$ to denote the set of all matchings that are stable with respect to $\>$. In the worst case, the size of $S_{\>}$ grows exponentially with respect to $n$. \citet{K97stable} showed that the lower bound on the number of stable matchings is $2.28^n$, and \citet{KGW18simply} found the current best upper bound to be $c^n$ for a universal constant $c$. 

In addition, for any pair of matchings $\mu,\mu'$, we will write $\mu \succeq_M \mu'$ to denote $\mu(m) \succeq_m \mu'(m)$ for all $m \in M$ (and $\mu \succeq_W \mu'$ for the women).

\paragraph{Deferred acceptance algorithm} Given a preference profile $\>$, the Deferred Acceptance (\DA{}) algorithm of \citet{GS62college} proceeds in rounds. In each round, the algorithm consists of a \emph{proposal} phase, where each man who is currently unmatched proposes to his favorite woman from among those who haven't rejected him yet, followed by a \emph{rejection} phase where each woman tentatively accepts her favorite proposal and rejects the rest. The algorithm terminates when no further proposals can be made. 

\citet{GS62college} showed that given any profile $\>$ as input, the \DA{} algorithm always returns a stable matching as output; we denote this matching by $\DA(\>)$. They also observed that this matching is \emph{men-optimal}, i.e., it assigns each man his favorite stable partner among all stable matchings in $S_{\>}$. \citet{MW71stable} subsequently showed that this matching is also \emph{women-pessimal}.

\begin{restatable}[\protect\citealp{GS62college,MW71stable}]{proposition}{DAstableMenOptimalWomenPessimal}
Let $\>$ be a preference profile and let ${\mu \coloneqq \DA(\>)}$. Then, $\mu \in S_{\>}$. Furthermore, for any $\mu' \in S_{\>}$, $\mu \succeq_M \mu'$ and $\mu' \succeq_W \mu$.
\label{prop:DA_stable_MenOptimal_WomenPessimal}
\end{restatable}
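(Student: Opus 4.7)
The plan is to prove the three claims---stability, men-optimality, and women-pessimality of $\mu \coloneqq \DA(\>)$---separately, with the last being a quick corollary of the second.

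For \textbf{stability}, I would argue by contradiction: suppose some pair $(m, w)$ blocks $\mu$, so $w \>_m \mu(m)$ and $m \>_w \mu(w)$. Because every man proposes strictly down his true list and $m$ ends up matched to $\mu(m)$, man $m$ must have proposed to $w$ at some round and been rejected. The key monotonicity observation is that across the rounds of DA, each woman's tentatively held partner can only improve with respect to her preferences. Propagating this invariant from the round $w$ rejected $m$ to termination yields $\mu(w) \succeq_w m$, contradicting $m \>_w \mu(w)$. This step is largely a direct unpacking of how the algorithm operates; I do not anticipate any real obstacle.

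For \textbf{men-optimality}, call a woman $w$ \emph{achievable} for $m$ if some stable matching pairs them, and prove by induction on rounds of DA the invariant that no man is ever rejected by an achievable woman. Suppose the invariant first fails when $w$ rejects $m$, and let $\mu' \in S_{\>}$ witness that $w$ is achievable for $m$, i.e., $\mu'(m) = w$. Then $w$ rejects $m$ in favor of some $m'$ with $m' \>_w m$. By minimality of the round, $m'$ has not yet been rejected by any achievable woman, and in particular $w' \coloneqq \mu'(m')$ is still achievable for him. Hence at the moment $m'$ holds $w$, he must prefer $w$ weakly to $w'$; since $w' \neq w$ (otherwise $\mu'(w) = m' \neq m$), this strengthens to $w \>_{m'} w'$. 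Combined with $m' \>_w m = \mu'(w)$, this produces a blocking pair $(m', w)$ for $\mu'$, contradicting $\mu' \in S_{\>}$. The main obstacle will be stating the inductive invariant carefully enough to justify that $m'$ has not yet exhausted his achievable partners by the time he proposes to $w$---this is where the ``first rejection'' minimality is really being used.

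Finally, \textbf{women-pessimality} follows quickly from men-optimality. For any $\mu' \in S_{\>}$ and any woman $w$, set $m \coloneqq \mu(w)$ and $m' \coloneqq \mu'(w)$; I want $m' \succeq_w m$. Suppose for contradiction $m \>_w m'$, so in particular $m \neq m'$. By men-optimality, $\mu(m) \succeq_m \mu'(m)$; since $\mu(m) = w$, either $w = \mu'(m)$ (ruled out by $\mu'(w) = m' \neq m$) or $w \>_m \mu'(m)$, and the latter makes $(m, w)$ a blocking pair for $\mu'$, a contradiction. No technical obstacle beyond careful bookkeeping of strict versus weak preferences.
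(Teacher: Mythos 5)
Your proof is correct: the stability argument, the ``no man is ever rejected by an achievable woman'' induction for men-optimality, and the derivation of women-pessimality from men-optimality are exactly the classical arguments of \citet{GS62college} and \citet{MW71stable}. The paper states this proposition as a known result and gives no proof of its own, so there is nothing to compare beyond noting that your argument matches the standard one from the cited literature.
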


\paragraph{Accomplice manipulation} Under this model of strategic behavior, a woman $w$, instead of misreporting herself, has a man $m$ provide a manipulated preference list, say $\>'_m$, in order to improve her match. Given a preference profile $\>$, we say that $w$ can \emph{manipulate through the accomplice} $m$ if $\mu'(w) \>_w \mu(w)$, where $\mu \coloneqq \DA(\>)$, $\>' \coloneqq \{\>_{-m}, \>'_m\}$, and $\mu' \coloneqq \DA(\>')$. We will often refer to $(m,w)$ as the \emph{manipulating pair} (not to be confused with a blocking pair).

Throughout this paper, any manipulation will be assumed to be \emph{optimal} unless stated otherwise. That is, there exists no other list $\succ''_m$ for the accomplice $m$ such that $\mu''(w) \>_w \mu'(w)$, where $\>'' \coloneqq \{\>_{-m}, \>''_m\}$, and $\mu'' \coloneqq \DA(\>'')$.

\paragraph{No-regret and with-regret manipulation}
We say that the accomplice $m$ incurs \textit{regret} if his match worsens upon misreporting, i.e., $\mu(m) \>_m \mu'(m)$. It is known that the \DA{} algorithm is strategyproof for the proposing side~\citep{DF81machiavelli}, which means that no man can improve his match by unilaterally misreporting his preferences. Therefore, for any man $m \in M$ and for any misreport $\>'_m$, we have that $\mu(m) \succeq_m \mu'(m)$. Thus, equivalently, we say that man $m$ incurs regret if $\mu(m) \neq \mu'(m)$.

We will consider two models of accomplice manipulation in this paper: \emph{no-regret manipulation} wherein only those misreports $\>'_m$ are allowed under which $\mu(m) = \mu'(m)$, and \emph{with-regret manipulation} where the accomplice is allowed (but not required) to incur regret. Thus, any no-regret strategy is also a with-regret strategy. Recall that the misreport in \Cref{eg:Self_vs_Accomplice_Intro} was a no-regret manipulation.

\paragraph{Stability relaxations} For any preference profile $\>$ and a fixed man $m \in M$, we say that a matching $\mu$ is \emph{$m$-stable}~\citep{BH19partners} with respect to $\>$ if any blocking pair (if one exists) in $\mu$ involves the man $m$. That is, for any pair $(m',w')$ that blocks $\mu$ under $\>$, we have $m' = m$. Clearly, a stable matching is also $m$-stable. Under accomplice manipulation, it can be shown that any matching $\mu'$ that is stable with respect to the manipulated profile (in particular, when $\mu' = \DA(\>')$) is $m$-stable with respect to the true profile~$\>$ (\Cref{prop:mStabilityExtension}). We note that \Cref{prop:mStabilityExtension} strengthens a result of \citet{BH19partners} who proved a similar statement only for a \DA{} matching. The proof of this result, along with all other omitted proofs, can be found in the appendix.

\begin{restatable}{proposition}{mStabilityExtension}
Let $\>$ denote the true preference profile. For any man $m$, let ${\>' \coloneqq \{\>_{-m},\>'_m\}}$, and let $\mu' \in S_{\>'}$ be any matching that is stable with respect to $\>'$. Then, $\mu'$ is $m$-stable with respect to $\>$.
\label{prop:mStabilityExtension}
\end{restatable}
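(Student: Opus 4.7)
The plan is a direct contradiction argument exploiting the fact that only $m$'s preference list differs between $\>$ and $\>'$. Specifically, I would start by supposing towards a contradiction that there exists a pair $(m^*, w^*)$ that blocks $\mu'$ under $\>$ with $m^* \neq m$, and then show that $(m^*, w^*)$ must also block $\mu'$ under $\>'$, contradicting $\mu' \in S_{\>'}$.

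The key observation is that the transition from $\>$ to $\>' = \{\>_{-m}, \>'_m\}$ alters only the preference list of man $m$; the lists of every other man and every woman are identical under $\>$ and $\>'$. Therefore, since $m^* \neq m$, we have $\>_{m^*} = \>'_{m^*}$, and since $w^*$ is a woman, $\>_{w^*} = \>'_{w^*}$. By the blocking assumption under $\>$, we have $w^* \>_{m^*} \mu'(m^*)$ and $m^* \>_{w^*} \mu'(w^*)$. Rewriting these two inequalities using the equalities of preference lists, we get $w^* \>'_{m^*} \mu'(m^*)$ and $m^* \>'_{w^*} \mu'(w^*)$, so $(m^*, w^*)$ blocks $\mu'$ with respect to $\>'$, contradicting $\mu' \in S_{\>'}$.

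I do not expect any serious obstacle here: the proposition is essentially a syntactic observation about which preference comparisons survive the change of profile, and the argument is a one-paragraph contradiction. The only thing to double-check is that the definition of $m$-stability allows the blocking pair (if any) to be of the form $(m, w)$ for any woman $w$, which matches the statement in the excerpt.
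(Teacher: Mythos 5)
Your argument is correct and is essentially identical to the paper's own proof: both suppose a blocking pair $(m^*,w^*)$ with $m^*\neq m$ under $\>$, note that the preference lists of $m^*$ and $w^*$ are unchanged between $\>$ and $\>'$, and conclude that the same pair blocks $\mu'$ under $\>'$, contradicting $\mu'\in S_{\>'}$. No gaps; nothing further to add.
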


\subsection{Structural Observations}
\label{subsec:Structural_Observations}

\paragraph{Push up/push down operations}
Note that given a profile $\>$, the preference list of any man $m$ can be written as ${ \>_m = (\>_m^L, \mu(m), \>_m^R) }$, where $\mu = \DA(\>)$ and $\>_m^L$ (respectively, $\>_m^R$) is the set of women that $m$ prefers to (respectively, finds less preferable than) $\mu(m)$. Interestingly, the \DA{} outcome does not change even if each man $m$ arbitrarily permutes the parts of his list above and below his \DA{}-partner $\mu(m)$. This result, due to \citet{H06cheating}, is recalled below.

\begin{restatable}[\protect\citealp{H06cheating}]{proposition}{PermutingFalsifiedLists}
Let $\>$ be a preference profile and let $\mu \coloneqq \DA(\>)$. For any man $m \in M$, let ${ \>'_m \coloneqq (\pi^L(\>^L_m),\mu(m),\pi^R(\>^R_m)) }$, where $\pi^L$ and $\pi^R$ are arbitrary permutations of $\>^L_m$ and $\>^R_m$, respectively. Let $\>' \coloneqq \{\>_{-m},\>'_m\}$, and let $\mu' \coloneqq \DA(\>')$. Then, $\mu' = \mu$.
\label{prop:Permuting_Falsified_Lists}
\end{restatable}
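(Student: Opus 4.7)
The plan is to prove that $\mu \coloneqq \DA(\>)$ remains the men-optimal stable matching under the modified profile $\>'$, whence by \Cref{prop:DA_stable_MenOptimal_WomenPessimal} applied to $\>'$ it must coincide with $\mu' \coloneqq \DA(\>')$. The whole argument is purely structural: it never tracks the actual execution of the \DA{} algorithm, and rests on the invariance that in both $\>_m$ and $\>'_m$ the woman $\mu(m)$ partitions the list into the same two sets $\>_m^L$ and $\>_m^R$.

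First I would verify that $\mu$ is stable with respect to $\>'$. Since $\>_{-m}$ is unchanged, any putative blocking pair of $\mu$ under $\>'$ must involve $m$; suppose $(m,w)$ is such a pair. Then $w \>'_m \mu(m)$, which by the construction of $\>'_m$ places $w$ in $\>_m^L$ (the sets $\>_m^L$ and $\pi^L(\>_m^L)$ agree, since $\pi^L$ is merely a permutation). Hence $w \>_m \mu(m)$, and combined with the unchanged $m \>_w \mu(w)$, this would make $(m,w)$ block $\mu$ under $\>$, a contradiction. Running the same argument in reverse also shows that any matching stable under $\>'$ that assigns $m$ to $\mu(m)$ is stable under $\>$; I will need this fact at the end.

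Next I would pin down $m$'s match and show $\mu'(m) = \mu(m)$. Men-optimality of $\mu'$ among $S_{\>'}$, together with $\mu \in S_{\>'}$, gives $\mu'(m) \succeq'_m \mu(m)$. On the other hand, strategyproofness of the men-proposing \DA{} algorithm, applied with $m$'s true preference $\>_m$ and his report $\>'_m$, gives $\mu(m) \succeq_m \mu'(m)$. Since the set of women strictly below $\mu(m)$ coincides in $\>_m$ and $\>'_m$ (both equal $\>_m^R$), these two inequalities can hold simultaneously only if $\mu'(m) = \mu(m)$. With this in hand, the remark at the end of the previous paragraph tells us that $\mu'$ is also stable under $\>$, so men-optimality of $\mu$ under $\>$ yields $\mu \succeq_M \mu'$, while men-optimality of $\mu'$ under $\>'$ gives $\mu'(m'') \succeq'_{m''} \mu(m'')$ for every $m''$---which for $m'' \neq m$ coincides with $\succeq_{m''}$ and for $m'' = m$ is already an equality. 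Combining these bounds forces $\mu = \mu'$.

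The main obstacle, as I see it, is the middle step of equating $\mu'(m)$ with $\mu(m)$: it requires threading together a strategyproofness statement phrased in $\>_m$ with a men-optimality statement phrased in $\>'_m$, and this is precisely where the structural invariance---that $\mu(m)$ sits at the same dividing line between $\>^L_m$ and $\>^R_m$ before and after the permutation---gets invoked crucially.
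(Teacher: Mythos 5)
Your proof is correct. Note that the paper does not actually prove this proposition: it is imported from \citet{H06cheating} as a black box, so there is no in-paper argument to compare against. Your derivation is self-contained and uses only facts the paper already states in its preliminaries---men-optimality of the \DA{} outcome (\Cref{prop:DA_stable_MenOptimal_WomenPessimal}) and strategyproofness for the proposing side---which makes it a nice structural alternative to Huang's original execution-based analysis of the proposal sequence. All three steps check out: $\mu \in S_{\>'}$ because the set $\>^L_m$ of women above $\mu(m)$ is permutation-invariant; the squeeze $\mu'(m) \succeq'_m \mu(m)$ (men-optimality of $\mu'$ over $S_{\>'} \ni \mu$) against $\mu(m) \succeq_m \mu'(m)$ (strategyproofness) forces $\mu'(m) = \mu(m)$ precisely because the ``below'' sets agree; and the final double application of men-optimality, once under $\>$ and once under $\>'$, pins down $\mu'(m'') = \mu(m'')$ for every other man $m''$ since his order is unchanged. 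The only cosmetic caveat is that your reverse stability transfer needs $\nu(m) = \mu(m)$, which you correctly establish before invoking it.
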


\Cref{prop:Permuting_Falsified_Lists} considerably simplifies the structure of accomplice manipulations that we need to consider. Indeed, we can assume that any manipulated list $\>'_m$ is such that the relative ordering of agents in the parts above and below $\mu'(m)$ is the same as under the true list $\>_m$, where $\mu' \coloneqq \DA(\>')$ and $\>' \coloneqq \{\>_{-m},\>'_m\}$ are the post-manipulation \DA{} outcome and preference profile, respectively.

This observation implies that, without loss of generality, any manipulated list $\>'_m$ can be obtained from the true list $\>_m$ by only \emph{push up} and \emph{push down} operations, wherein a set of women is pushed up above the true match $\mu(m)$, and another disjoint set is pushed below $\mu(m)$. Importantly, no permutation or shuffling operation is required as part of the manipulation. Formally, starting with the true list ${\>_m = (\>_m^L, \mu(m), \>_m^R)}$, we say that man $m$ performs a \emph{push up} operation for a set $X \subseteq W$ if the new list is ${\>^{X\uparrow}_m \coloneqq (\>_m^L \cup X, \mu(m), \>_m^R \setminus X)}$. Likewise, a \emph{push down} operation of a set $Y \subseteq W$ results in ${\>^{Y\downarrow}_m \coloneqq (\>_m^L \setminus Y, \mu(m), \>_m^R \cup Y)}$.

For manipulation via push down operations only, \citet{H06cheating} has shown that the resulting matching is weakly improving for \emph{all} men. Together with the fact that the \DA{} algorithm is strategyproof for the proposing side (i.e., the men)~\citep{DF81machiavelli}, we get that the \DA{} partner of the accomplice remains unchanged after a push down operation.

\begin{restatable}[\protect\citealp{DF81machiavelli,H06cheating}]{proposition}{PushDown}
Let $\>$ be the true preference profile and let $\mu \coloneqq \DA(\>)$. For any subset of women $X \subseteq W$ and any fixed accomplice $m \in M$, let ${ \>' \coloneqq \{\>_{-m},\>_m^{X \downarrow}\} }$ and $\mu' \coloneqq \DA(\>')$. Then, $\mu' \succeq_M \mu$ and $\mu'(m) = \mu(m)$.
\label{prop:PushDown}
\end{restatable}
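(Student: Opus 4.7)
The plan is to derive both conclusions by combining two existing results. For the first claim $\mu' \succeq_M \mu$, I would appeal directly to Huang's push-down lemma~\citep{H06cheating}, stated in the paragraph immediately preceding the proposition: since $\>_m^{X\downarrow}$ is obtained from $\>_m$ by demoting every woman in $X$ to some rank below $\mu(m)$ while leaving all other relative rankings intact, the \DA{} outcome on $\>'$ weakly Pareto-improves every man relative to $\mu$. For the second claim $\mu'(m) = \mu(m)$, I would combine this with strategyproofness of \DA{} on the proposing side~\citep{DF81machiavelli}. The first claim specialised to man $m$ gives $\mu'(m) \succeq_m \mu(m)$, while strategyproofness gives the reverse weak inequality $\mu(m) \succeq_m \mu'(m)$; in the strict total order $\>_m$ these two one-sided inequalities together force equality.

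If one wanted to derive Huang's push-down lemma from scratch rather than cite it, I would set up a coupling between the two \DA{} executions, on $\>$ and on $\>'$, and argue by induction on the round number that the set of (man, woman) rejection pairs produced so far in the execution on $\>'$ is contained in the corresponding set from the execution on $\>$. The key observation is that in the $\>'$-execution, man $m$ proposes first along the prefix $\>_m^L \setminus X$, postponing any proposal to women in $X$ until after $\mu(m)$; consequently, every woman receives at most the $m$-proposals she received originally, and hence has no new reason to drop a tentative partner she was already happy to hold. Propagating this observation through the rounds yields that no man is rejected in the $\>'$-execution any earlier than in the $\>$-execution, which is enough to conclude $\mu' \succeq_M \mu$.

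The main obstacle in a fully self-contained proof is the cascading nature of rejections: when $m$ withholds a proposal to some $w \in X$, woman $w$ may continue to hold a proposal from another man $m''$ that she would previously have dropped in favour of $m$, so $m''$ in turn does not cascade to the next woman in his list, and so on, interleaving with similar effects triggered elsewhere in the market. A naive step-by-step induction desynchronises across the two executions quickly; the standard remedy is to phrase the inductive invariant as a set-inclusion on rejections rather than on individual proposals, a formulation that is preserved round by round even under these cascades. Since the proposition is explicitly attributed to Huang, however, I would favour the short route of simply citing~\citet{H06cheating} for the first part and~\citet{DF81machiavelli} for the second, and sketching the coupling argument above only if a fully standalone exposition is wanted.
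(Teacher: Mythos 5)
Your proposal matches the paper exactly: the paper offers no standalone proof of this proposition, instead attributing the claim $\mu' \succeq_M \mu$ to \citet{H06cheating} and deriving $\mu'(m) = \mu(m)$ by combining it with strategyproofness for the proposing side~\citep{DF81machiavelli}, precisely the two-sided-inequality argument you give. Your additional sketch of a coupling proof of Huang's lemma is a reasonable bonus but is not needed, as you correctly note.
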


The effect of push down operations for the proposed-to side is the exact opposite, as the resulting matching makes all women weakly worse off.

\begin{restatable}{lemma}{PushDownWorseForWomen}
Let $\>$ be the true preference profile and let $\mu \coloneqq \DA(\>)$. For any subset of women $X \subseteq W$, let ${ \>' \coloneqq \{\>_{-m},\>_m^{X \downarrow}\} }$ and $\mu' \coloneqq \DA(\>')$. Then, $\mu \succeq_W \mu'$.
\label{lem:PushDown_Worse_For_Women}
\end{restatable}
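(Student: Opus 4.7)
The plan is to reduce the claim to the women-pessimality of the men-optimal stable matching (\Cref{prop:DA_stable_MenOptimal_WomenPessimal}). Concretely, I will argue that the original \DA{} matching $\mu$ remains stable under the manipulated profile $\>'$. Once this is established, since $\mu' = \DA(\>')$ is women-pessimal among all matchings in $S_{\>'}$, and since women's preferences are identical in $\>$ and $\>'$ (only $\>_m$ changes), the relation $\mu \succeq_W \mu'$ follows immediately.

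The main step, then, is to verify that $\mu$ has no blocking pair with respect to $\>'$. Suppose toward contradiction that some pair $(m'', w'')$ blocks $\mu$ under $\>'$. I would split into two cases. If $m'' \neq m$, then $\>'_{m''} = \>_{m''}$ and $\>'_{w''} = \>_{w''}$, so the pair also blocks $\mu$ under the true profile $\>$, contradicting stability of $\mu = \DA(\>)$. If $m'' = m$, then $w'' \succ'_m \mu(m)$, so $w''$ lies above $\mu(m)$ in $\>_m^{X\downarrow} = (\>_m^L \setminus X,\ \mu(m),\ \>_m^R \cup X)$, which means $w'' \in \>_m^L \setminus X \subseteq \>_m^L$. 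Hence $w'' \succ_m \mu(m)$ in the true list as well, and combined with $m \succ_{w''} \mu(w'')$ (women's preferences are unchanged), we again obtain a blocking pair for $\mu$ under $\>$, contradicting $\mu \in S_{\>}$. Notice that the push-down nature of the operation is essential here: women only move down in $\>'_m$, so the set of women whom $m$ prefers to $\mu(m)$ can only shrink, which is exactly what prevents any new blocking pair involving $m$.

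Having established $\mu \in S_{\>'}$, I would then invoke \Cref{prop:DA_stable_MenOptimal_WomenPessimal} applied to the profile $\>'$: every stable matching under $\>'$ is weakly preferred by every woman (under $\>'_W = \>_W$) to $\mu' = \DA(\>')$. Applying this to $\nu = \mu$ yields $\mu \succeq_W \mu'$, completing the argument.

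The only delicate point is the case analysis for the potential blocking pair involving the accomplice $m$; everything else is a direct appeal to established properties of the \DA{} algorithm. I expect the push-down structure (as opposed to an arbitrary misreport) to be precisely what makes the lemma go through, which is also why the analogous statement would fail for general manipulations.
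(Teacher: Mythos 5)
Your proof is correct, but it takes a different route from the paper's. The paper argues by contradiction at the level of a single woman: if some $w'$ strictly improved, her new partner $m' = \mu'(w')$ would have to rank $\mu(m')$ above $w'$ in his true list (else $(m',w')$ would block $\mu$ under $\succ$), so $m'$ strictly worsens, contradicting \Cref{prop:PushDown}'s guarantee that a push down weakly improves every man. You instead prove the stronger structural fact that $\mu$ itself remains stable under the manipulated profile $\succ'$ --- your case analysis for a would-be blocking pair involving $m$ is sound, since the women above $\mu(m)$ in $\succ_m^{X\downarrow}$ form a subset of $\succ_m^L$ --- and then conclude via the women-pessimality of $\DA(\succ')$ from \Cref{prop:DA_stable_MenOptimal_WomenPessimal}. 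The paper's argument is shorter but leans on the nontrivial imported result \Cref{prop:PushDown}; yours is self-contained modulo the classical Gale--Shapley/McVitie--Wilson facts, and the intermediate claim $\mu \in S_{\succ'}$ is a pleasant counterpart to \Cref{thm:No-regret_StableLatticeContainment}'s containment $S_{\succ'} \subseteq S_{\succ}$ for no-regret push ups. Both approaches are valid; yours establishes slightly more along the way.
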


\Cref{lem:PushDown_Worse_For_Women} shows that in order to improve the partner of the woman~$w$, the use of \emph{push up} operations (by the accomplice) is necessary. However, it is not obvious upfront whether push up alone suffices; indeed, it is possible that the optimal strategy involves some combination of push up and push down operations. To this end, our theoretical results will show that, somewhat surprisingly, pushing up \emph{at most one} woman achieves the desired optimal manipulation (\Cref{thm:Inconspicuous_NoRegret,thm:Inconspicuous_WithRegret}). This strategy is known in the literature as \emph{inconspicuous manipulation}, which we define next.

\paragraph{Inconspicuous manipulation}
Given a profile $\>$ of true preferences and any fixed accomplice $m$, the manipulated list $\>'_m$ is said to be an \emph{inconspicuous manipulation} if the list $\>'_m$ can be derived from the true preference list $\>_m$ by promoting exactly one woman and making no other changes. The notion of inconspicuous manipulation has been previously studied in the context of self manipulation (where $w$ misreports herself), where it was shown that an optimal self manipulation is, without loss of generality, inconspicuous~\citep{VG17manipulating,DST18coalitional}.

\section{Theoretical Results}

\subsection{No-Regret Accomplice Manipulation} \label{sec:no-regret}

Let us start by observing that the \DA{} matching after an \emph{arbitrary} (optimal) no-regret accomplice manipulation may not be stable with respect to the true preferences.

\begin{example}
Consider the following preference profile where the \DA{} outcome is underlined.

\begin{table}[h]
%\footnotesize
%\vspace{-0.1in}
    \centering
    %\begin{tabular*}{0.5\linewidth}{@{\extracolsep{\fill}}c c c c c c c c c c c c c c c}
    \begin{tabularx}{0.9\linewidth}{XXXXXXXXXXXXXXX}
            $m_1\colon$ & $w_2^*$ & $\underline{w_1^\dagger}$ & $w_3$ & $w_4$ & $w_5$ && $w_1\colon$ & $\underline{m_1^\dagger}$ & $m_3$ & $m_2^*$ & $m_4$ & $m_5$\\
            $m_2\colon$ & $w_1^*$ & $\underline{w_2^\dagger}$ & $w_3$ & $w_4$ & $w_5$ && $w_2\colon$ & $\underline{m_2^\dagger}$ & $m_1^*$ & $m_3$ & $m_4$ & $m_5$\\
            $\boldsymbol{\textcolor{blue}{m_3}}\colon$ & $w_1$ & $\underline{w_3^{*,\dagger}}$ & $w_4$ & $w_2$ & $w_5$ && $w_3\colon$ & $\underline{m_3^{*,\dagger}}$ & $m_1$ & $m_2$ & $m_4$ & $m_5$\\
            $m_4\colon$ & $\underline{w_4}$ & $w_5^{*,\dagger}$ & $w_1$ & $w_2$ & $w_3$ && $\boldsymbol{\textcolor{blue}{w_4}}\colon$ & $m_5^{*,\dagger}$ & $m_3$ & $m_1$ & $m_2$ & $\underline{m_4}$\\
            $m_5\colon$ & $\underline{w_5}$ & $w_4^{*,\dagger}$ & $w_1$ & $w_2$ & $w_3$ && $w_5\colon$ & $m_4^{*,\dagger}$ & $m_1$ & $m_2$ & $m_3$ & $\underline{m_5}$
        \end{tabularx}
        %\end{tabular*}
%\vspace{-0.1in}
\end{table}

Suppose the manipulating pair is $(m_3, w_4)$. The \DA{} matches after the accomplice $m_3$ submits the manipulated list $\succ'_{m_3} \coloneqq w_4 \> w_3 \> w_1 \> w_2 \> w_5$ are marked by $*$. The manipulation results in $w_4$ to be matched with her top choice $m_5$ (i.e., $\>'_{m_3}$ is an optimal manipulation), an improvement over her true match $m_4$. Although $m_3$ does not incur regret, the manipulated matching admits a blocking pair $(m_3, w_1)$ with respect to the true preferences.
\label{eg:unstable-no-regret-manipulation}
\end{example}

Notice that if $m_3$ were to instead submit $\succ''_{m_3} \coloneqq w_4 \> w_1 \> w_3 \> w_2  \> w_5$ as his preference list in \Cref{eg:unstable-no-regret-manipulation}, then the resulting \DA{} matching (indicated by $\dagger$) would be stable with respect to the \emph{true} preferences while still allowing $w_4$ to match with $m_5$ (i.e., $\>''_{m_3}$ is also optimal). The manipulated list $\succ''_{m_3}$ is derived from the true list $\succ_{m_3}$ through a no-regret push up operation. Our first main result of this section (\Cref{thm:No-regret_StableLatticeContainment}) shows that this is not a coincidence: The set of all stable matchings with respect to a profile after a no-regret \emph{push up} operation is always contained within the stable set of the true preference profile.

\begin{restatable}[\textbf{No-regret push up is stability preserving}]{theorem}{StableLatticeContainment}
Let $\succ$ be a preference profile, and let $\mu \coloneqq \DA(\>)$. For any subset of women $X \subseteq W$ and any man $m$, let ${ \succ' \coloneqq \{ \succ_{-m}, \succ_m^{X\uparrow} \} }$, and $\mu' \coloneqq \DA(\>')$. If $m$ does not incur regret, then $S_{\succ'} \subseteq S_{\succ}$.
\label{thm:No-regret_StableLatticeContainment}
\end{restatable}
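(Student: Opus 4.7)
The plan is to argue by contradiction: pick any $\nu \in S_{\>'}$ and assume $\nu$ admits a blocking pair $(m^*, w^*)$ with respect to $\>$; I will show the same pair must also block $\nu$ under $\>'$, contradicting stability. Since only $m$'s list has changed (women's preferences are identical in $\>$ and $\>'$), the condition $m^* \>_{w^*} \nu(w^*)$ automatically transfers to $\>'$. So the work is really to show $w^* \>'_m \nu(m)$ whenever the relevant side of the blocking condition holds under $\>_m$.

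\textbf{Easy case ($m^* \neq m$).} If the blocking man is not the accomplice, then $\>_{m^*} = \>'_{m^*}$, so $w^* \>'_{m^*} \nu(m^*)$ directly, and $(m^*, w^*)$ blocks $\nu$ under $\>'$ — contradiction.

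\textbf{Main case ($m^* = m$).} Here I exploit the no-regret hypothesis $\mu'(m) = \mu(m)$ together with men-optimality of $\mu' = \DA(\>')$ among $S_{\>'}$ (\Cref{prop:DA_stable_MenOptimal_WomenPessimal}): it gives $\mu'(m) \succeq'_m \nu(m)$, i.e.\ $\mu(m) \succeq'_m \nu(m)$. If $\nu(m) = \mu(m)$, then $w^* \>_m \mu(m)$ puts $w^* \in \>_m^L \subseteq \>_m^L \cup X$, and hence $w^* \>'_m \mu(m) = \nu(m)$. Otherwise $\mu(m) \>'_m \nu(m)$, so $\nu(m)$ must sit strictly below $\mu(m)$ in $\>'_m$, which by the definition of $\>_m^{X\uparrow}$ forces $\nu(m) \in \>_m^R \setminus X$. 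I then split on where $w^*$ lives in $\>_m$: if $w^* \in \>_m^L$ or $w^* \in X$, it ends up above $\mu(m)$ in $\>'_m$, so again $w^* \>'_m \nu(m)$; if $w^* \in \>_m^R \setminus X$ with $w^* \>_m \nu(m)$, then both $w^*$ and $\nu(m)$ lie in the bottom block $\>_m^R \setminus X$ of $\>'_m$, whose internal order (by \Cref{prop:Permuting_Falsified_Lists}, taken in its WLOG form) agrees with $\>_m$, giving $w^* \>'_m \nu(m)$. In every subcase we conclude $w^* \>'_m \nu(m)$, so $(m, w^*)$ blocks $\nu$ under $\>'$.

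\textbf{Where the difficulty sits.} The $m^* \neq m$ case is immediate; the only real content is the $m^* = m$ case, and the single pivotal ingredient is the no-regret assumption. Without $\mu'(m) = \mu(m)$, one cannot anchor $\nu(m)$ on the ``bottom'' side of $\mu(m)$ in $\>'_m$, and the case analysis on the position of $w^*$ relative to $X$, $\>_m^L$, $\>_m^R \setminus X$ breaks down — indeed \Cref{eg:unstable-no-regret-manipulation} shows that without this anchor stability really can fail. So the main obstacle is organizing the block-structure bookkeeping in $\>'_m$ versus $\>_m$ cleanly enough that the no-regret anchor does all the heavy lifting; I expect this to reduce to the four-way split above.
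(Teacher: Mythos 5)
Your proof is correct and follows essentially the same route as the paper's: reduce to a blocking pair involving the accomplice (the paper invokes \Cref{prop:mStabilityExtension} where you argue it inline), use men-optimality of $\mu'$ plus no-regret to anchor $\nu(m)$ weakly below $\mu(m)$ in $\>'_m$, and then transfer $w^* \>_m \nu(m)$ to $\>'_m$ via the block structure of $\>_m^{X\uparrow}$ (the paper states this step contrapositively, you as a case split — same content). The only cosmetic omission is the subcase $w^* = \mu(m)$, which is immediate since you have already shown $\mu(m) \>'_m \nu(m)$ there.
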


A primary consequence of \Cref{thm:No-regret_StableLatticeContainment} is that the \DA{} matching after a no-regret accomplice manipulation is weakly preferred over the true \DA{} outcome by all women, while the opposite is true for the men.

\begin{restatable}{corollary}{PushUp}
Let $\>$ be a preference profile and let ${\mu \coloneqq \DA(\>)}$. For any man $m$, let $\>' \coloneqq \{ \succ_{-m}, \succ_m^{X\uparrow} \}$ and $\mu' \coloneqq \DA(\>')$. If $m$ does not incur regret, then $\mu' \succeq_W \mu$ and $\mu \succeq_M \mu'$.
\label{cor:PushUp}
\end{restatable}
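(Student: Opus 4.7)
The plan is to derive this corollary as an immediate consequence of \Cref{thm:No-regret_StableLatticeContainment} combined with \Cref{prop:DA_stable_MenOptimal_WomenPessimal}. Essentially, the theorem says that any post-manipulation stable matching (under a no-regret push up) is also stable with respect to the true profile $\>$, so the manipulated \DA{} outcome $\mu'$ lands inside the original stable lattice $S_{\>}$; the men-optimality/women-pessimality of $\mu$ within $S_{\>}$ then yields both inequalities.

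Concretely, I would proceed in three short steps. First, by \Cref{prop:DA_stable_MenOptimal_WomenPessimal} applied to $\>'$, the matching $\mu' = \DA(\>')$ is stable with respect to $\>'$, i.e., $\mu' \in S_{\>'}$. Second, since $m$ incurs no regret by assumption, \Cref{thm:No-regret_StableLatticeContainment} gives $S_{\>'} \subseteq S_{\>}$, and hence $\mu' \in S_{\>}$. Third, applying \Cref{prop:DA_stable_MenOptimal_WomenPessimal} to the true profile $\>$ with the men-optimal/women-pessimal matching $\mu = \DA(\>)$, and taking $\mu'$ as the other element of $S_{\>}$, we get $\mu \succeq_M \mu'$ and $\mu' \succeq_W \mu$, which is exactly the claim.

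There is no real obstacle here: once \Cref{thm:No-regret_StableLatticeContainment} is in hand, this corollary is essentially a one-line deduction using the classical lattice structure of stable matchings. The only thing to be careful about is to invoke the right direction of \Cref{prop:DA_stable_MenOptimal_WomenPessimal} (men-optimal with respect to $\>$, not $\>'$) so that the inequalities are measured against the true preferences of the women and the true preferences of the men, matching the statement of the corollary.
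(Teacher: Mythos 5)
Your proof is correct and follows exactly the same route as the paper: use \Cref{thm:No-regret_StableLatticeContainment} to place $\mu'$ inside $S_{\succ}$, then invoke the men-optimality/women-pessimality of $\mu = \DA(\succ)$ from \Cref{prop:DA_stable_MenOptimal_WomenPessimal}. The only difference is that you spell out the intermediate step $\mu' \in S_{\succ'}$, which the paper leaves implicit.
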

\begin{proof}
Since $m$ does not incur regret, it follows from \Cref{thm:No-regret_StableLatticeContainment} that $\mu' \in S_{\succ}$. Then, from \Cref{prop:DA_stable_MenOptimal_WomenPessimal}, we have that $\mu' \succeq_W \mu$ and $\mu \succeq_M \mu'$.
\end{proof}

As observed in \Cref{subsec:Structural_Observations}, any manipulation by the accomplice can be, without loss of generality, assumed to comprise only of push up and push down operations. We will now show that combining these operations is not necessary. That is, any manipulation that is achieved by a combination of push up and push down operations can be weakly improved by a push up operation alone (\Cref{lem:CombiningPushUpPushDown}). We note that this result does not require the no-regret assumption, and therefore applies to the with-regret setting as well.

\begin{restatable}{lemma}{CombiningPushUpPushDown}
Let $(m, w)$ be a manipulating pair and let $\>$ be a preference profile. For any subsets of women $X \subseteq W$ and $Y \subseteq W$, let $\>' \coloneqq \{ \succ_{-m}, \succ_m^{X\uparrow} \}$ denote the preference profile after pushing up the set $X$ and $\>'' \coloneqq \{ \succ_{-m}, \succ_m^{X\uparrow, Y\downarrow} \}$ denote the profile after pushing up $X$ and pushing down $Y$ in the true preference list $\>_m$ of man $m$. Let $\mu \coloneqq \DA(\>)$, $\mu' \coloneqq \DA(\>')$, and $\mu'' \coloneqq \DA(\>'')$. If $\mu''(w) \succ_w \mu(w)$, then $\mu'(w) \succeq_w \mu''(w)$.
\label{lem:CombiningPushUpPushDown}
\end{restatable}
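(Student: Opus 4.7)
My plan is to prove the stronger pointwise claim $\mu' \succeq_W \mu''$, from which $\mu'(w) \succeq_w \mu''(w)$ follows immediately. On this reading the hypothesis $\mu''(w) \succ_w \mu(w)$ plays only a contextual role---it certifies that the combined push-up-plus-push-down manipulation is in fact a strict improvement for $w$ over truthful reporting---while the underlying comparison between $\mu'$ and $\mu''$ does not actually require it.

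First I would unpack how $\>'$ and $\>''$ differ. They coincide on every agent's list except $m$'s, and by definition
\[
\>'_m = (\>_m^L \cup X,\; \mu(m),\; \>_m^R \setminus X), \qquad
\>''_m = \bigl((\>_m^L \cup X)\setminus Y,\; \mu(m),\; (\>_m^R \setminus X) \cup Y\bigr).
\]
So passing from $\>'_m$ to $\>''_m$ only relocates each $y \in Y \cap (\>_m^L \cup X)$ from above $\mu(m)$ to below $\mu(m)$, while leaving the relative ranking of every pair of women outside $Y$ intact. In other words, $\>''_m$ is a one-sided weak demotion of the set $Y$ in $m$'s list relative to $\>'_m$.

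The main step is to invoke the monotonicity of the men-proposing \DA{} algorithm under such a demotion: fixing every other agent's list, weakly demoting a set of women in a single man's list can only worsen the \DA{} partner of every woman. This is essentially the content of \Cref{lem:PushDown_Worse_For_Women}. Although that lemma is phrased with the demotion applied to the true profile $\>$ and centered on $\DA(\>)(m)$, its proof is a direct \DA{}-execution comparison that uses only the mechanical fact that demoting a woman in $m$'s list can only delay (never expedite) $m$'s proposal to her---nothing in it exploits $\>$ being the ``true'' profile. Applying the same reasoning with $\>'$ as the baseline and $\>''$ as the perturbed profile yields $\mu' \succeq_W \mu''$, and specializing to $w$ concludes the lemma.

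The main obstacle is making that last step watertight in the with-regret regime, since there the accomplice's matches $\mu(m)$, $\mu'(m)$, and $\mu''(m)$ need not coincide, so one cannot literally quote \Cref{lem:PushDown_Worse_For_Women}. I would handle this in one of two clean ways: (i) re-derive the conclusion of \Cref{lem:PushDown_Worse_For_Women} with $\>'$ in place of $\>$, repeating its \DA{}-simulation argument essentially verbatim; or (ii) decompose the transition $\>'_m \to \>''_m$ into a sequence of single adjacent-swap demotions of women in $Y$ and apply a standard single-swap monotonicity lemma for the men-proposing \DA{} inductively. Either route produces $\mu' \succeq_W \mu''$, and hence $\mu'(w) \succeq_w \mu''(w)$.
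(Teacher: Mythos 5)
Your reduction of the lemma to ``$\>''_m$ is a demotion of $Y$ relative to $\>'_m$, hence $\mu' \succeq_W \mu''$'' is the right instinct and is indeed how the paper argues, but the monotonicity principle you invoke to close the with-regret case is false as stated. It is \emph{not} true that weakly demoting a set of women in one man's list can only worsen every woman's \DA{} partner, and there is no ``standard single-swap monotonicity lemma'' of this kind. Counterexample: take $\>_{m_1} = w_1 \> w_2$, $\>_{m_2} = w_2 \> w_1$, $\>_{w_1} = m_2 \> m_1$, $\>_{w_2} = m_1 \> m_2$; the \DA{} outcome pairs $m_1$ with $w_1$ and $m_2$ with $w_2$. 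If $m_1$ demotes $w_1$ by one position (new list $w_2 \> w_1$), the new \DA{} outcome pairs $m_1$ with $w_2$ and $m_2$ with $w_1$, and \emph{both} women strictly improve. So your route (ii) fails outright, and route (i) is not a ``verbatim'' re-derivation: the proof of \Cref{lem:PushDown_Worse_For_Women} is not a direct \DA{}-execution comparison but an appeal to \Cref{prop:PushDown}, which requires the demoted set to cross the man's \DA{} partner \emph{in the baseline profile}. Here the pivot of the demotion is $\mu(m) = \DA(\>)(m)$, not $\mu'(m) = \DA(\>')(m)$, and in the with-regret regime these differ --- which is exactly the obstacle you flag but do not resolve.

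The missing step is a case analysis on where $\mu'(m)$ sits, which is what the paper's proof supplies. If $\mu'(m) = \mu(m)$, the transition $\>'_m \to \>''_m$ is a genuine push-down relative to $\mu'(m)$, and \Cref{lem:PushDown_Worse_For_Women} applies with $\>'$ as the baseline. If $\mu'(m) \neq \mu(m)$, strategyproofness leaves two possibilities: either $\mu'(m) \in X$, in which case $Y$ lies entirely below $\mu'(m)$ in both $\>'_m$ and $\>''_m$, so the transition is a permutation of the segment below $\mu'(m)$ and \Cref{prop:Permuting_Falsified_Lists} gives $\mu'' = \mu'$ exactly; or $\mu(m) \>_m \mu'(m)$, in which case, after using \Cref{prop:Permuting_Falsified_Lists} to place $\mu'(m)$ immediately below $\mu(m)$, the transition is again a push-down relative to $\mu'(m)$ and \Cref{lem:PushDown_Worse_For_Women} applies. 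Your conclusion $\mu' \succeq_W \mu''$ is correct, but only because in every case the demoted set either stays below $\DA(\>')(m)$ or crosses it from above; without this case analysis the argument has a genuine gap.
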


Having narrowed down the strategy space to push up operations alone, we will now turn our attention to \emph{inconspicuous} manipulations (recall that such a manipulation involves promoting exactly one woman in the accomplice's true preference list to a higher position). We will show that any match for the manipulating woman $w$ that can be obtained by pushing up a \emph{set} of women can also be achieved by promoting \emph{exactly one} woman in that set (\Cref{lem:PushUpOneWoman}). In other words, any no-regret push up operation is, without loss of generality, inconspicuous. We note that although \Cref{lem:PushUpOneWoman} assumes no regret for the accomplice, the corresponding implication actually holds even in the with-regret setting (see \Cref{lem:PushUpOneWomanWithRegret}).

\begin{restatable}{lemma}{PushUpOneWoman}
Let $(m, w)$ be a manipulating pair, and let ${X \subseteq W}$ be an arbitrary set of women that $m$ can push up without incurring regret. Then, the match for $w$ that is obtained by pushing up all women in $X$ can also be obtained by pushing up exactly one woman in $X$.
\label{lem:PushUpOneWoman}
\end{restatable}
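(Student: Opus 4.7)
The plan is to identify a single woman $w^{*} \in X$ whose singleton push-up yields the same match for $w$ as the full push-up of $X$. Let $\succ' \coloneqq \{\succ_{-m}, \succ_m^{X\uparrow}\}$ and $\mu' \coloneqq \DA(\succ')$, so $\mu'(m) = \mu(m)$ by the no-regret hypothesis. For a candidate $w^{*} \in X$, set $\succ'' \coloneqq \{\succ_{-m}, \succ_m^{\{w^{*}\}\uparrow}\}$ and $\mu'' \coloneqq \DA(\succ'')$; the aim is to show $\mu''(w) = \mu'(w)$ for a suitable choice of $w^{*}$.

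A first observation is that $\mu'$ is stable with respect to $\succ''$ for \emph{every} $w^{*} \in X$. Since $\succ''$ and $\succ'$ differ only in man $m$'s list, any potential blocking pair $(m_0, w_0)$ of $\mu'$ under $\succ''$ with $m_0 \neq m$ would contradict $\mu' \in S_{\succ'}$. For $m_0 = m$, the condition $w_0 \succ''_m \mu(m)$ forces either $w_0 = w^{*}$ or $w_0$ lies above $\mu(m)$ in the true list $\succ_m$; in either case $w_0$ lies above $\mu(m)$ in $\succ'_m$ as well, and the stability of $\mu'$ with respect to $\succ'$ gives $\mu'(w_0) \succeq_{w_0} m$, ruling out the block. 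Hence $\mu' \in S_{\succ''}$, and the women-pessimality of $\mu'' = \DA(\succ'')$ in $S_{\succ''}$ (\Cref{prop:DA_stable_MenOptimal_WomenPessimal}) yields $\mu''(w) \preceq_w \mu'(w)$.

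To obtain the reverse inequality $\mu''(w) \succeq_w \mu'(w)$, I would choose $w^{*} \in X$ so that $\mu'' \in S_{\succ'}$; the women-pessimality of $\mu'$ in $S_{\succ'}$ would then force $\mu''(w) \succeq_w \mu'(w)$, and equality follows. Ensuring $\mu'' \in S_{\succ'}$ amounts to verifying that for every $w_0 \in X \setminus \{w^{*}\}$, $\mu''(w_0) \succeq_{w_0} m$ (otherwise $(m, w_0)$ would block $\mu''$ under $\succ'$, since $w_0 \in X$ guarantees $w_0 \succ'_m \mu''(m)$ regardless of whether $m$ incurs regret under $\succ''$, as $m$-strategyproofness places $\mu''(m)$ weakly below $\mu(m)$ in $\succ_m$ and thus weakly below every $X$-member in $\succ'_m$).

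The main obstacle is exhibiting such a $w^{*}$. One strategy is to trace the DA execution under $\succ'$: by no-regret, $m$ is eventually rejected by every $X$-member, and this rejection cascade encodes a chain of improving proposals culminating in $w$ receiving $\mu'(w)$. I would then take $w^{*}$ to be the woman in $X$ whose acceptance of $m$'s proposal initiates that chain, arguing that the remaining $X$-members see matches in $\mu''$ that are no worse than $m$ because their roles in $\mu'$ were already consistent with stability against $m$. An alternative is to induct on $|X|$: for $|X| \ge 2$, show that there must exist a \emph{redundant} $w_i \in X$ whose removal from the push-up preserves both $w$'s match and $m$'s no-regret status, reducing to the trivial base case $|X|=1$. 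The technical crux in either route is to rigorously track how DA's sequence of proposals and rejections under $\succ'$ descends to $\succ''$, and in particular to rule out that suppressing the push-up of $X \setminus \{w^{*}\}$ creates a ``late'' blocking pair between $m$ and some $w_0 \in X \setminus \{w^{*}\}$ under the true preferences.
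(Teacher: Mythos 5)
Your first direction is sound: for every choice of $w^{*}\in X$ the matching $\mu'$ is indeed stable with respect to $\succ''$ (the case analysis on blocking pairs involving $m$ is correct, using $\mu'(m)=\mu(m)$ from the no-regret hypothesis on $X$), and women-pessimality of $\DA(\succ'')$ then gives $\mu'(w)\succeq_w\mu''(w)$. But the entire content of the lemma lies in the reverse direction, and there you have only named the obstacle rather than overcome it. The two strategies you sketch --- picking the woman ``whose acceptance of $m$'s proposal initiates the chain,'' or inducting on $|X|$ via a ``redundant'' member --- are not carried out, and the supporting claim that ``the remaining $X$-members see matches in $\mu''$ that are no worse than $m$'' is precisely what needs proof. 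Note also that your route demands something strictly stronger than the lemma, namely $\mu''\in S_{\succ'}$ for the chosen $w^{*}$; it is not evident that this full-stability condition holds even for a $w^{*}$ that does reproduce $w$'s match, so the sandwich may not close even in principle.

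The paper's proof closes the gap with two ingredients absent from your proposal. First, it shows that every \emph{singleton} push-up of a member of $X$ is itself no-regret (\Cref{lem:SingleAgentPushUpNoRegret}, which in turn uses the fact that a with-regret push-up must match $m$ to one of the promoted women). Second, and crucially, it proves a proposals-containment lemma (\Cref{lem:PushUpPropsalsContainment}): every proposal made during the \DA{} run on $\succ^{X\uparrow}$ is made during the run on at least one of the singleton profiles $\succ^{\{w_x\}\uparrow}$. The proof of that containment is where the real work happens --- a hypothetical first uncontained proposal yields two singleton matchings $\mu^a,\mu^b\in S_{\succ}$ whose join $\mu^a\vee\mu^b$ fails to be a matching, contradicting the lattice structure of $S_{\succ}$ (\Cref{prop:Lattice}). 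With these in hand the conclusion is direct and avoids your stability sandwich entirely: if $m_X\coloneqq\mu'(w)$ were matched strictly above $w$ under every singleton profile (which follows from \Cref{cor:PushUp} under the contradiction hypothesis), he would never propose to $w$ in any of them, contradicting the containment of the proposal $(m_X,w)$. You would need to supply an argument of comparable force --- some version of the proposal-containment or lattice argument --- before your outline becomes a proof.
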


We will now use the foregoing observations to prove our main result (\Cref{thm:Inconspicuous_NoRegret}).

\begin{restatable}{theorem}{InconspicuousNoRegret}
Any optimal no-regret accomplice manipulation is, without loss of generality, inconspicuous.
\label{thm:Inconspicuous_NoRegret}
\end{restatable}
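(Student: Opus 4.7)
The argument systematically reduces an arbitrary optimal no-regret manipulation to a single-woman promotion in two stages. Let $\succ''_m$ be any optimal no-regret accomplice manipulation by $m$ on behalf of $w$, and set $\mu \coloneqq \DA(\succ)$, ${\succ'' \coloneqq \{\succ_{-m}, \succ''_m\}}$, and $\mu'' \coloneqq \DA(\succ'')$. By \Cref{prop:Permuting_Falsified_Lists}, I may assume without loss of generality that $\succ''_m = \succ_m^{X\uparrow, Y\downarrow}$ for disjoint subsets $X \subseteq \succ_m^R$ and $Y \subseteq \succ_m^L$, so the manipulated list is obtained by simultaneously promoting $X$ above and demoting $Y$ below $\mu(m)$.

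The first stage is to eliminate the push-down component $Y$. Consider the pure push-up profile $\succ'_m \coloneqq \succ_m^{X\uparrow}$ and let $\mu' \coloneqq \DA(\succ')$. Since $\succ''$ is a strict manipulation, $\mu''(w) \succ_w \mu(w)$, so \Cref{lem:CombiningPushUpPushDown} applies and yields $\mu'(w) \succeq_w \mu''(w)$. The critical sub-step is to show that $\succ'$ itself satisfies the no-regret condition $\mu'(m) = \mu(m)$; granted this, optimality of $\succ''$ among no-regret manipulations forces $\mu'(w) = \mu''(w)$, else $\succ'$ would strictly dominate $\succ''$.

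The second stage is to collapse the push-up to a single woman. Applying \Cref{lem:PushUpOneWoman} to the no-regret push-up of the set $X$ produces a single woman $w^{\star} \in X$ such that the inconspicuous manipulation $\succ_m^{\{w^{\star}\}\uparrow}$ matches $w$ with the same partner $\mu'(w) = \mu''(w)$. A short verification then confirms that this singleton push-up inherits the no-regret property from the no-regret push-up of the larger set $X$, yielding an inconspicuous, no-regret, and optimal manipulation. This completes the reduction.

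The principal obstacle is establishing that the pure push-up $\succ'$ is no-regret whenever the combined $\succ''$ is. Restoring the women in $Y$ to their original positions above $\mu(m)$ in $\succ'_m$ exposes $m$ to additional DA proposals in $Y$, which could a priori cascade into rejections that worsen $m$'s match. Ruling this out requires a careful comparison of the DA executions under $\succ'$ and $\succ''$, exploiting the tight constraint $\mu''(m) = \mu(m)$ to show that no such disruption can actually occur, for instance via \Cref{prop:PushDown} applied with $\succ'$ in the reference role so that the push-down of $Y$ taking $\succ'$ to $\succ''$ preserves $m$'s partner.
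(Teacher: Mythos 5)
Your proposal follows the paper's argument in the same three steps: reduce to push-up/push-down form via \Cref{prop:Permuting_Falsified_Lists}, discard the push-down component via \Cref{lem:CombiningPushUpPushDown}, and collapse the push-up set to a single woman via \Cref{lem:PushUpOneWoman} (your ``short verification'' that the singleton push-up stays no-regret is exactly \Cref{lem:SingleAgentPushUpNoRegret}). The one step you flag as the principal obstacle---showing the pure push-up $\succ'$ inherits the no-regret property, which is needed both for \Cref{lem:PushUpOneWoman} to apply and for $\succ'$ to compete in the no-regret strategy space---is indeed elided in the paper's terse proof, and your sketched resolution is the right one: the passage from $\succ'_m$ to $\succ''_m$ is a push-down of $Y$ relative to $\mu'(m)$ (composed with a harmless permutation of the part below $\mu'(m)$, covered by \Cref{prop:Permuting_Falsified_Lists}), so \Cref{prop:PushDown} gives $\mu''(m)=\mu'(m)$, and the hypothesis $\mu''(m)=\mu(m)$ then forces $\mu'(m)=\mu(m)$.
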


\begin{proof}
From \Cref{prop:Permuting_Falsified_Lists} (and subsequent remarks), we know that any accomplice manipulation can be simulated via push up and push down operations. \Cref{lem:CombiningPushUpPushDown} shows that any beneficial manipulation that is achieved by some combination of pushing up a set $X \subseteq W$ and pushing down $Y \subseteq W$ can be weakly improved by only pushing up $X \subseteq W$. Finally, from \Cref{lem:PushUpOneWoman}, we know that any match for the manipulating woman $w$ that is achieved by pushing up $X \subseteq W$ is also achieved by pushing up exactly one woman in $X$, thus establishing the desired inconspicuousness property.
\end{proof}

\Cref{thm:Inconspicuous_NoRegret} has some interesting computational and structural implications. First, the inconspicuousness property implies a straightforward polynomial-time algorithm for computing an optimal no-regret accomplice manipulation (\Cref{cor:Optimal_No-regret_PolynomialTime}). Second, the \DA{} matching resulting from an inconspicuous no-regret manipulation is stable with respect to the true preferences (\Cref{cor:No-regret_Inconspicuous_Stable}). Together, these results reconcile the seemingly conflicting interests of the manipulator (who wants to compute optimal manipulation efficiently) and the central planner (who wants the resulting matching to be stable with respect to the true preferences).

\begin{restatable}{corollary}{PolynomialTimeNoRegret}
An optimal no-regret accomplice manipulation strategy can be computed in $\O(n^3)$ time.
\label{cor:Optimal_No-regret_PolynomialTime}
\end{restatable}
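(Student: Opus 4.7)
The plan is to leverage \Cref{thm:Inconspicuous_NoRegret} in order to reduce the search space for an optimal no-regret manipulation to only $O(n)$ candidate lists, each of which can be evaluated by a single $O(n^2)$ run of the \DA{} algorithm. For a fixed manipulating pair $(m,w)$, I would first compute $\mu \coloneqq \DA(\>)$ once in $O(n^2)$ time in order to identify the accomplice's true partner $\mu(m)$ and the decomposition $\>_m = (\>_m^L, \mu(m), \>_m^R)$. By \Cref{thm:Inconspicuous_NoRegret}, some optimal no-regret manipulation promotes exactly one woman $w^*$ in $\>_m$. \Cref{prop:Permuting_Falsified_Lists} further tells us that promoting any $w^* \in \>_m^L$ merely permutes the prefix above $\mu(m)$ in the true profile and leaves $\DA(\>)$ unchanged, so the only candidates worth examining are the at most $n-1$ women in $\>_m^R$.

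Next, for each such candidate $w^*$, I would construct the single manipulated list $\>'_m$ obtained by moving $w^*$ to the slot immediately above $\mu(m)$ (with the relative order of all other women preserved), form $\>' \coloneqq \{\>_{-m}, \>'_m\}$, run $\mu' \coloneqq \DA(\>')$ in $O(n^2)$ time, verify the no-regret condition $\mu'(m) = \mu(m)$, and, if it holds, record $\mu'(w)$ as a feasible outcome for $w$. After processing all candidates, I would output the one (or truthful reporting, as a baseline) whose recorded match for $w$ is $\>_w$-maximal. Summing the costs gives $O(n^2) + (n-1)\cdot O(n^2) = O(n^3)$.

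The step that requires a careful argument is that trying only the ``just above $\mu(m)$'' position per candidate $w^*$ is without loss of generality. Suppose some alternative inconspicuous manipulation $\>''_m$ places $w^*$ at a different position above $\mu(m)$ and yields a no-regret outcome, i.e., $\mu''(m) = \mu(m)$ where $\mu'' \coloneqq \DA(\>'')$. Applying \Cref{prop:Permuting_Falsified_Lists} to the profile $\>''$ (treated as the base profile), we may permute the set of women above $\mu''(m) = \mu(m)$ in $\>''_m$ without altering $\DA(\>'')$. In particular, moving $w^*$ from its position in $\>''_m$ to the slot just above $\mu(m)$, while leaving the remaining women in their $\>_m$-order, produces exactly the list $\>'_m$ tested by my algorithm, so $\DA(\>') = \DA(\>'')$. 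Hence every no-regret outcome achievable by any inconspicuous promotion of $w^*$ is already captured by the single candidate I test, and combining this with \Cref{thm:Inconspicuous_NoRegret} shows that the algorithm correctly returns an optimal no-regret manipulation within the claimed $O(n^3)$ time bound.
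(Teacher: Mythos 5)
Your proposal is correct and follows essentially the same route as the paper's (sketched) proof: enumerate the $O(n)$ single-woman promotions of women below $\mu(m)$, run \DA{} in $O(n^2)$ per candidate, and keep the best no-regret outcome for $w$. The extra care you take — invoking \Cref{prop:Permuting_Falsified_Lists} to justify that one canonical insertion position per candidate suffices, and that promotions within $\>_m^L$ reduce to truth-telling — is implicit in the paper's sketch, so this is a more detailed version of the same argument rather than a different one.
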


\begin{restatable}{corollary}{InconspicuousStable}
The \DA{} outcome from an inconspicuous no-regret accomplice manipulation is stable with respect to the true preferences.
\label{cor:No-regret_Inconspicuous_Stable}
\end{restatable}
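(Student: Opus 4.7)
The proof should reduce to a direct invocation of \Cref{thm:No-regret_StableLatticeContainment} with $X$ chosen as a singleton. The key observation is that an inconspicuous manipulation promotes exactly one woman, and (combined with the no-regret hypothesis) this is equivalent, from the viewpoint of the \DA{} algorithm, to a push-up of that single woman.

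First, I would set up notation: let $(m,w)$ be the manipulating pair, let $\succ'_m$ denote the inconspicuous manipulation obtained by promoting a single woman $w^*$ in $m$'s true list, and let $\mu \coloneqq \DA(\succ)$ and $\mu' \coloneqq \DA(\succ')$, where $\succ' \coloneqq \{\succ_{-m}, \succ'_m\}$. By the no-regret hypothesis, $\mu'(m) = \mu(m)$.

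Next, I would split on where $w^*$ ends up in $\succ'_m$ relative to $\mu(m)$. If $w^*$ is still below $\mu(m)$ in $\succ'_m$, then $\succ'_m$ and $\succ_m$ induce the same partition into ``above $\mu(m)$'' / ``$\mu(m)$'' / ``below $\mu(m)$'', so \Cref{prop:Permuting_Falsified_Lists} gives $\mu' = \mu$, which lies in $S_{\succ}$ by \Cref{prop:DA_stable_MenOptimal_WomenPessimal}. Otherwise $w^*$ has been promoted above $\mu(m)$ in $\succ'_m$, in which case $\succ'_m$ has the same partition around $\mu(m)$ as the canonical push-up list $\succ_m^{\{w^*\}\uparrow}$. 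Applying \Cref{prop:Permuting_Falsified_Lists} once more (in both directions) shows that $\DA(\{\succ_{-m}, \succ_m^{\{w^*\}\uparrow}\}) = \mu'$, and in particular $m$ does not incur regret under the push-up of $\{w^*\}$ either. Then \Cref{thm:No-regret_StableLatticeContainment} (with $X \coloneqq \{w^*\}$) yields $S_{\{\succ_{-m}, \succ_m^{\{w^*\}\uparrow}\}} \subseteq S_{\succ}$, and since $\mu'$ is \DA{}-output and hence a member of the left-hand set by \Cref{prop:DA_stable_MenOptimal_WomenPessimal}, we obtain $\mu' \in S_{\succ}$, as required.

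I do not expect any substantive obstacle: once the inconspicuous move is recognized as a singleton push-up (modulo the permutation invariance of \Cref{prop:Permuting_Falsified_Lists}), the conclusion is immediate from \Cref{thm:No-regret_StableLatticeContainment}. The only mild care needed is the case split on whether the promoted woman actually crosses above $\mu(m)$; in the degenerate case where she does not, the manipulation is a no-op at the \DA{} level and the claim is trivial.
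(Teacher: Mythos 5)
Your proposal is correct and follows the same route as the paper, whose proof is a one-line sketch observing that an inconspicuous manipulation is a special case of a push up operation and invoking \Cref{thm:No-regret_StableLatticeContainment}. You simply fill in the details of that reduction (the singleton push-up identification via \Cref{prop:Permuting_Falsified_Lists} and the degenerate no-op case), which the paper leaves implicit.
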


\begin{remark}
Recall from \Cref{eg:unstable-no-regret-manipulation} that an arbitrary optimal no-regret strategy may not be stability-preserving. Nevertheless, any optimal no-regret strategy admits an equivalent inconspicuous strategy (\Cref{thm:Inconspicuous_NoRegret}) which indeed preserves stability~(\Cref{cor:No-regret_Inconspicuous_Stable}).
\end{remark}

\subsection{With-Regret Accomplice Manipulation}\label{sec:with-regret}

No-regret manipulations come at no cost for the accomplice and thus are a viable strategic behavior (as shown in \Cref{fig:AvTP_SvTP_AvS_SvA}). Yet, a more permissive strategy space may allow for the accomplice to incur some regret. Such \emph{with-regret} manipulations may be justifiable in practice: An accomplice's idiosyncratic preference may be tolerant to a small loss in exchange of gain for the partnering woman, or a woman may persuade a man to withstand some regret by providing side-payments.

We will start by illustrating that a with-regret accomplice manipulation can be strictly more beneficial compared to its no-regret and self manipulation counterparts.

\begin{example}[\textbf{With-regret vs. no-regret}]
Consider the following preference profile where the \DA{} outcome is underlined.

\begin{table}[h]
%\footnotesize
%\vspace{-0.1in}
    \centering
    %\begin{tabular*}{0.5\linewidth}{@{\extracolsep{\fill}}c c c c c c c c c c c c c c c}
    \begin{tabularx}{\linewidth}{XXXXXXXXXXXXXXX}
            $\boldsymbol{\textcolor{blue}{m_1}}\colon$ & \underline{$w_4^*$} & $w_1^\dagger$ & $w_2$ & $w_5$ & $w_3$ && $\boldsymbol{\textcolor{blue}{w_1}}\colon$ & $m_1^\dagger$ & $m_2^*$ & \underline{$m_3$} & $m_4$ & $m_5$\\
            $m_2\colon$ & \underline{$w_2$} & $w_4$ & $w_1^*$ & $w_5^\dagger$ & $w_3$ && $w_2\colon$ & $m_3^{*,\dagger}$ & $m_5$ & $m_1$ & \underline{$m_2$} & $m_4$\\
            $m_3\colon$ & \underline{$w_1$} & $w_2^{*,\dagger}$ & $w_4$ & $w_3$ & $w_5$ && $w_3\colon$ & $m_2$ & \underline{$m_5^*$} & $m_1$ & $m_4^\dagger$ & $m_3$\\
            $m_4\colon$ & $w_1$ & $w_3^\dagger$ & \underline{$w_5^*$} & $w_2$ & $w_4$ && $w_4\colon$ & $m_4$ & $m_3$ & \underline{$m_1^*$} & $m_5^\dagger$ & $m_2$\\
            $m_5\colon$ & $w_1$ & $w_4^\dagger$ & \underline{$w_3^*$} & $w_5$ & $w_2$ && $w_5\colon$ & \underline{$m_4^*$} & $m_2^\dagger$ & $m_5$ & $m_1$ & $m_3$
        \end{tabularx}
        %\end{tabular*}
%\vspace{-0.1in}
\end{table}

Suppose the manipulating pair is $(m_1, w_1)$. The \DA{} matching after $m_1$ submits the optimal no-regret\footnote{To see why $\>_{m_1}'$ is an \emph{optimal} no-regret manipulation, note that the woman-optimal stable matching (with respect to $\>$) matches $w_1$ with $m_2$. From \Cref{thm:Inconspicuous_NoRegret,cor:No-regret_Inconspicuous_Stable}, an optimal no-regret manipulation is, without loss of generality, stability preserving, and from \Cref{prop:DA_stable_MenOptimal_WomenPessimal}, $m_2$ is the best stable partner for $w_2$.} manipulated list $\>_{m_1}' \coloneqq w_2 \> w_4 \> w_1 \> w_5 \> w_3$ and the optimal with-regret manipulated list $\>_{m_1}'' \coloneqq w_1 \> w_4 \> w_2 \> w_5 \> w_3$ are marked by $*$ and $\dagger$, respectively. Both manipulation strategies improve $w_1$'s matching compared to truthful reporting, but $w_1$ strictly prefers the with-regret outcome.
\label{eg:WithRegret}
\end{example}

\Cref{eg:WithRegret} highlights two key differences between optimal no-regret and with-regret manipulations. First, the matching after the \textit{inconspicuous} with-regret manipulation (marked by $\dagger$) admits a blocking pair $(m_1, w_4)$ with respect to the true profile $\>$. This is in contrast to the no-regret case which is stability preserving
(\Cref{thm:No-regret_StableLatticeContainment}).
Second, in contrast to \Cref{cor:PushUp}, an optimal with-regret manipulation is not guaranteed to weakly improve or worsen the matching for \emph{all} agents on one side; indeed the women $w_3$ and $w_5$ are strictly worse off while $w_1$ is strictly better off. Similarly, the man $m_1$ is strictly worse off while $m_4$ and $m_5$ strictly improve.

The primary distinction between no-regret and with-regret manipulation lies in the push up operations. If pushing up a set of women does not incur regret for the accomplice then pushing up any subset thereof does not either. In contrast, if by pushing up a set of women the accomplice incurs regret, then there exists exactly one woman in that set who causes the same level of regret when pushed up individually.  As previously mentioned, with-regret push up operations do not uniformly affect all men and all women (in contrast to \Cref{cor:PushUp}). Moreover, the set of attained matchings after a with-regret manipulation are no longer stable with respect
to true preferences (in contrast to \Cref{thm:No-regret_StableLatticeContainment}), which makes the analysis challenging.

Despite these structural differences, we are able to prove an analogue of \Cref{lem:PushUpOneWoman} for with-regret push up operations (\Cref{lem:PushUpOneWomanWithRegret}). Our proof of this result relies on the fact that all proposals that occur when the accomplice pushes up a set of women are contained in the union of sets of proposals that occur when pushing up individual women in that set. This is relatively easy to prove for the no-regret case, since the \DA{}
matchings after these push up operations are all stable with respect to true preferences (\Cref{thm:No-regret_StableLatticeContainment}). Although we cannot rely on the same stability result for the with-regret case, we circumvent the issue by reasoning about the sets of proposals in greater detail. The full proof of \Cref{lem:PushUpOneWomanWithRegret}, along with an extensive discussion, is relegated to the appendix.

\begin{restatable}{lemma}{PushUpOneWomanWithRegret}
Let $(m, w)$ be a manipulating pair, and let ${X \subseteq W}$ be an arbitrary set of women that $m$ can push up (while incurring regret). Then, the match for $w$ that is obtained by pushing up all women in $X$ can also be obtained by pushing up exactly one woman in $X$.
\label{lem:PushUpOneWomanWithRegret}
\end{restatable}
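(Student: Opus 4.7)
The plan is to generalize the proof strategy of \Cref{lem:PushUpOneWoman} by carrying out a fine-grained analysis of the proposal sequences of the DA algorithm. Let $\succ^* \coloneqq \{\succ_{-m}, \succ_m^{X\uparrow}\}$ with $\mu^* \coloneqq \DA(\succ^*)$, and for each $x \in X$ let $\succ^x \coloneqq \{\succ_{-m}, \succ_m^{\{x\}\uparrow}\}$ with $\mu^x \coloneqq \DA(\succ^x)$. Denote by $\mathcal{P}^*$ and $\mathcal{P}^x$ the sets of man-to-woman proposals made during $\DA(\succ^*)$ and $\DA(\succ^x)$, respectively. The main technical claim I would establish is the containment $\mathcal{P}^* \subseteq \bigcup_{x \in X} \mathcal{P}^x$, in line with the hint given in the paragraph preceding the lemma.

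I would prove this containment by induction on the round of $\DA(\succ^*)$ in which a proposal first appears. The only way $\DA(\succ^*)$ can deviate from $\DA(\succ)$ is through $m$'s proposals to women in $X$ (above his true match $\mu(m)$), so every new proposal in $\mathcal{P}^*$ can be traced back via a rejection chain to the elevation of some specific $x \in X$. Using \Cref{prop:Permuting_Falsified_Lists} to freely reorder $m$'s proposals inside the pushed-up portion, I would serialize the chains triggered by individual elements of $X$ and argue that each downstream proposal also survives in the simpler run $\DA(\succ^x)$, where only the responsible $x$ is elevated. The deferred-decision principle is crucial: it permits processing $m$'s proposals in a convenient order without altering the final matching.

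Once the containment is established, the accepted proposal $(\mu^*(w), w) \in \mathcal{P}^*$ lies in some $\mathcal{P}^{x^*}$, so in $\DA(\succ^{x^*})$ the man $\mu^*(w)$ proposes to $w$; by the monotonicity of women's tentative matches through DA, $\mu^{x^*}(w) \succeq_w \mu^*(w)$. To establish equality, I would specifically take $x^* \coloneqq \mu^*(m)$ in the regret case, which is a valid choice in $X$ because strategyproofness of the men-proposing DA forces $\mu^*(m) \in X$ whenever $m$ is strictly worse off. Pinning $m$'s promoted woman to exactly $x^*$ commits $m$ to the same final partner as in $\DA(\succ^*)$, so a symmetric containment argument in the reverse direction $\mathcal{P}^{x^*} \subseteq \mathcal{P}^* \cup \mathcal{Q}$, where $\mathcal{Q}$ only involves proposals $m$ skips around $X \setminus \{x^*\}$, prevents $w$ from strictly improving beyond $\mu^*(w)$. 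The no-regret sub-case reduces directly to \Cref{lem:PushUpOneWoman}.

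The principal obstacle, as the authors note, is the absence of a stability anchor: unlike the no-regret setting where \Cref{thm:No-regret_StableLatticeContainment} guarantees $\mu^* \in S_\succ$ and permits lattice-theoretic reasoning, the with-regret matching $\mu^*$ may admit blocking pairs with respect to $\succ$. The proposal containment must therefore be proved directly, via a round-by-round synchronization between $\DA(\succ^*)$ and $\DA(\succ^x)$; the bulk of the technical work lies in controlling how and where the two runs diverge once $m$'s elevated proposals differ, and in verifying that these divergences remain confined to proposals already accounted for in the union $\bigcup_{x \in X} \mathcal{P}^x$.
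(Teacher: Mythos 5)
Your skeleton matches the paper's at the top level: both hinge on the proposal-containment claim $P_{\succ^X} \subseteq \bigcup_{x\in X} P_{\succ^x}$ (the paper's \Cref{lem:PushUpPropsalsContainment_Regret}) and on the fact that regret forces $\mu^X(m)\in X$ (\Cref{lem:accomplice-match-after-regret-manipulation}). However, the two places where the real difficulty lives are, respectively, hand-waved and handled with a step that does not work.

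First, the containment itself. You propose a round-by-round induction over rejection chains, but this is precisely where the missing ``stability anchor'' bites: the paper's no-regret version of this containment (\Cref{lem:PushUpPropsalsContainment}) already needs the lattice closure of $S_\succ$ under joins to derive a contradiction when the runs diverge, and rejection chains triggered by different elevated women genuinely interact, so a chain attributed to $x$ need not survive intact in the run where only $x$ is elevated. The paper circumvents this not by a direct synchronization but by a structural decomposition: setting $w_a \coloneqq \mu^X(m)$, it shows that pushing up $\bar X = X\setminus\{w_a\}$ alone is no-regret (\Cref{lem:XBar_NoRegret}), handles $P_{\succ^{\bar X}}$ via the no-regret containment lemma, and handles $P_{\succ^X}\setminus P_{\succ^{\bar X}}$ by re-basing the ``true'' profile at $\succ^a$ (\Cref{lem:PushUpPropsalsContainment_Regret_Helper}), together with auxiliary containments such as $P_{\succ^z\setminus \succ^X}\subseteq P_\succ$. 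Your sketch restates the goal without supplying a mechanism that replaces the lattice argument.

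Second, the concluding step is internally inconsistent and, as stated, false. You first choose $x^*$ so that $(\mu^X(w),w)\in P_{\succ^{x^*}}$, then separately declare $x^* \coloneqq \mu^X(m) = w_a$; these need not coincide. Worse, since $\succ^X$ is obtained from $\succ^a$ by a \emph{no-regret} push up of $X\setminus\{w_a\}$, \Cref{cor:PushUp} gives $\mu^X \succeq_W \mu^a$, i.e., pushing up only $w_a$ makes $w$ weakly \emph{worse} than pushing up all of $X$; if the single responsible woman is one of the no-regret ones $w_b,w_c,\dots$, your pinned choice fails outright. The paper instead argues by contradiction that $\mu^x(m_X)\succ_{m_X} w$ for \emph{every} $x\in X$ (where $m_X=\mu^X(w)$), ruling out the case $w\succ_{m_X}\mu^b(m_X)$ via a blocking-pair argument that explicitly uses the manipulation assumption $\mu^X(w)\succ_w\mu(w)$ together with \Cref{lem:NoRegretProfileSubset_Helper}; this is then incompatible with $(m_X,w)$ lying in the union of proposal sets. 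Your proposal never invokes the manipulation assumption and never excludes this case, so the exact-achievability conclusion does not follow from what you have.
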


Subsequently, an optimal with-regret manipulation is, without loss of generality, inconspicuous.
The proof is similar to that of the no-regret case (\Cref{thm:Inconspicuous_NoRegret}) with the only difference being the use of \Cref{lem:PushUpOneWomanWithRegret} in place of \Cref{lem:PushUpOneWoman}. 

\begin{restatable}{theorem}{InconspicuousWithRegret}
Any optimal with-regret accomplice manipulation is, without loss of generality, inconspicuous.
\label{thm:Inconspicuous_WithRegret}
\end{restatable}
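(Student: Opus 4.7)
The plan is to mirror the proof of \Cref{thm:Inconspicuous_NoRegret} almost verbatim, replacing only the one appeal to \Cref{lem:PushUpOneWoman} by an appeal to its with-regret analogue \Cref{lem:PushUpOneWomanWithRegret}. Fix any optimal with-regret accomplice manipulation $\>'_m$ and let $\mu'' \coloneqq \DA(\>_{-m}, \>'_m)$. By \Cref{prop:Permuting_Falsified_Lists} (together with the subsequent discussion in \Cref{subsec:Structural_Observations}), we may assume without loss of generality that $\>'_m$ is obtained from $\>_m$ by a combination of push-up and push-down operations, so $\>'_m = \>_m^{X\uparrow, Y\downarrow}$ for some disjoint $X, Y \subseteq W$.

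The key point is that \Cref{lem:CombiningPushUpPushDown} is stated \emph{without} any no-regret hypothesis on $m$, and so applies unchanged in the present setting. Writing $\mu' \coloneqq \DA(\>_{-m}, \>_m^{X\uparrow})$, the lemma gives $\mu'(w) \succeq_w \mu''(w)$; since $\mu''$ is already optimal for $w$, this must hold with equality on $w$'s match, so $\>_m^{X\uparrow}$ is itself an optimal with-regret manipulation that uses only push-ups. Applying \Cref{lem:PushUpOneWomanWithRegret} to the set $X$ then produces some $x^\star \in X$ such that pushing up $\{x^\star\}$ alone assigns $w$ the same match as pushing up all of $X$. Hence $\>_m^{\{x^\star\}\uparrow}$ is an optimal with-regret manipulation that promotes exactly one woman in $m$'s true list, which is precisely the inconspicuousness condition.

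The only genuine obstacle in the whole chain is \Cref{lem:PushUpOneWomanWithRegret}, which is substantially more delicate than its no-regret counterpart: the stable-lattice containment $S_{\>'} \subseteq S_{\>}$ of \Cref{thm:No-regret_StableLatticeContainment} fails once regret is permitted, so one cannot conclude that the intermediate matchings lie in $S_{\>}$ and must instead reason directly about the \DA{} proposal sequences for the various push-up profiles (as done in the appendix). Granted that lemma, the present theorem is the three-line assembly above.
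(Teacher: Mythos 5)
Your proposal is correct and follows essentially the same route as the paper: reduce to push-up/push-down via \Cref{prop:Permuting_Falsified_Lists}, discard the push-down via \Cref{lem:CombiningPushUpPushDown} (which, as you rightly note, needs no no-regret hypothesis), and then invoke \Cref{lem:PushUpOneWomanWithRegret} to collapse the push-up set to a single woman. Your additional remark that optimality forces $\mu'(w) = \mu''(w)$ and your identification of \Cref{lem:PushUpOneWomanWithRegret} as the genuinely hard ingredient are both accurate and consistent with the paper's presentation.
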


\Cref{thm:Inconspicuous_WithRegret} immediately implies a polynomial-time algorithm for computing an optimal with-regret accomplice manipulation. Moreover, the \DA{} outcome from any inconspicuous with-regret accomplice manipulation is $m$-stable with respect to the true preferences (\Cref{prop:mStabilityExtension}).

\begin{restatable}{corollary}{PolynomialTimeWithRegret}
An optimal with-regret accomplice manipulation strategy can be computed in $\O(n^3)$ time.
\label{cor:Optimal_With-regret_PolynomialTime}
\end{restatable}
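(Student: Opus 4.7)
The plan is to turn the structural characterization in \Cref{thm:Inconspicuous_WithRegret} into an explicit enumeration algorithm. The proof of that theorem (via \Cref{prop:Permuting_Falsified_Lists}, \Cref{lem:CombiningPushUpPushDown}, and \Cref{lem:PushUpOneWomanWithRegret}) does more than certify inconspicuousness in the abstract: it shows that there is always an optimal manipulation obtained by pushing up a single woman $w' \in W$ above $m$'s true match $\mu(m)$, i.e., a manipulated list of the form $\>'_m = (\>_m^L \cup \{w'\}, \mu(m), \>_m^R \setminus \{w'\})$. Thus, it suffices to enumerate the $n$ candidate lists, one per choice of $w'$.

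Concretely, I would first compute $\mu \coloneqq \DA(\>)$ in $\O(n^2)$ time to identify $\mu(m)$ and the partition $\>_m = (\>_m^L, \mu(m), \>_m^R)$. Then, iterating over each $w' \in W$, I would construct the push-up list $\>'_m$ as above, run $\DA(\{\>_{-m}, \>'_m\})$ in $\O(n^2)$ time to obtain $\mu'$, and retain the candidate whose outcome $\mu'(w)$ is ranked highest by $w$ with respect to $\>_w$. A rank lookup in $\>_w$ takes $\O(n)$ per candidate, so the $n$ iterations together cost $\O(n \cdot n^2) = \O(n^3)$, matching the claimed bound.

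The only point requiring care is justifying why, among all inconspicuous manipulations (which in principle allow the promoted woman to be placed at any of $\O(n)$ target positions, which would otherwise give an $\O(n^4)$ algorithm), we may fix her placement in advance without losing optimality. In the no-regret regime this is immediate from \Cref{prop:Permuting_Falsified_Lists}; in the with-regret regime, \Cref{lem:PushUpOneWomanWithRegret} serves the same purpose by guaranteeing that a single-woman push-up of the specific form above already attains the optimal value of $w$'s partner. Since that lemma carries all the hard structural content, the corollary itself reduces to the routine running-time accounting sketched above.
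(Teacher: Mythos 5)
Your proof is correct and follows essentially the same route as the paper's: by \Cref{thm:Inconspicuous_WithRegret} it suffices to enumerate the $\O(n)$ single-woman promotions above $\mu(m)$, running the $\O(n^2)$-time \DA{} algorithm for each and keeping the best outcome for $w$, giving $\O(n^3)$ overall. Your extra remark about fixing the promoted woman's placement is a reasonable point of care, and it is resolved exactly as you say (the paper invokes \Cref{prop:Permuting_Falsified_Lists} for this in the preliminaries to \Cref{lem:PushUpOneWoman,lem:PushUpOneWomanWithRegret}).
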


\section{Experimental Results}
\label{sec:Experiments}

In addition to the experiments described in Section~\ref{sec:Introduction}, we performed a series of simulations to analyze the performance of accomplice manipulation. As for the previous experimental setup, we generated 1000 profiles uniformly at random for each value of $n \in \{3,\dots, 40\}$ (where $n$ is the number of men/women) and allowed any man to be chosen as an accomplice for each experiment unless stated otherwise.

\paragraph{Comparing the Quality of Partners}

We first compare the quality of partners that a fixed strategic woman $w$ is matched with through no-regret accomplice and self manipulation. \Cref{fig:RankDiff_Accomplice_vs_Self} illustrates the distributions of improvement (in terms of rank difference) out of only those instances where $w$ is strictly better off through the two strategies individually. In other words, the self (respectively, accomplice) manipulation boxplots only reflect the data for when self (respectively, accomplice) manipulation is successful. It is evident that, in expectation, $w$ is matched with better partners through no-regret accomplice manipulation.

\begin{figure}[h]
    \centering
    \includegraphics[width=0.72\linewidth]{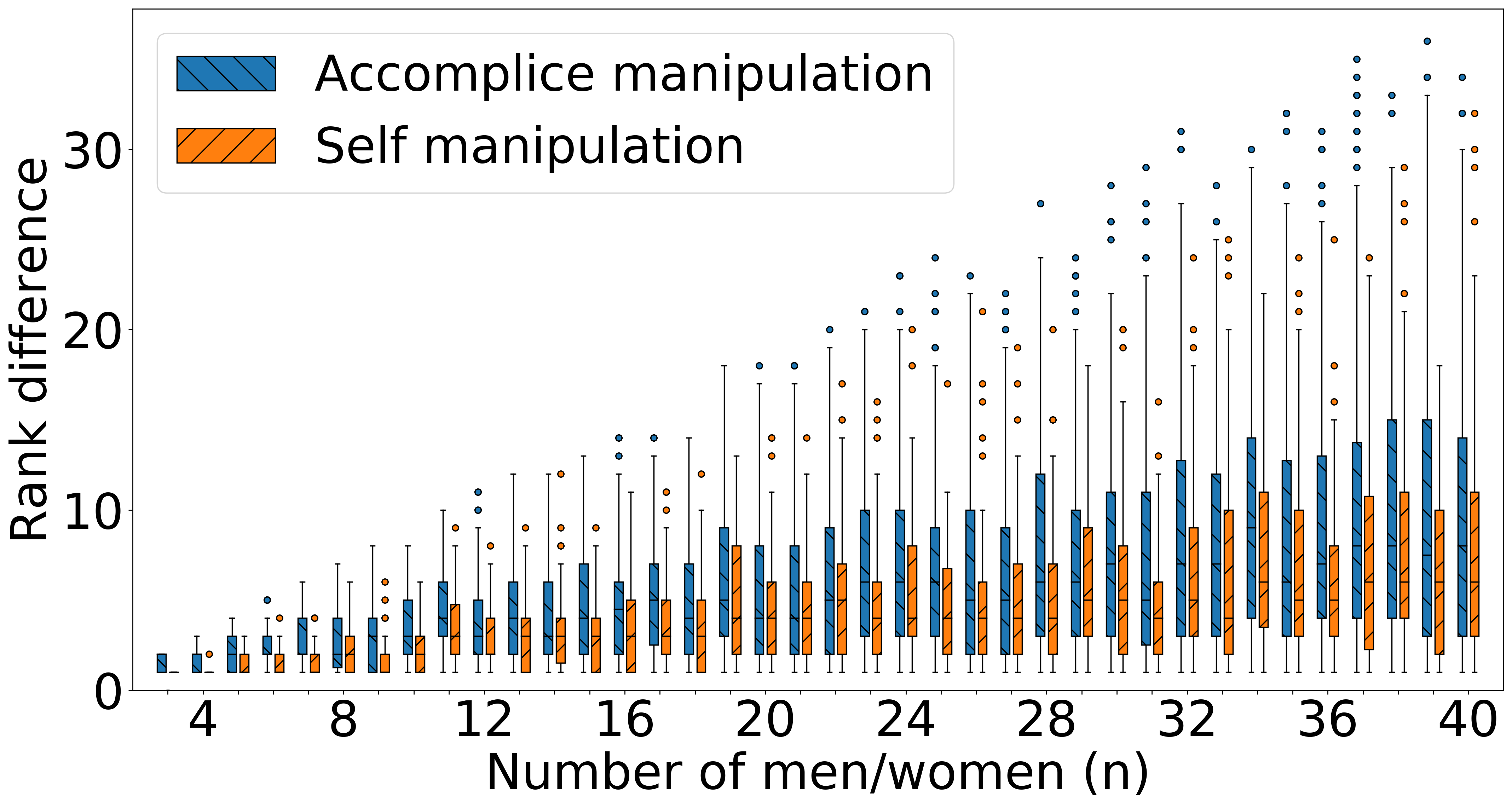}
    \caption{Comparing no-regret accomplice and self manipulation in terms of the \emph{improvement in the rank of the matched partner} of $w$. The solid bars, whiskers, and dots denote the interquartile range, range excluding outliers, and outliers, respectively.}
\label{fig:RankDiff_Accomplice_vs_Self}
\end{figure}

\paragraph{The Fraction of Women Who Improve}

We additionally compare the fraction of women who are able to improve through no-regret accomplice and self manipulation individually. \citet{TS01gale} reported that 5.06\% of women were able to improve using self manipulation when $n = 8$. In our experiment, this value is similarly 4.18\%. However, 9.99\% of women are able to improve through no-regret accomplice manipulation. As illustrated in Figure~\ref{fig:FractionWomen_AvS}, the fraction of women who benefit from no-regret accomplice manipulation is consistently more than double that of self manipulation.

\begin{figure}[h]
    \centering
    \includegraphics[width=0.72\linewidth]{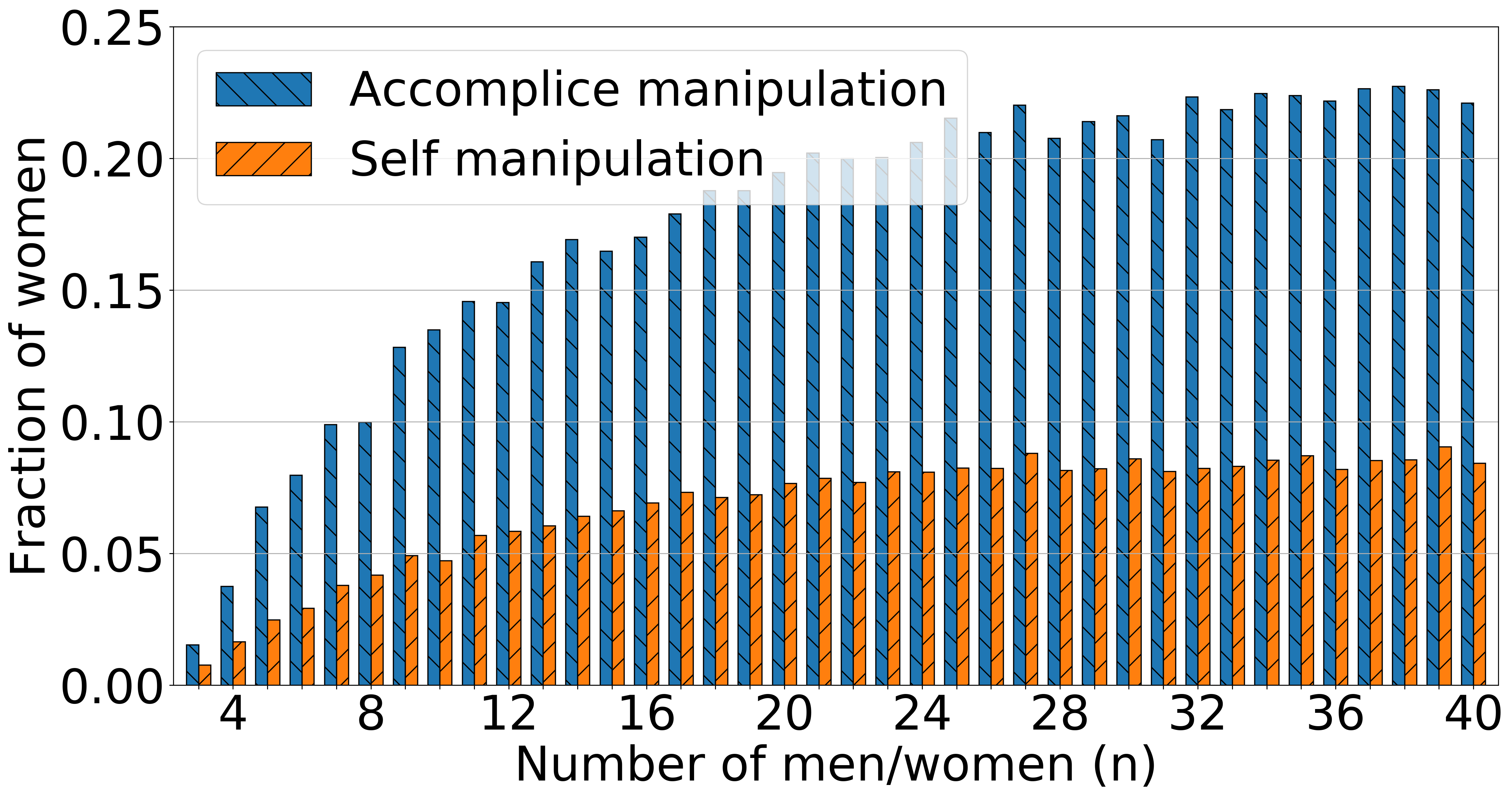}
    \caption{Comparing no-regret accomplice and self manipulation in terms of the \emph{fraction of women who benefit}.}
\label{fig:FractionWomen_AvS}
\end{figure}

\paragraph{How Much Flexibility is Really Needed in Choosing the Accomplice?}

So far, our experiments have taken the optimistic approach of allowing the strategic woman to pick \emph{any} man of her choice as the accomplice. To examine the exact extent of flexibility that this assumption requires, we conduct an experiment where the accomplice is chosen from a \emph{fixed} pool of $p$ men for some $p \leq n$ (for example, when $p = 5$, we pick the best accomplice from a fixed set of five men). For $n = 40$, we ran the no-regret accomplice, with-regret accomplice, and self manipulation strategies on 1000 profiles for $p \in \{1,\dots,40\}$. The results are presented in \Cref{fig:VaryAccomplicePools}.

\begin{figure}[h]
    \centering
    \includegraphics[width=0.72\linewidth]{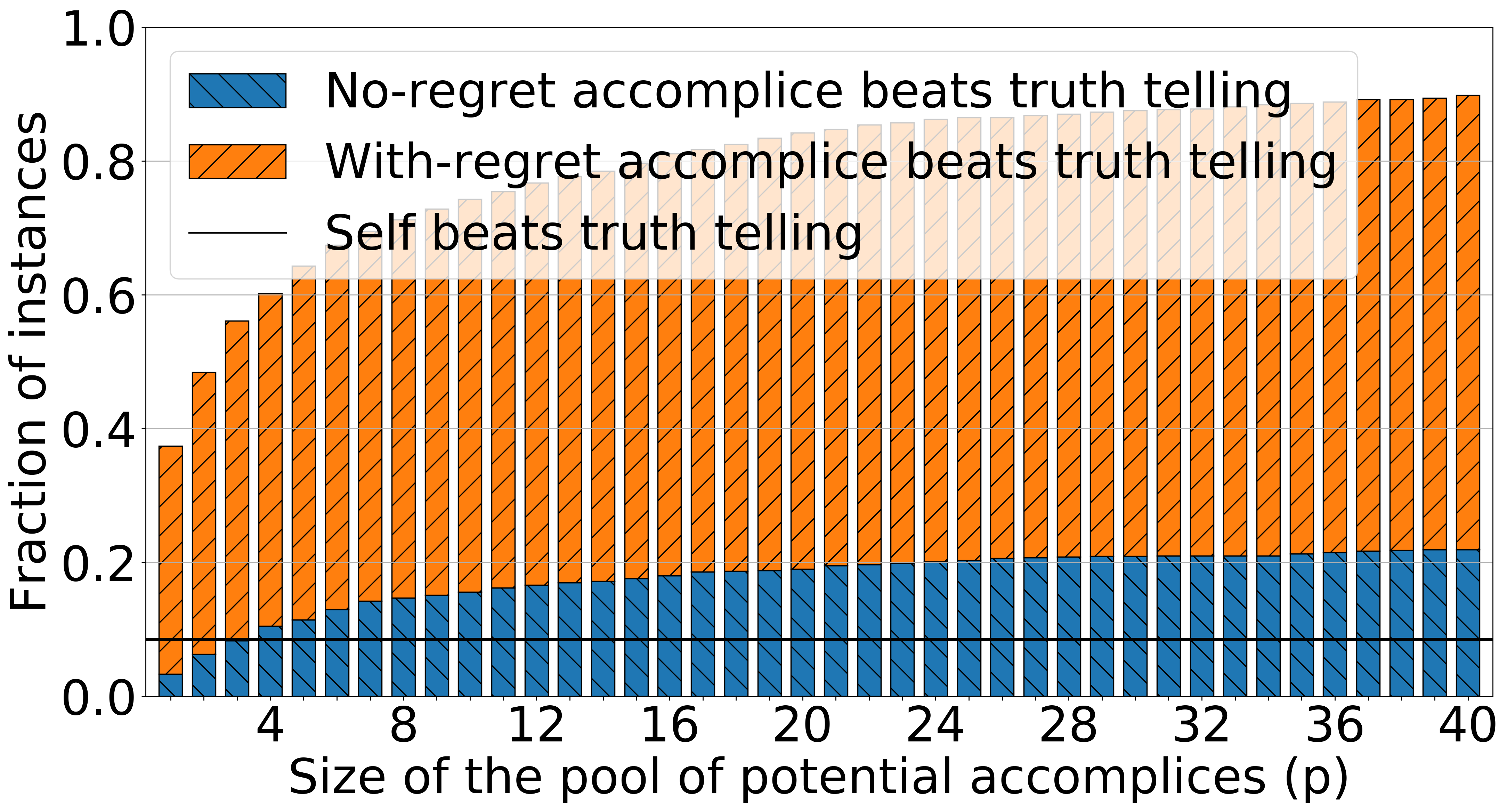}
    \caption{Comparing no-regret accomplice, with-regret accomplice (with variable-sized pools of potential accomplices for both), and self manipulation against truthful reporting when $n = 40$.}
\label{fig:VaryAccomplicePools}
\end{figure}

One would expect with-regret accomplice manipulation to outperform self manipulation when there is variability in accomplices. Indeed, the strategic woman could simply request her top choice to place her at the top of his list. However, we observe that, in expectation, with-regret accomplice manipulation outperforms self manipulation for \emph{every} pool size $p$ (thus, a with-regret strategy outperforms self manipulation even when a single accomplice is randomly chosen in advance). Furthermore, the comparatively limited no-regret accomplice manipulation is also, on average, better than self manipulation when there are at least four men to choose from. These observations suggest that the superior performance of accomplice manipulation can be achieved even with a modest amount of flexibility in the choice of the accomplice.

\section{Concluding Remarks}

We showed that accomplice manipulation is a viable strategic behavior that only requires inconspicuous misreporting of preferences and is frequently more beneficial than the classical self-manipulation strategy. A natural avenue for future research is to investigate a setting with \emph{multiple} accomplices working together to manipulate the outcome for the strategic woman. Additionally, one can think of a broader strategy space in which both the accomplice and the manipulating woman are able to misreport their preference lists simultaneously. Analyzing the benefits of such \emph{coalitional} manipulation strategies---with or without regret---on one or both sides, and studying their structural and algorithmic properties are intriguing directions for future work.

\section*{Acknowledgments}
HH acknowledges support from NSF grant \#1850076. RV acknowledges support from ONR\#N00014-171-2621 while he was affiliated with Rensselaer Polytechnic Institute, and is currently supported by project no. RTI4001 of the Department of Atomic Energy, Government of India. Part of this work was done while RV was supported by the Prof. R Narasimhan postdoctoral award.

%\small
\bibliographystyle{named}%{alpha}
\bibliography{ms}
%\normalsize

\clearpage
\begin{center}
\Large{Appendix}
\end{center}

\section{Omitted Material from Sections~\ref{sec:Preliminaries_Stable_Matching} \& \ref{subsec:Structural_Observations}}

\subsection{Additional Preliminaries}
\label{subsec:Additional_Preliminaries}

\paragraph{Lattice of stable matchings} Given any preference profile $\>$ and any pair of stable matchings $\mu, \mu' \in S_{\>}$, let us define their \emph{meet} $\mu_{\wedge} \coloneqq \mu \wedge \mu'$ as follows: For every man $m \in M$,
\[
\mu_{\wedge}(m) = 		
    \begin{cases}
        \mu'(m) & \text{if } \mu(m) \>_m \mu'(m) \\
        \mu(m) & \text{otherwise,}
    \end{cases}
\]
  and for every women $w \in W$,
\[
\mu_{\wedge}(w) = 		
    \begin{cases}
        \mu(w) & \text{if } \mu(w) \>_w \mu'(w) \\
        \mu'(w) & \text{otherwise.}
    \end{cases}
\]
  Similarly, the \emph{join} $\mu_{\vee} \coloneqq \mu \vee \mu'$ is defined as follows: For every man $m \in M$,
\[
\mu_{\vee}(m) = 		
    \begin{cases}
        \mu(m) & \text{if } \mu(m) \>_m \mu'(m) \\
        \mu'(m) & \text{otherwise,}
    \end{cases}
\]
 and for every women $w \in W$,
\[
\mu_{\vee}(w) = 
    \begin{cases}
        \mu'(w) & \text{if } \mu(w) \>_w \mu'(w) \\
        \mu(w) & \text{otherwise.}
    \end{cases}
\]
  
A well-known result, attributed to John Conway~\citep{K97stable}, establishes that the set of stable matchings is closed under meet and join operations.

\begin{restatable}[\protect\citealp{K97stable}]{proposition}{Lattice}
Let $\>$ be a preference profile and let $\mu,\mu' \in S_{\>}$. Then, $\mu_{\wedge},\mu_{\vee} \in S_{\>}$.
\label{prop:Lattice}
\end{restatable}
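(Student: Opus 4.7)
The plan is to first prove a standard ``opposition'' lemma on stable matchings---whenever two stable matchings disagree on a pair, the corresponding man and woman rank them in opposite orders---and then use it to verify that $\mu_{\wedge}$ (and symmetrically $\mu_{\vee}$) is a well-defined matching admitting no blocking pair with respect to $\succ$.

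For the opposition lemma, I would show that for every $m \in M$ with $w \coloneqq \mu(m)$, $\mu(m) \succ_m \mu'(m)$ if and only if $\mu'(w) \succ_w m$. For the $(\Rightarrow)$ direction, the hypothesis implies $\mu'(m) \neq w$ and hence $\mu'(w) \neq m$; since $w \succ_m \mu'(m)$, stability of $\mu'$ rules out $m \succ_w \mu'(w)$, forcing $\mu'(w) \succ_w m$. The converse is symmetric.

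To check that $\mu_{\wedge}$ is a matching, the only non-trivial requirement is $\mu_{\wedge}(\mu_{\wedge}(m)) = m$ for every $m$. If $\mu(m) = \mu'(m)$ this is immediate; otherwise assume $\mu(m) \succ_m \mu'(m)$ (the reverse case is symmetric under swapping $\mu \leftrightarrow \mu'$), so $\mu_{\wedge}(m) = \mu'(m) \eqqcolon w$. Applying the opposition lemma with $\mu$ and $\mu'$ swapped, to the man $m$ and his $\mu'$-partner $w$, yields ``$\mu'(m) \succ_m \mu(m)$ iff $\mu(w) \succ_w m$''; the left side is false, so $m \succ_w \mu(w)$, and the women-side rule in the definition of $\mu_{\wedge}$ then gives $\mu_{\wedge}(w) = \mu'(w) = m$, as required.

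Finally, for stability, suppose toward a contradiction that $(m,w)$ blocks $\mu_{\wedge}$. Without loss of generality $\mu_{\wedge}(m) = \mu(m)$ (else swap the roles of $\mu$ and $\mu'$). Then $w \succ_m \mu(m)$, so stability of $\mu$ forces $\mu(w) \succ_w m$, and by construction $\mu_{\wedge}(w) \succeq_w \mu(w) \succ_w m$, contradicting $m \succ_w \mu_{\wedge}(w)$. The argument for $\mu_{\vee}$ is completely analogous after exchanging ``less preferred'' with ``more preferred'' on each side. I expect the main subtlety to be the bookkeeping in the matching-consistency step---applying the opposition lemma with the roles of $\mu$ and $\mu'$ swapped so that the correct preference direction is matched to the correct matching---after which stability reduces to a one-line blocking-pair contradiction.
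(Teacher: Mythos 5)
The paper itself does not prove this proposition---it is quoted from \citet{K97stable}, who attributes it to Conway---so I can only compare your attempt against the standard argument. Your overall architecture (an ``opposition'' lemma, then well-definedness of $\mu_{\wedge}$ and $\mu_{\vee}$, then a one-line blocking-pair check) is the right one, and your final stability step is correct. The gap is in the opposition lemma: the two directions of your biconditional are \emph{not} symmetric, and the direction your consistency step actually uses is the one you have not proved. Writing $w \coloneqq \mu(m)$, the $(\Rightarrow)$ direction---$\mu(m) \succ_m \mu'(m)$ implies $\mu'(w) \succ_w m$---does follow in one line from stability of $\mu'$ applied to the pair $(m,w)$, exactly as you say. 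But every ``symmetric'' application of that one-line argument (swapping the roles of men and women and/or of $\mu$ and $\mu'$) produces a conclusion about the \emph{other} partner: for instance, ``if $\mu'(w) \succ_w \mu(w)$ then the man $\mu'(w)$ prefers $\mu$,'' which says nothing about the man $\mu(w)=m$ that your converse concerns. Concretely, in the matching-consistency step you take $m$ with $\mu(m) \succ_m \mu'(m)$ and $w \coloneqq \mu'(m)$ and conclude $m \succ_w \mu(w)$; no single application of the blocking-pair condition yields this, because under $\mu$ the pair $(m,w)$ is not even a candidate blocking pair ($m$ prefers $\mu(m)$ to $w$), while under $\mu'$ they are matched.

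The missing ingredient is Knuth's decomposition lemma (equivalently, a counting or alternating-cycle argument), which is precisely the non-trivial content of the lattice theorem. One standard fix: let $M'$ be the set of men who strictly prefer $\mu'$ to $\mu$ and $W'$ the set of women who strictly prefer $\mu$ to $\mu'$; your easy direction shows $\mu'(M') \subseteq W'$ and, symmetrically, $\mu(W') \subseteq M'$, whence $|M'| = |W'|$ and both restrictions are bijections onto. From $\mu(M') = W'$ one reads off the hard direction of the opposition lemma, after which both cases of your consistency step go through. Alternatively, chase the alternating cycle of $\mu \,\triangle\, \mu'$ containing $w$: each step uses only your easy direction, and since the cycle closes with a $\mu$-edge back at $w$, the man $\mu(w)$ inherits the preference for $\mu$. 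Some global argument of this kind must be inserted; as written, ``the converse is symmetric'' is exactly the point at which the proof fails.
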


\subsection{Self Manipulation vs. No-Regret Accomplice Manipulation}

The self manipulation and no-regret accomplice manipulation strategy spaces are not contained in one another. \Cref{eg:Self_vs_Accomplice_Intro} shows an instance where no-regret accomplice manipulation is better than self manipulation. In \Cref{eg:Self_beats_Accomplice}, we provide an instance where self manipulation is better than no-regret accomplice manipulation. 

\begin{example}
Consider the following preference profile where the \DA{} outcome is underlined.

\begin{table}[h]
%\vspace{-0.06in}
\centering %\small
    \begin{tabular}{p{0.019\textwidth}>{\centering}p{0.019\textwidth}>{\centering}p{0.019\textwidth}>{\centering}p{0.019\textwidth}>{\centering}p{0.019\textwidth}>{\centering}p{0.032\textwidth}>{\centering}p{0.019\textwidth}>{\centering}p{0.019\textwidth}>{\centering}p{0.019\textwidth}>{\centering}p{0.019\textwidth}>{\centering\arraybackslash}p{0.019\textwidth}}
         	$m_1$: & \underline{$w_2$} & $w_3$ &
         	$w_1^*$ & $w_4$ && $\boldsymbol{\textcolor{blue}{w_1}}$: & $m_1^*$ & $m_2$ & \underline{$m_3$} & $m_4$\\
            $m_2$: & \underline{$w_3$} & $w_2^{*}$ & $w_4$ & $w_1$ && $w_2$: & $m_2^{*}$ & $m_3$ & $m_4$ & \underline{$m_1$}\\
            $m_3$: & \underline{$w_1$} & $w_3^*$ & $w_4$ & $w_2$ && $w_3$: & $m_3^*$ & $m_1$ & $m_4$ & \underline{$m_2$}\\
            $m_4$: & $w_1$ & \underline{$w_4^*$} & $w_2$ & $w_3$ && $w_4$: & $m_3$ & $m_1$ & \underline{$m_4^*$} & $m_2$
    \end{tabular}
    %\caption{}
    %\label{tab:Accomplice_beats_Self}
%\vspace{-0.06in}
\end{table}

Suppose $w_1$ seeks to improve her match via manipulation. % the outcome. 
The optimal self manipulation strategy for $w_1$ is $\>'_{w_1} = m_4 \> m_3 \> m_1 \> m_2$, which allows her to match with her top choice $m_1$ (the new \DA{} matching is marked by $*$). Regardless of the choice of accomplice, the optimal no-regret accomplice manipulation strategy, on the other hand, is truth-telling.
\label{eg:Self_beats_Accomplice}
\end{example}

\subsection{Proof of Proposition~\ref{prop:mStabilityExtension}}

\mStabilityExtension*
\begin{proof}
Suppose, for contradiction, that $\mu'$ is not $m$-stable with respect to $\>$. Then, there must exist a man-woman pair $(m', w')$ that blocks $\mu'$ with respect to $\>$ such that $m' \neq m$, i.e., $w' \>_{m'} \mu'(m')$ and $m' \>_{w'} \mu'(w')$. Since $m$ is the only agent whose preferences differ between $\>$ and $\>'$, we have that $\>_{m'} \, = \, \>'_{m'}$ and $\>_{w'} \, = \, \>'_{w'}$. Thus, $w' \>'_{m'} \mu'(m')$ and $m' \>'_{w'} \mu'(w')$, implying that the pair $(m',w')$ blocks $\mu'$ with respect to $\>'$, which contradicts the assumption that $\mu' \in S_{\>'}$. Thus, $\mu'$ must be $m$-stable with respect to the true profile $\>$.
\end{proof}

\subsection{Proof of Lemma~\ref{lem:PushDown_Worse_For_Women}}

\PushDownWorseForWomen*
\begin{proof}
Suppose, for contradiction, that there exists a woman $w'$ such that $m' \succ_{w'} \mu(w')$, where $m' \coloneqq \mu'(w')$. We infer that $w' \nsucc_{m'} \mu(m')$, otherwise the stability of $\mu$ with respect to $\succ$ is compromised. Since $m' \neq \mu(w')$, we have that $w' \neq \mu(m')$, and therefore $\mu(m') \>_{m'} w'$. However, \Cref{prop:PushDown} guarantees $\mu' \succeq_M \mu$, thus posing a contradiction.
\end{proof}

\section{Omitted Material from Section~\ref{sec:no-regret}}

\subsection{Proof of Theorem~\ref{thm:No-regret_StableLatticeContainment}}

\StableLatticeContainment*
\begin{proof}
Suppose, for contradiction, that there exists a matching $\phi \in S_{\>'} \setminus S_{\>}$. Then, there must be a pair $(m', w')$ that blocks $\phi$ with respect to $\>$. It follows from \Cref{prop:mStabilityExtension} that $m' = m$. Thus, $w' \succ_{m} \phi(m)$ and $m \succ_{w'} \phi(w')$.

From \Cref{prop:DA_stable_MenOptimal_WomenPessimal}, we have that $\mu'(m) \succeq'_{m} \phi(m)$. Since $m$ does not incur regret, we have $\mu(m) = \mu'(m)$, and thus, $\mu(m) \succeq'_{m} \phi(m)$. All women below $\mu(m)$ in $\succ'_m$ are also below $\mu(m)$ in $\succ_m$ by the push up assumption. Since $\mu(m) \succeq'_m \phi(m)$, this implies that all women below $\phi(m)$ in $\succ'_m$ are also below $\phi(m)$ in $\succ_m$. Thus, if $\phi(m) \>'_m w'$, then $\phi(m) \>_m w'$, which contradicts the blocking pair condition above. Therefore, we must have that $w' \>'_m \phi(m)$ (note that $w' \neq \phi(m)$ by the blocking pair condition). Furthermore, since $\succ'_{w'} \ = \ \succ_{w'}$, the blocking pair condition also implies that $m \succ'_{w'} \phi(w')$. Combined with the condition ${ w' \succ'_m \phi(m) }$, this contradicts the assumption that $\phi \in S_{\>'}$. Thus, $S_{\>'} \subseteq S_{\>}$.
\end{proof}

\subsection{Proof of Lemma~\ref{lem:CombiningPushUpPushDown}}

\CombiningPushUpPushDown*
\begin{proof}
Our proof will use case analysis based on whether or not $\mu'(m) = \mu(m)$.

\textbf{Case I} (when $\mu'(m) = \mu(m)$): The list $\>''_m$ can be considered as being derived from $\>'_m$ via a push down operation on the set $Y$ (see \Cref{fig:combining-push-up-push-down}). From \Cref{lem:PushDown_Worse_For_Women}, we know that a push down operation is weakly worse for all women; thus, in particular, we get $\mu'(w) \succeq_w \mu''(w)$, as desired. Note that the relative ordering of $X$ and $Y$ above $\mu'(m)$ in the list $\>'_m$ is not important in light of \Cref{prop:Permuting_Falsified_Lists}.

\begin{figure}[h]
\centering
\begin{tikzcd}[cramped, column sep=tiny]
    \succ_m: & \cdots & Y & \cdots & \mu(m) & \cdots & X & \cdots\\
    %\succ^*_m: & \cdots & \mu(m) & \cdots & X & \cdots & Y & \cdots\\
    \succ'_m: & \cdots & X & \cdots & Y & \cdots & \mu(m) & \cdots\\
    \succ''_m: & \cdots & X & \cdots & \mu(m) & \cdots & Y & \cdots
\end{tikzcd}
\caption{An illustration of man $m$'s preference lists under the profiles $\succ$, $\succ'$, and $\succ''$ in the proof of \Cref{lem:CombiningPushUpPushDown}.}
\label{fig:combining-push-up-push-down}
\end{figure}
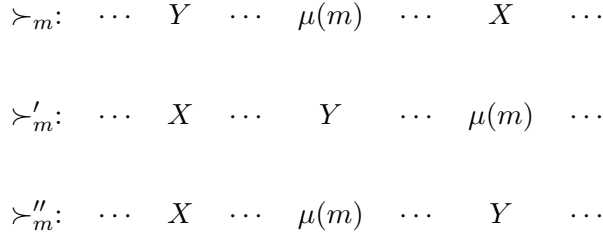

\textbf{Case II} (when $\mu'(m) \neq \mu(m)$): Suppose $\mu'(m) \in X$. Then, the list $\>''_m$ can be considered as being derived from $\>'_m$ via a permutation of the women below $\mu'(m)$ (see \Cref{fig:combining-push-up-push-down}). By \Cref{prop:Permuting_Falsified_Lists}, this implies that $\mu' = \mu''$, and in particular $\mu'(w) = \mu''(w)$, as desired.

Therefore, for the remainder of the proof, let us assume that $\mu'(m) \notin X$. Since $\>'_m$ is derived from $\>_m$ via a push up operation on the set $X$, and since $\mu'(m) \neq \mu(m)$ by assumption, we have that $\mu(m) \>_m \mu'(m)$. By \Cref{prop:Permuting_Falsified_Lists}, we can assume, without loss of generality, that $\mu'(m)$ is positioned immediately below $\mu(m)$ in the list $\>_m$. By construction, the same property also holds for the lists $\>'_m$ and $\>''_m$. Thus, $\>''_m$ can be considered as being obtained from $\>'_m$ via a push down operation on the set $Y$ (note that this operation is defined with respect to $\mu'(m)$). By \Cref{lem:PushDown_Worse_For_Women}, we have $\mu'(w) \succeq_w \mu''(w)$, as desired.
\end{proof}

\subsection{Proof of Lemma~\ref{lem:PushUpOneWoman}}

\PushUpOneWoman*

\paragraph{Preliminaries for the proof of \Cref{lem:PushUpOneWoman}:} Let $\>$ be a true preference profile. Given an accomplice $m$ and a set of women $X = \{w_a, w_b, w_c, \dots \}$ such that $\mu(m) \>_m w_x$ for all $w_x \in X$, we define $\>^a_m$ as the list derived from $\>_m$ where $m$ pushes up $w_a$, $\>^b_m$ as the list derived from $\>_m$ where $m$ pushes up $w_b$, etc., and $\>^X_m$ as the list where $m$ pushes up all women in $X$; the corresponding profiles are $\>^a, \>^b, \>^c, \dots, \>^X$. Additionally, let $\mu^a \coloneqq \DA(\>^a), \mu^b \coloneqq  \DA(\>^b), \mu^c \coloneqq  \DA(\>^c), \dots, \mu^X \coloneqq \DA(\>^X)$. Note that the placement of women being pushed up above $\mu(m)$ in $\>^a_m, \>^b_m, \>^c_m, \dots, \>^X_m$ does not affect $\mu^a, \mu^b, \mu^c, \dots, \mu^X$ by \Cref{prop:Permuting_Falsified_Lists}.

\begin{example}
Suppose $X = \{w_a, w_b\}$. \Cref{fig:inconspicuous-profiles} illustrates the possible configurations of $m$'s preference lists under the profiles $\>$, $\>^a$, $\>^b$, and $\>^X$.

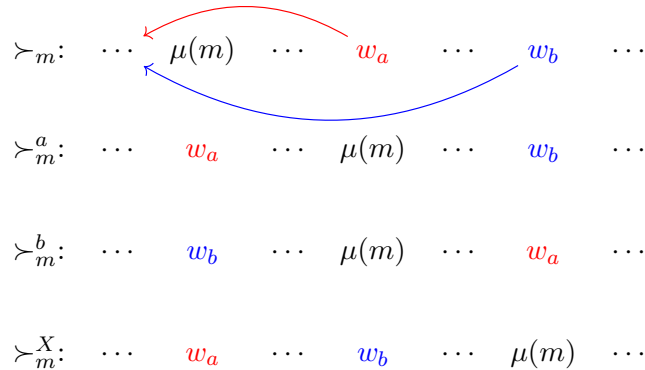
\begin{figure}[!htb]
\centering
\begin{minipage}[t]{.475\textwidth}
    \centering
    \begin{tikzcd}[cramped, column sep=tiny]
        \>_m: & |[alias=dots]| \cdots & \mu(m) & \cdots & \textcolor{red}{w_a} \arrow[to=dots, bend right=30, red] & \cdots & \textcolor{blue}{w_b} \arrow[to=dots, bend left=30, blue] & \cdots\\
        \>_m^a: & \cdots & \textcolor{red}{w_a} & \cdots & \mu(m) & \cdots & \textcolor{blue}{w_b} & \cdots\\
        \>_m^b: & \cdots & \textcolor{blue}{w_b} & \cdots & \mu(m) & \cdots & \textcolor{red}{w_a} & \cdots\\
        \>_m^X: & \cdots & \textcolor{red}{w_a} & \cdots & \textcolor{blue}{w_b} & \cdots & \mu(m) & \cdots
    \end{tikzcd}
\end{minipage}
\caption{An illustration of man $m$'s preference lists under the profiles $\>$, $\>^a$, $\>^b$, and $\>^X$ when $X = \{ w_a, w_b \}$.}
\label{fig:inconspicuous-profiles}
\end{figure}
\end{example}

It can be shown that $m$ does not incur regret under any of the profiles $\>^a, \>^b,$ and so on (\Cref{lem:SingleAgentPushUpNoRegret}).

\begin{restatable}{lemma}{SingleAgentPushUpNoRegret}
Let $X = \{w_a,w_b,w_c,\dots\}$ be an arbitrary finite set of women that the accomplice $m$ can push up without incurring regret. Then, for every $w_x \in X$, $m$ does not incur regret under the matching $\mu^x \coloneqq \DA(\>^x)$, where $\>^x \coloneqq \{\>_{-m},\>_m^{w_x\uparrow}\}$.
\label{lem:SingleAgentPushUpNoRegret}
\end{restatable}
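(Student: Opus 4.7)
}
The plan is to represent the single-push-up profile $\>^x$ as the result of first pushing up the full set $X$ and then pushing down $Y \coloneqq X \setminus \{w_x\}$ in $m$'s list, and then to invoke two structural results established in \Cref{subsec:Structural_Observations}: \Cref{prop:PushDown} (any push-down in $m$'s list, taken relative to his DA-partner, leaves that partner unchanged) and \Cref{prop:Permuting_Falsified_Lists} (permuting the two halves of $m$'s list around his DA-partner does not change the DA outcome). The no-regret hypothesis on $\>^X$ is precisely what pins $m$'s DA-partner under $\>^X$ at $\mu(m)$, which is the reference point needed for the push-down step.

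Concretely, I would fix an arbitrary $w_x \in X$ and set $\mu^X \coloneqq \DA(\>^X)$, so that $\mu^X(m) = \mu(m)$ by hypothesis. Because the push-up operation applies to women below $\mu(m)$, we have $X \subseteq \>_m^R$, and hence $Y \subseteq X$ lies above $\mu(m)$ in the list $\>^X_m$. Let $\>^\ast$ denote the profile obtained from $\>^X$ by pushing $Y$ down in $m$'s list relative to $\mu^X(m) = \mu(m)$; then \Cref{prop:PushDown} yields $\DA(\>^\ast)(m) = \mu(m)$. Using $Y \subseteq X \subseteq \>_m^R$ and $\>_m^L \cap \>_m^R = \emptyset$, a short computation shows that the set of women above $\mu(m)$ in $\>^\ast_m$ is $(\>_m^L \cup X) \setminus Y = \>_m^L \cup \{w_x\}$, and the set below is $(\>_m^R \setminus X) \cup Y = \>_m^R \setminus \{w_x\}$; these are exactly the corresponding sets in $\>^x_m$. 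Hence $\>^x$ and $\>^\ast$ differ only by a permutation within each of the two halves of $m$'s list around $\mu(m)$, so \Cref{prop:Permuting_Falsified_Lists} applied to $\>^\ast$ gives $\DA(\>^x) = \DA(\>^\ast)$, and therefore $\mu^x(m) = \mu(m)$. Since $w_x$ was arbitrary, the claim follows for every $w_x \in X$.

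I do not anticipate a genuine obstacle beyond the bookkeeping just described. The one subtlety worth flagging is that \Cref{prop:PushDown} is stated with the \emph{true} profile as the anchor, whereas I invoke it with $\>^X$ as the anchor; this is legitimate because the proposition only requires that the set being pushed down sit above $m$'s DA-partner in the profile one starts from, and $\mu^X(m) = \mu(m)$ is exactly this condition.
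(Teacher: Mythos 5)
Your proof is correct, but it takes a genuinely different route from the paper's. The paper argues by contradiction: assuming $m$ incurs regret under $\>^a$, it invokes \Cref{lem:accomplice-match-after-regret-manipulation} (a with-regret result proved later, stating that a regret-inducing push-up must match $m$ to one of the pushed-up women) to conclude $\mu^a(m) = w_a$, then views the passage from $\>^a_m$ to $\>^X_m$ as a with-regret push-up of $X \setminus \{w_a\}$ and applies the same lemma again to derive $\mu^X(m) \in X \setminus \{w_a\}$, contradicting the no-regret hypothesis on $X$. You instead give a direct construction: realize $\>^x$ as $\>^X$ followed by a push-down of $X \setminus \{w_x\}$ anchored at $\mu^X(m) = \mu(m)$, so that \Cref{prop:PushDown} preserves $m$'s partner, and then \Cref{prop:Permuting_Falsified_Lists} identifies the resulting profile with $\>^x$ up to within-half permutations. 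Your set computation $(\>_m^L \cup X)\setminus Y = \>_m^L \cup \{w_x\}$ is right, and your flagged subtlety—that \Cref{prop:PushDown} applies with $\>^X$ as the anchor because the proposition is profile-agnostic and only needs the pushed-down set to sit above $m$'s DA-partner in the starting profile—is handled correctly. What your approach buys is self-containment: it uses only the two structural propositions from \Cref{subsec:Structural_Observations} and avoids the forward reference to the with-regret machinery, and it is direct rather than by contradiction. What the paper's approach buys is a template that generalizes: the same contradiction pattern via \Cref{lem:accomplice-match-after-regret-manipulation} is reused in the with-regret analysis (e.g., \Cref{lem:No_Regret_Profiles}), where your push-down argument would no longer pin $m$'s partner at $\mu(m)$.
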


\begin{proof}
We will prove the lemma for the fixed profile $\>^a$ (an identical argument works for other profiles).

Suppose, for contradiction, that $m$ incurs regret in the profile $\>^a$. That is, $\mu(m) \>_m \mu^a(m)$ where $\mu^a \coloneqq \DA(\>^a)$. From \Cref{lem:accomplice-match-after-regret-manipulation} (see \Cref{subsec:With_Regret_Inconspicuous_Proof}), we get that $\mu^a(m) = w_a$. Further, using \Cref{prop:Permuting_Falsified_Lists}, we can assume, without loss of generality, that $w_a$ is positioned immediately below $\mu(m)$ in the true list $\>_m$, and immediately above it in the manipulated lists $\>_m^{a}$, as well as $\>_m^{X}$. This implies that the transition from $\>_m^{a}$ to $\>_m^{X}$ is a \emph{with-regret} push up operation involving the promotion of $X \setminus \{w_a\}$ (since, according to the list $\>_m^{a}$, the new partner $\mu(m)$ is strictly worse than $w_a$). Again, from \Cref{lem:accomplice-match-after-regret-manipulation}, it follows that $\mu^X(m) \in X \setminus \{w_a\}$. By the no-regret assumption for the set $X$, we know that $\mu^X(m) = \mu(m)$. This, however, is a contradiction since all women in $X \setminus \{w_a\}$ are placed below $\mu(m)$ in the list $\>_m$, and hence must be different from it.
\end{proof}

Given any profile $\>$, let $P_{\>}$ denote the set of all proposals that occur in the execution of the \DA{} algorithm on $\>$. Formally, for any man $m_i \in M$ and woman $w_j \in W$, the ordered pair $(m_i,w_j)$ belongs to the set $P_{\>}$ if $m_i$ proposes to $w_j$ during the execution of \DA{} algorithm on the profile $\>$.

\begin{restatable}{lemma}{PushUpProposalsContainment}
Let $X = \{w_a,w_b,w_c,\dots\}$ be an arbitrary finite set of women that the accomplice $m$ can push up without incurring regret. Then, any proposal that occurs under $\>^X$ also occurs under at least one of the profiles $\>^{a}, \>^{b}, \>^{c}, \dots$.
\label{lem:PushUpPropsalsContainment}
\end{restatable}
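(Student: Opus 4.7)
The plan is to leverage the standard characterization of DA proposals and then reduce the claim to a lattice comparison. For any profile $\>'$ with $\mu' \coloneqq \DA(\>')$, each man $m_i$ proposes precisely to the women ranked weakly above $\mu'(m_i)$ in $\>'_{m_i}$. Thus, to show a given $(m_i, w_j) \in P_{\>^X}$ also belongs to some $P_{\>^x}$, it suffices to exhibit $x \in X$ satisfying $w_j \succeq_{m_i}^{x} \mu^x(m_i)$.

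For the accomplice himself ($m_i = m$), combining the no-regret assumption with \Cref{lem:SingleAgentPushUpNoRegret} gives $\mu(m) = \mu^X(m) = \mu^x(m)$ for every $x \in X$, so the women weakly above $\mu(m)$ in $\>_m^X$ are exactly those in $\>_m^L \cup X \cup \{\mu(m)\}$. If $w_j \in \>_m^L \cup \{\mu(m)\}$ then $w_j$ sits weakly above $\mu(m)$ in every $\>_m^x$, and if $w_j = w_x \in X$ then by construction $w_x$ is above $\mu(m)$ in $\>_m^x$; either way, $(m, w_j) \in P_{\>^x}$ for a suitable $x$.

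For $m_i \neq m$ the list $\>_{m_i}$ is unchanged across $\>, \>^X,$ and every $\>^x$, so it suffices to find $x \in X$ with $\mu^x(m_i) \preceq_{m_i} \mu^X(m_i)$. Let $\mu^\wedge \coloneqq \mu^a \wedge \mu^b \wedge \cdots$ be the iterated meet (with respect to $\>$) of the single-push-up matchings. Applying \Cref{thm:No-regret_StableLatticeContainment} to each singleton $\{w_x\}$ (the no-regret hypothesis transfers to individual women by \Cref{lem:SingleAgentPushUpNoRegret}) yields $\mu^x \in S_{\>}$, and Conway's theorem (\Cref{prop:Lattice}) gives $\mu^\wedge \in S_{\>}$. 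The heart of the argument is to show $\mu^\wedge \in S_{\>^X}$ as well. Once that is done, the men-optimality of $\mu^X$ within $S_{\>^X}$ (\Cref{prop:DA_stable_MenOptimal_WomenPessimal}) delivers $\mu^X \succeq_M \mu^\wedge$, and since $\mu^\wedge(m_i)$ is by construction $m_i$'s least preferred partner among $\{\mu^x(m_i)\}_{x \in X}$, some $x^* \in X$ witnesses $\mu^{x^*}(m_i) \preceq_{m_i} \mu^X(m_i)$, completing the reduction.

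The main obstacle is proving $\mu^\wedge \in S_{\>^X}$. Since $\>^X$ and $\>$ differ only in $m$'s preferences, repeating the argument of \Cref{prop:mStabilityExtension} forces any blocking pair of $\mu^\wedge$ under $\>^X$ to involve $m$, say $(m, w')$. Invoking \Cref{lem:SingleAgentPushUpNoRegret} gives $\mu^\wedge(m) = \mu(m)$, and the blocking condition $w' \succ_m^X \mu(m)$ places $w'$ in $\>_m^L \cup X$. If $w' \in \>_m^L$, then $w' \succ_m \mu^\wedge(m)$ already holds under the true profile, yielding a blocking pair of $\mu^\wedge$ under $\>$ and contradicting $\mu^\wedge \in S_{\>}$. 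If instead $w' = w_x \in X$, trace DA on $\>^x$: since $w_x$ sits above $\mu(m)$ in $\>_m^x$ and $m$ ultimately matches with $\mu(m)$, $m$ must have proposed to $w_x$ during that execution, forcing $\mu^x(w_x) \succeq_{w_x} m$; however, the women-optimality property of the meet gives $\mu^\wedge(w_x) \succeq_{w_x} \mu^x(w_x)$, so the blocking hypothesis $m \succ_{w_x} \mu^\wedge(w_x)$ yields $m \succ_{w_x} \mu^x(w_x) \succeq_{w_x} m$, a contradiction.
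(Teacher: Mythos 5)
Your proof is correct, but it takes a genuinely different route from the paper's. The paper argues by contradiction on the \emph{first} proposal in the \DA{} execution on $\>^X$ that lies outside $P_{\>^a} \cup P_{\>^b} \cup \cdots$: it traces the rejection chain to produce two women/men with crossing preferences between two of the matchings $\mu^a, \mu^b$, and then shows that the \emph{join} $\mu^a \vee \mu^b$ would fail to be a valid matching, contradicting \Cref{prop:Lattice}. You instead work globally: you characterize each man's proposal set by his final match, form the iterated \emph{meet} $\mu^{\wedge}$ of all the single-push-up matchings, prove $\mu^{\wedge} \in S_{\>^X}$ by a short $m$-stability case analysis, and then invoke men-optimality of $\mu^X$ in $S_{\>^X}$ to conclude $\mu^X \succeq_M \mu^{\wedge}$, from which the containment of proposal sets falls out immediately. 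Your route buys a cleaner and arguably stronger intermediate statement (the explicit comparison $\mu^X \succeq_M \mu^a \wedge \mu^b \wedge \cdots$) and avoids the delicate bookkeeping of the execution order of \DA{}. The paper's execution-level technique, however, is the one that survives in the with-regret setting (\Cref{lem:NoRegretProfileSubset_Helper,lem:PushUpPropsalsContainment_Regret}), where the stability of the intermediate matchings with respect to the true profile---which your lattice argument leans on at every step---is no longer available. Both proofs ultimately rest on Conway's lattice theorem; yours uses it constructively on all the matchings at once, the paper's uses it destructively on a single offending pair.
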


\begin{proof}
Suppose, for contradiction, that $(m_1, w_1)$ is the \emph{first} proposal to occur during the \DA{} execution on $\>^X$ such that it is not an element of $P_{\>^a} \cup P_{\>^b} \cup P_{\>^c} \cup \dots$. Note that the proposals made by the accomplice $m$ in $P_{\>^X}$ are only to the women above and including $\mu(m)$ in $\>_m$ as well as the women in $\{w_a, w_b, w_c, \dots \}$, all of whom he proposes to in $P_{\>^a} \cup P_{\>^b} \cup P_{\>^c} \cup \dots$. Thus, $m_1 \neq m$, implying that $m_1$ is a truthful agent.

Since men propose in decreasing order of their preference, $m_1$ must have been rejected by a woman, say $w_2$, whom he ranks immediately above $w_1$ in $\>_{m_1}$, before proposing to $w_1$ under $\>^X$. Further, $m_1$ must propose to $w_2$ under at least one of the profiles $\>^a,\>^b,\>^c,\dots$, and must not be rejected by her (otherwise, he will propose to $w_1$). Thus, $m_1$ must be matched with $w_2$ under at least one of the matchings $\mu^a,\mu^b,\mu^c,\dots$. Without loss of generality, let us assume that $\mu^a(m_1) =w_2$.

Since $w_2$ is not matched with $m_1$ under $\mu^X$, she must have received a more preferable proposal from some man, say $m_2$; thus, $m_2 \>_{w_2} m_1$. Due to our assumption that $(m_1, w_1)$ is the first proposal during the \DA{} execution on $\>^X$ that does not occur in $P_{\>^a} \cup P_{\>^b} \cup P_{\>^c} \cup \dots$, $m_2$ must have proposed to $w_2$ under at least one of the profiles $\>^a, \>^b, \>^c, \dots$. Without loss of generality, he proposes under $\>^b$. Since women match with their best proposers, $\mu^b(w_2) \succeq_{w_2} m_2$. This, combined with $m_2 \succ_{w_2} m_1 = \mu^a(w_2)$, we get $\mu^b(w_2) \succ_{w_2} \mu^a(w_2)$.

We infer that $m_1$ does not propose to $w_2$ under $\>^b$, since otherwise $w_2$ (eventually) rejects $m_1$ causing him to propose to $w_1$, which would contradict our assumption that $(m_1, w_1) \notin P_{\>^a} \cup P_{\>^b} \cup P_{\>^c} \cup \dots$. Thus, $\mu^b(m_1) \>_{m_1} w_2 = \mu^a(m_1)$. 

Notice that the profiles $\>^a$ and $\>^b$ are obtained from the true preference profile $\>$ by no-regret push up operations. Therefore, from \Cref{lem:SingleAgentPushUpNoRegret}, we get that the matchings $\mu^a$ and $\mu^b$ are stable with respect to the true preference profile (i.e., $\mu^a,\mu^b \in S_{\>}$). 

Now consider the \emph{join} $\mu_{\vee} \coloneqq \mu^a \vee \mu^b$ of the two matchings with respect to the true preference profile $\>$, wherein each man is associated with his more preferred partner between $\mu^a$ and $\mu^b$, and each woman is associated with her less preferred partner (refer to \Cref{subsec:Additional_Preliminaries} for the relevant definitions). Thus, $m_1$ is associated with $\mu^b(m_1) \neq w_2$ and $w_2$ is associated with $\mu^a(w_1) = m_1$. The resulting assignment $\mu_{\vee}$ is not a valid matching, which contradicts the stable lattice result (\Cref{prop:Lattice}).
\end{proof}

We are now ready to prove \Cref{lem:PushUpOneWoman}.

\PushUpOneWoman*

\begin{proof}
Let $m_X \coloneqq \mu^X(w)$, where $\mu^X \coloneqq \DA(\>^X)$. From \Cref{lem:SingleAgentPushUpNoRegret}, we know that the accomplice $m$ does not incur regret under any of the matchings $\mu^a, \mu^b, \mu^c, \dots$, and therefore $\mu^a(m) = \mu^b(m) = \mu^c(m) = \dots = \mu(m)$. If $m_X = m$, then the lemma follows trivially since $m$ is matched with his $\mu^X$-partner, namely $w$, under each of the matchings $\mu^a, \mu^b, \mu^c, \dots$, and therefore the $\mu^X$-partner of $w$ can also be achieved under any of the profiles $\>^a, \>^b, \>^c, \dots$. Thus, for the remainder of the proof, we will assume that $m_X \neq m$; in other words, $m_X$ is a truthful agent.

Suppose, for contradiction, that $m_X$ is not matched to $w$ under any of the profiles $\>^a, \>^b, \>^c, \dots$. Starting from the profile $\>^a$, we can obtain the profile $\>^X$ via a no-regret push up operation of the set $X \setminus \{w_a\}$ in the list $\>_m^{a}$ of the accomplice. Therefore, from \Cref{cor:PushUp}, we have that $\mu^a(m_X) \succeq_{m_X} \mu^X(m_X) = w$, where ${\mu^a \coloneqq \DA(\>^a)}$. Since $\mu^a(m_X) \neq w$ by the contradiction assumption, we further obtain that $\mu^a(m_X) \>_{m_X} w$. By a similar argument, we get that the women $\mu^b(m_X), \mu^c(m_X), \dots$ are also placed above $w$ in the list $\>_{m_X}$.

Since $m_X$ is matched with $w$ under $\mu^X$, he must propose to $w$ during the execution of \DA{} algorithm on $\>^X$, i.e., $(m_X,w) \in P_{\>^X}$. From \Cref{lem:PushUpPropsalsContainment}, we have that ${(m_X,w) \in P_{\>^a} \cup P_{\>^b} \cup P_{\>^c} \dots}$. Since $m_X$ is a truthful agent, his preference list remains unchanged, and therefore he ranks the women $\mu^a(m), \mu^b(m_X), \mu^c(m_X), \dots$ strictly above $w$ under each of the profiles $\>^a, \>^b, \>^c, \dots$. This, however, poses a contradiction since men propose in decreasing order of their preference.
\end{proof}

\subsection{Proof of Corollary~\ref{cor:Optimal_No-regret_PolynomialTime}}
\PolynomialTimeNoRegret*
\begin{proof}(sketch)
The algorithm simply promotes each woman that is below $\mu(m)$ in the accomplice's true preference list to some position above $\mu(m)$ and checks the \DA{} outcome. The total number of such checks is $\O(n)$, and for each check, running the \DA{} algorithm takes $\O(n^2)$ time.
\end{proof}

\subsection{Proof of Corollary~\ref{cor:No-regret_Inconspicuous_Stable}}
\InconspicuousStable*
\begin{proof}(sketch)
An inconspicuous manipulation is a special case of a push up operation, which was shown in \Cref{thm:No-regret_StableLatticeContainment} to be stability preserving.
\end{proof}

\section{Omitted Material from Section~\ref{sec:with-regret}}

\subsection{Proof of Theorem~\ref{thm:Inconspicuous_WithRegret}}\label{subsec:With_Regret_Inconspicuous_Proof}

\InconspicuousWithRegret*

Recall from \Cref{lem:PushUpOneWoman} that the match for the manipulating woman $w$ obtained by a no-regret push up operation of a set $X \subseteq W$ by the accomplice can also be achieved by pushing up exactly one woman in $X$. The following result (\Cref{lem:PushUpOneWomanWithRegret}) establishes the with-regret analogue of this result.

\PushUpOneWomanWithRegret*

\paragraph{Preliminaries for the proof of \Cref{lem:PushUpOneWomanWithRegret}:} The relevant notation is similar to that used in the proof of \Cref{lem:PushUpOneWoman}, which we recall below for the sake of completeness.

Let $\>$ be a true preference profile. Given an accomplice $m$ and a set of women $X = \{w_a, w_b, w_c, \dots \}$ such that $\mu(m) \>_m w_x$ for all $w_x \in X$, we define $\>^a_m$ as the list derived from $\>_m$ where $m$ pushes up $w_a$, $\>^b_m$ as the list derived from $\>_m$ where $m$ pushes up $w_b$, etc., and $\>^X_m$ as the list where $m$ pushes up all women in $X$; the corresponding profiles are $\>^a, \>^b, \>^c, \dots, \>^X$. Additionally, let ${\mu^a \coloneqq \DA(\>^a)}$, ${\mu^b \coloneqq \DA(\>^b)}$, ${\mu^c \coloneqq \DA(\>^c)}$, $\dots$, ${\mu^X \coloneqq \DA(\>^X)}$.

We will now show that under a with-regret push up operation, the \DA{} algorithm matches the  accomplice to one of the women he pushes up (\Cref{lem:accomplice-match-after-regret-manipulation}). In particular, if the accomplice promotes only one woman and incurs regret, then he must be matched to her in the resulting matching.

\begin{restatable}{lemma}{AccompliceMatchAfterWithRegretManipulation}
Let $\>$ be a true preference profile and ${\mu \coloneqq \DA(\>)}$. For any fixed man $m$ and any subset $X \subseteq W$ of women, let $\>' \coloneqq \{ \>_{-m}, \>_m^{X \uparrow} \}$ denote the preference profile after pushing up the women in $X$ in $\>_m$ and let ${\mu' \coloneqq \DA(\>')}$. If $m$ incurs regret (i.e., if $\mu(m) \>_{m} \mu'(m)$), then $\mu'(m) \in X$.
\label{lem:accomplice-match-after-regret-manipulation}
\end{restatable}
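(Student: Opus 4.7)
The plan is to derive the conclusion as a direct consequence of strategyproofness of the \DA{} algorithm for the proposing side, applied with respect to the manipulated profile rather than the true one. The key observation is that even though we think of $\>'_m$ as a misreport relative to $\>_m$, we can also think of $\>_m$ as a misreport relative to $\>'_m$. Since \DA{} is strategyproof for the proposing side regardless of which profile is designated as ``true'', the relevant strategyproofness inequality must hold in the list $\>'_m$ as well.

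More concretely, first I would invoke strategyproofness of \DA{} for $m$ with respect to the profile $\>'$: if we temporarily view $\>'_m$ as $m$'s ``true'' preference, then truthful reporting (yielding $\mu'(m)$) must be weakly preferred over the deviation to $\>_m$ (which yields $\mu(m)$). This gives $\mu'(m) \succeq'_m \mu(m)$. Next, I would identify the women that are weakly above $\mu(m)$ in the manipulated list $\>'_m = \>_m^{X\uparrow}$: by the definition of the push up operation, this set is exactly $\>_m^L \cup X \cup \{\mu(m)\}$, where $\>_m^L$ denotes the set of women $m$ truly prefers to $\mu(m)$. Hence $\mu'(m) \in \>_m^L \cup X \cup \{\mu(m)\}$.

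Finally, I would use the regret hypothesis $\mu(m) \>_m \mu'(m)$ to rule out the first and third possibilities: regret means $\mu'(m) \neq \mu(m)$, and any woman in $\>_m^L$ is strictly preferred to $\mu(m)$ under $\>_m$, which contradicts $\mu(m) \>_m \mu'(m)$. The only remaining option is $\mu'(m) \in X$, as required. There is no substantial obstacle in this argument; the only subtlety is the (minor) conceptual pivot of applying strategyproofness with the roles of ``true'' and ``reported'' preferences swapped, which is legitimate because \Cref{prop:DA_stable_MenOptimal_WomenPessimal} and the strategyproofness of \DA{} for the proposing side are properties of the algorithm on any given input profile.
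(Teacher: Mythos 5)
Your proof is correct, but it takes a genuinely different route from the paper's. The paper argues by contradiction: it first uses strategyproofness in the usual direction to rule out $\mu'(m)$ lying strictly above $\mu(m)$ in the \emph{true} list $\>_m$, then assumes $\mu'(m) \notin X$ and $\mu'(m) \neq \mu(m)$, so that $y \coloneqq \mu'(m)$ sits below $\mu(m)$ in both lists; it then applies \Cref{prop:Permuting_Falsified_Lists} (twice, via two ``without loss of generality'' repositionings) to conclude that $\>_m$ and $\>_m^{X\uparrow}$ differ only by a permutation of the women above $y$, forcing $\mu'(m) = \mu(m)$ and a contradiction. Your argument instead applies strategyproofness with the roles of true and reported lists swapped, obtaining $\mu'(m) \succeq'_m \mu(m)$ directly, which immediately localizes $\mu'(m)$ to $\>_m^L \cup X \cup \{\mu(m)\}$; the regret hypothesis $\mu(m) \>_m \mu'(m)$ then eliminates $\>_m^L$ and $\mu(m)$, leaving $\mu'(m) \in X$. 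The pivot is legitimate: strategyproofness of \DA{} for the proposing side is a statement quantified over all input profiles and all unilateral deviations, so it applies equally with $\>'$ as the base profile and $\>_m$ as the deviation. Your version is shorter, direct rather than by contradiction, and dispenses entirely with \Cref{prop:Permuting_Falsified_Lists}; the paper's version, while more intricate, stays within the push-up/permutation toolkit it has already set up for the surrounding lemmas.
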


\begin{proof}
Suppose, for contradiction, that $\mu'(m) \notin X$. Because of \Cref{prop:Permuting_Falsified_Lists}, without loss of generality, we have that the set $X$ is placed immediately below $\mu(m)$ in the true list $\>_m$. By strategyproofness of the \DA{} algorithm, $m$ cannot be matched under $\mu'$ to any woman  who, according to his true list, is preferred over $\mu(m)$, i.e., $\mu'(m) \notin \{z \in W : z \>_m \mu(m)\}$. By the contradiction assumption, $m$ cannot be matched with any woman in $X$, and because of the with-regret assumption, $m$ cannot be matched with $\mu(m)$ either. Therefore, the woman $y \coloneqq \mu'(m)$ must be such that $\mu(m) \>'_m y$, which, by the push up assumption, implies that $\mu(m) \>_m y$.

Once again, due to \Cref{prop:Permuting_Falsified_Lists}. we can assume, without loss of generality, that $y$ is placed immediately below the set $X$ in the true list $\>_m$. This, in turn, implies that $y$ is immediately below $\mu(m)$ in the manipulated list $\>_m^{X \uparrow}$. Therefore, starting with the list $\>_m^{X \uparrow}$, one can obtain the true list $\>_m$ simply by permuting the agents that are above $y$. \Cref{prop:Permuting_Falsified_Lists} would then imply that the partner of $m$ does not change in the process, i.e., $\mu'(m) = \mu(m)$, which is a contradiction. Hence, we must have $\mu'(m) \in X$.
\end{proof}

\Cref{lem:accomplice-match-after-regret-manipulation} implies that the accomplice $m$ matches with some woman in $X$ under the matching $\mu^X$; say $\mu^X(m) = w_a$. We will now show that $m$ is matched with $w_a$ under the matching $\mu^a$ as well (\Cref{lem:Single_With_Regret_Profile_Match}). Notice that in light of \Cref{prop:Permuting_Falsified_Lists}, we can assume, without loss of generality, that in the lists $\>^X_m$ and $\>^a_m$, the woman $w_a$ is placed immediately above $\mu(m)$. That is, $w_a$ is the least-preferred woman in $X$ according to the list $\>^X_m$.

\begin{restatable}{lemma}{Single_With_Regret_Profile_Match}
Let $w_a \in X$ be the woman who the accomplice $m$ matches with under $\mu^X$. Then, $m$ also matches with $w_a$ under $\mu^a$, where $\mu^a \coloneqq \DA(\>^a)$ and $\>^a \coloneqq \{ \>_{-m}, \>_m^{w_a \uparrow} \}$.
\label{lem:Single_With_Regret_Profile_Match}
\end{restatable}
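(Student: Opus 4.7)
The plan is to exhibit $\succ^a$ as, up to a benign permutation, the result of a push-down operation performed on $\succ^X$, so that \Cref{prop:PushDown}---a structural property of DA that does not genuinely require its input to be the ``true'' profile---forces $\mu^a(m) = \mu^X(m) = w_a$. Following the convention established in the paragraph preceding the lemma, take $w_a$ to lie immediately above $\mu(m)$ in both $\succ^X_m$ and $\succ^a_m$. Under this convention, the set of women ranked above $w_a$ in $\succ^X_m$ is $\succ_m^L \cup (X \setminus \{w_a\})$ and the set ranked below $w_a$ is $\{\mu(m)\} \cup (\succ_m^R \setminus X)$; in $\succ^a_m$ the corresponding sets are $\succ_m^L$ and $\{\mu(m)\} \cup (\succ_m^R \setminus X) \cup (X \setminus \{w_a\})$, using $w_a \in X \subseteq \succ_m^R$.

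Next, define an intermediate list $\succ''_m$ by taking $\succ^X_m$ and pushing the set $X \setminus \{w_a\}$ down---from above $w_a$ (its current location in $\succ^X_m$) to below $w_a$---and let $\succ'' \coloneqq \{\succ_{-m}, \succ''_m\}$. Because $\mu^X(m) = w_a$, this is a bona fide push-down with respect to $m$'s match under the profile $\succ^X$. Invoking \Cref{prop:PushDown} with $\succ^X$ playing the role of the input profile yields $\DA(\succ'')(m) = \mu^X(m) = w_a$.

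To conclude, compare $\succ''_m$ with $\succ^a_m$. By the set computations above, the two lists contain exactly $\succ_m^L$ above $w_a$ and exactly $\{\mu(m)\} \cup (\succ_m^R \setminus X) \cup (X \setminus \{w_a\})$ below $w_a$, so $\succ^a_m$ is obtained from $\succ''_m$ purely by permutations within the above-$w_a$ and below-$w_a$ blocks. Since $\DA(\succ'')(m) = w_a$, \Cref{prop:Permuting_Falsified_Lists} applied to $\succ''$ then gives $\DA(\succ^a) = \DA(\succ'')$, and in particular $\mu^a(m) = w_a$, as required. The main obstacle is conceptual rather than computational: one must identify the correct intermediate profile $\succ''$ and recognize that \Cref{prop:PushDown,prop:Permuting_Falsified_Lists} apply equally well to manipulated profiles as to ``true'' ones, so that the chain $\succ^X \longrightarrow \succ'' \longrightarrow \succ^a$ (push-down followed by benign permutation) closes the argument without further case analysis.
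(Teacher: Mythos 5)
Your proof is correct and follows essentially the same route as the paper: the paper likewise obtains $\succ^a$ from $\succ^X$ by pushing down $X \setminus \{w_a\}$ (after normalizing $w_a$ to sit immediately above $\mu(m)$ via \Cref{prop:Permuting_Falsified_Lists}) and then invokes \Cref{prop:PushDown} with $\succ^X$ as the input profile. Your version merely makes explicit the final within-block permutation step that the paper absorbs into its ``without loss of generality'' convention.
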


\begin{proof}
Starting with the profile $\>^X$, we can obtain $\>^a$ by pushing down all women in $X \setminus \{w_a\}$ (recall from the aforementioned observation that $w_a$ is the least-preferred woman in $X$ according to the list $\>^X_m$). Then, from \Cref{prop:PushDown}, we get that $\mu^a(m) = \mu^X(m)$.
\end{proof}

From \Cref{lem:Single_With_Regret_Profile_Match}, it follows that $\>^a$ is a with-regret profile. By contrast, \Cref{lem:No_Regret_Profiles} will show that the rest of the profiles $\>^b, \>^c \dots$ do not cause regret for the accomplice.

\begin{restatable}{lemma}{No_Regret_Profiles}
Let $w_a \in X$ be the woman who the accomplice $m$ matches with under $\mu^X$. Then, for any $w_z\in X \setminus \{w_a\}$, $m$ does not incur regret under the profile $\>^z \coloneqq \{ \>_{-m}, \>_m^{w_z \uparrow} \}$.
\label{lem:No_Regret_Profiles}
\end{restatable}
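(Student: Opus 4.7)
The plan is to derive a contradiction by showing that an apparent with-regret push up of $\{w_z\}$ can be silently converted into a push up of $X$ with the same \DA{} match for $m$. Suppose for contradiction that there exists $w_z \in X \setminus \{w_a\}$ such that the accomplice $m$ incurs regret under $\>^z$. Since $\>^z$ is obtained from $\>$ by pushing up the singleton $\{w_z\}$, \Cref{lem:accomplice-match-after-regret-manipulation} forces $\mu^z(m) = w_z$.

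Since $\mu^z(m) = w_z$, I apply \Cref{prop:Permuting_Falsified_Lists} to the profile $\>^z$ with $w_z$ as the \DA{} anchor: any reordering of the agents sitting below $w_z$ in $\>^z_m$ preserves the matching $\mu^z$. Crucially, the below-$w_z$ block of $\>^z_m$ contains both $\mu(m)$ and all of $X \setminus \{w_z\}$ (the latter because $X \subseteq \>^R_m$ by assumption, hence $X \setminus \{w_z\} \subseteq \>^R_m \setminus \{w_z\}$). I shuffle the below-$w_z$ block so that the resulting list $\tilde{\>}^z_m$ has exactly $\>^L_m \cup X$ above $\mu(m)$ and exactly $\>^R_m \setminus X$ below $\mu(m)$, explicitly by placing $X \setminus \{w_z\}$ (together with any elements of $\>^L_m$ that originally sat below $w_z$) immediately after $w_z$, then $\mu(m)$, then $\>^R_m \setminus X$. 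By \Cref{prop:Permuting_Falsified_Lists}, $\DA(\tilde{\>}^z)(m) = w_z$.

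Now $\tilde{\>}^z_m$ has the syntactic form $(\>^L_m \cup X,\, \mu(m),\, \>^R_m \setminus X)$ of the push up of $X$ at $\>_m$. Invoking the order-invariance of push up orderings from the preamble (another consequence of \Cref{prop:Permuting_Falsified_Lists}), the \DA{} match of $m$ under $\>^X$ coincides with the \DA{} match of $m$ under $\tilde{\>}^z$, so $\mu^X(m) = w_z$. But $w_z \in X \setminus \{w_a\}$ implies $w_z \neq w_a$, contradicting the assumption that $\mu^X(m) = w_a$. This contradiction establishes the lemma.

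The delicate step is juggling two uses of \Cref{prop:Permuting_Falsified_Lists} with different anchors --- first the with-regret match $w_z$ (to rearrange inside $\>^z_m$), then $\mu(m)$ (to identify the rearranged list with a canonical push up of $X$) --- and carefully observing that $X \setminus \{w_z\} \subseteq \>^R_m$ always lies below $w_z$ in $\>^z_m$ and is therefore freely permutable. Overlooking which \DA{} match is anchoring each permutation is the main pitfall; once the anchors are tracked correctly, the contradiction is immediate.
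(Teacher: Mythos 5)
Your opening moves are fine: assuming regret under $\>^z$ and invoking \Cref{lem:accomplice-match-after-regret-manipulation} to get $\mu^z(m)=w_z$ is exactly right, and your first application of \Cref{prop:Permuting_Falsified_Lists} (permuting the block of $\>^z_m$ \emph{below} the anchor $w_z$ to produce $\tilde{\>}^z_m$ with top block $\>^L_m\cup X$) is a legitimate anchored permutation. The proof breaks at the next step, where you identify $\DA(\tilde{\>}^z)(m)$ with $\mu^X(m)$ on the grounds that both lists are ``push ups of $X$.'' \Cref{prop:Permuting_Falsified_Lists} only licenses permutations that keep the above-anchor and below-anchor blocks intact, and your two lists have \emph{different} anchors ($w_z$ for $\tilde{\>}^z$, $w_a$ for $\>^X$), both sitting inside the very block being reordered; neither list is reachable from the other by an anchored permutation. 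Worse, the invariance you need is simply false in the with-regret regime. Take $M=\{m,m_2,m_3\}$, $W=\{w_1,w_2,w_3\}$ with $\>_m\colon w_1\> w_2\> w_3$, $\>_{m_2}\colon w_2\> w_1\> w_3$, $\>_{m_3}\colon w_3\> w_1\> w_2$, $\>_{w_1}=\>_{w_2}\colon m\> m_2\> m_3$, and $\>_{w_3}\colon m\> m_3\> m_2$. Then $\DA(\>)$ matches $m$ with $w_1$, and for $X=\{w_2,w_3\}$ the ordering $w_2\> w_3\> w_1$ of the pushed-up block leaves $m$ matched with $w_2$, while the ordering $w_3\> w_2\> w_1$ leaves him matched with $w_3$. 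So ``the \DA{} match of $m$ under a push up of $X$'' is not ordering-independent once regret is involved, and the contradiction $w_z=w_a$ you derive does not follow.

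The paper's proof takes a different route: it considers the profile $\>^{\tilde X}$ obtained by pushing up $\tilde X := X\setminus\{w_a\}$, argues via strategyproofness of \DA{} (treating $\>^X$ as the ``true'' profile, under which $m$ is matched with $w_a$, canonically placed as the least preferred member of $X$ in $\>^X_m$) that this push up cannot incur regret, and then invokes \Cref{lem:SingleAgentPushUpNoRegret} to pass from the set $\tilde X$ to each singleton $w_z$. That reduction to the no-regret machinery is the idea your argument is missing; the example above shows that no amount of permutation bookkeeping with \Cref{prop:Permuting_Falsified_Lists} alone can substitute for it (and indeed that the ordering conventions in this part of the paper have to be handled with care).
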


\begin{proof}
Let $\tilde{X} \coloneqq X \setminus \{w_a\}$, and let $\>^{\tilde{X}} \coloneqq \{ \>_{-m}, \>_m^{\bar{X} \uparrow} \}$ be the profile where $m$ pushes up all women in $\tilde{X}$ starting from the true list $\>_m$. 

We claim that $m$ must match with the woman $\mu(m)$ under the matching $\mu^{\tilde{X}} \coloneqq \DA(\>^{\tilde{X}})$. Indeed, if that is not the case, then promoting $\tilde{X}$ is a with-regret push up operation. Then, from \Cref{lem:accomplice-match-after-regret-manipulation}, the man $m$ must be matched with some woman in $\tilde{X}$ under $\mu^{\tilde{X}}$. This, however, would contradict the strategyproofness of \DA{} algorithm, as $m$ is able to strictly improve in going from the ``true'' list $\>^X_m$ (where he is matched with $w_a$, who is the least-preferred woman in $X$ according to the list $\>^X_m$) to the ``manipulated'' list $\>^{\tilde{X}}$ (where his partner is some woman in $\tilde{X}$).

Thus, $m$ must be matched with the woman $\mu(m)$ under $\mu^{\tilde{X}}$, implying that promoting $\tilde{X}$ is a no-regret push up operation. \Cref{lem:SingleAgentPushUpNoRegret} now implies that for every $w_z \in \tilde{X}$, $\>^z \coloneqq \{ \>_{-m}, \>_m^{w_z \uparrow} \}$ is also a no-regret profile, as desired.
\end{proof}

It can also be shown that if the accomplice $m$ pushes up all women in $X \setminus \{w_a\}$ simulataneously, then he does not incur regret (\Cref{lem:XBar_NoRegret}).

\begin{restatable}{lemma}{XBar_NoRerget}
Let $\>^{\bar{X}}$ be the profile obtained by pushing up $\bar{X} \coloneqq X \setminus \{w_a\}$ in the accomplice $m$'s true preference list. Then, $m$ does not incur regret under $\>^{\bar{X}}$ (i.e, $m$ matches with $\mu(m)$ under $\>^{\bar{X}}$). 
\label{lem:XBar_NoRegret}
\end{restatable}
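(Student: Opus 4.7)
The plan is to argue by contradiction, combining \Cref{lem:accomplice-match-after-regret-manipulation} with the strategyproofness of \DA{} for the proposing side~\citep{DF81machiavelli}, but invoked with respect to a non-standard ``truth'', namely the manipulated list $\>^X_m$ (this is legitimate because strategyproofness holds against \emph{any} fixed preference of $m$ and \emph{any} unilateral misreport).

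First I would suppose, towards a contradiction, that $m$ incurs regret under $\>^{\bar{X}}$, so $\mu(m) \>_m \mu^{\bar{X}}(m)$. Since $\>^{\bar{X}}$ is obtained from $\>$ by a push-up operation on $\bar{X}$, \Cref{lem:accomplice-match-after-regret-manipulation} forces $\mu^{\bar{X}}(m)$ to lie in $\bar{X}$; write $w_z \coloneqq \mu^{\bar{X}}(m) \in \bar{X}$. Next I would exploit the WLOG positioning established just before the lemma (enabled by \Cref{prop:Permuting_Falsified_Lists}): $w_a$ sits immediately above $\mu(m)$ in $\>^X_m$, so every woman of $\bar{X} = X \setminus \{w_a\}$ is ranked strictly above $w_a$ in $\>^X_m$. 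In particular $w_z \>^X_m w_a$.

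Now consider the hypothetical matching market in which $m$'s ``true'' preferences are $\>^X_m$ while all other agents keep their lists $\>_{-m}$. Truthful reporting on this hypothetical market gives $\DA(\>^X) = \mu^X$, matching $m$ with $w_a$; the unilateral deviation to $\>^{\bar{X}}_m$ gives $\DA(\>^{\bar{X}}) = \mu^{\bar{X}}$, matching $m$ with $w_z$. Since $w_z \>^X_m w_a$, this deviation is strictly profitable, contradicting the strategyproofness of \DA{} for the proposing side. Hence the regret assumption is untenable, and $\mu^{\bar{X}}(m) = \mu(m)$, as claimed.

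The main conceptual obstacle is the change-of-viewpoint step: one must be comfortable invoking strategyproofness with $\>^X_m$ playing the role of ``truth'', even though $\>^X_m$ is itself a manipulation of the original $\>_m$. Once that trick is articulated, the remainder reduces to a one-line application of \Cref{lem:accomplice-match-after-regret-manipulation} together with the WLOG placement of $w_a$ as the least-preferred member of $X$ according to $\>^X_m$.
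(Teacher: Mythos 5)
Your proof is correct, but it takes a different route from the paper's own proof of this lemma. The paper argues: assume $\>^{\bar{X}}$ is with-regret; then by \Cref{lem:accomplice-match-after-regret-manipulation} and \Cref{lem:Single_With_Regret_Profile_Match} (applied with $\bar{X}$ in the role of the pushed-up set) there is some $w_z \in \bar{X}$ such that the single-promotion profile $\>^z$ is also with-regret, contradicting \Cref{lem:No_Regret_Profiles}. You instead bypass both of those lemmas and argue directly: regret under $\>^{\bar{X}}$ forces $\mu^{\bar{X}}(m) = w_z \in \bar{X}$ by \Cref{lem:accomplice-match-after-regret-manipulation}, and since the WLOG placement makes $w_a$ the least-preferred member of $X$ in $\>^X_m$, the deviation from $\>^X_m$ to $\>^{\bar{X}}_m$ would strictly improve $m$'s partner (from $w_a$ to $w_z$) as measured by $\>^X_m$, violating strategyproofness of \DA{} for the proposing side. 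Your change-of-viewpoint step --- treating $\>^X_m$ as the ``truth'' --- is legitimate, since strategyproofness is a property of the mechanism that holds at every profile and against every unilateral deviation; indeed, this is exactly the trick the paper itself uses inside the proof of \Cref{lem:No_Regret_Profiles}, whose first half establishes precisely the claim of \Cref{lem:XBar_NoRegret} before specializing to single promotions. So your argument is in effect a more direct, self-contained version of what the paper proves by a detour: it trades the reduction through \Cref{lem:Single_With_Regret_Profile_Match} and \Cref{lem:No_Regret_Profiles} for a one-shot strategyproofness contradiction, and arguably exposes the redundancy in the paper's own ordering of these lemmas.
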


\begin{proof}
Suppose, for contradiction, that $\>^{\bar{X}}$ is a with-regret profile. Then, from \Cref{lem:accomplice-match-after-regret-manipulation,lem:Single_With_Regret_Profile_Match}, we have that for some $w_z \in \bar{X}$, the profile $\>^z \coloneqq \{\>_{-m},\>_m^{w_z \uparrow}\}$ is also with-regret. This, however, contradicts the implication of \Cref{lem:No_Regret_Profiles} that $\>^z$ is no-regret for every $w_z \in \bar{X}$.
\end{proof}

Recall that given any profile $\>$, $P_{\>}$ denotes the set of all proposals that occur in the execution of the \DA{} algorithm on $\>$. Formally, for any man $m_i \in M$ and woman $w_j \in W$, the ordered pair $(m_i,w_j)$ belongs to the set $P_{\>}$ if $m_i$ proposes to $w_j$ during the execution of \DA{} algorithm on the profile $\>$. Our next result (\Cref{lem:PushUpProposals}) shows that the set of proposals that occur under a true preference profile $\>$ is contained in the set of proposals that occur under a profile $\>'$ that is obtained via a no-regret push up operation on $\>$.

\begin{restatable}{lemma}{PushUpProposals}
Let $\>$ be a preference profile and ${ \mu \coloneqq \DA(\>) }$. For any fixed man $m$, let $\>' = \{ \succ_{-m}, \succ_m^{X\uparrow} \}$ and ${ \mu' = \DA(\>') }$ such that $m$ does not incur regret (i.e., $\mu'(m) = \mu(m)$). Then, $P_{\>} \subseteq P_{\>'}$.
\label{lem:PushUpProposals}
\end{restatable}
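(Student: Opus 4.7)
The plan is to exploit a standard characterization of proposals in the men-proposing $\DA$ algorithm: during any execution, a man $m_1$ proposes to precisely those women who sit weakly above his final $\DA$-partner in his submitted list. This is immediate from the algorithm's dynamics: $m_1$ scans his list top-down, gets rejected by every woman ranked strictly above his eventual match, and is then (tentatively) accepted by her. So to establish $P_\> \subseteq P_{\>'}$, it suffices to show, for every man $m_1$, that the upper contour set of $m_1$'s $\DA$-partner under $\>$ is contained in the upper contour set of his $\DA$-partner under $\>'$, measured in the respective reported lists.

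For every truthful man $m_1 \neq m$, the submitted list is the same under both profiles ($\succ_{m_1} \, = \, \succ'_{m_1}$), so the task reduces to verifying that $\mu(m_1) \succeq_{m_1} \mu'(m_1)$. This is precisely what \Cref{cor:PushUp} guarantees under the no-regret hypothesis: it yields $\mu \succeq_M \mu'$, so in particular $\mu(m_1) \succeq_{m_1} \mu'(m_1)$. Hence every woman weakly above $\mu(m_1)$ in $\succ_{m_1}$ is also weakly above $\mu'(m_1)$ in $\succ'_{m_1}$, and each proposal $(m_1,w) \in P_{\>}$ lies in $P_{\>'}$ as well.

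For the accomplice $m_1 = m$, the no-regret hypothesis gives $\mu(m) = \mu'(m)$, so we must compare the upper contour sets of this common partner in $\succ_m$ and in $\succ_m^{X\uparrow}$. By definition of the push up operation, $\succ_m^{X\uparrow}$ is obtained from $\succ_m$ by moving the women of $X$ that were originally below $\mu(m)$ to positions above $\mu(m)$, without pushing any woman from above $\mu(m)$ down; any residual ambiguity in the internal ordering above $\mu(m)$ is irrelevant for proposal sets and can in any case be absorbed via \Cref{prop:Permuting_Falsified_Lists}. Consequently, every woman weakly above $\mu(m)$ in $\succ_m$ is still weakly above $\mu(m)$ in $\succ_m^{X\uparrow}$, so $m$'s proposal set only grows when moving from $\>$ to $\>'$.

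Combining the two cases yields $P_{\>} \subseteq P_{\>'}$. There is no real obstacle here beyond unpacking the $\DA$ proposal dynamics; the single genuine input is \Cref{cor:PushUp} (itself a consequence of the stability-preservation result \Cref{thm:No-regret_StableLatticeContainment}), which pins down the direction in which truthful men's matches can move when the accomplice pushes up without regret.
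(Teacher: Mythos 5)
Your proof is correct and follows essentially the same route as the paper's: invoke \Cref{cor:PushUp} to get $\mu \succeq_M \mu'$, observe that each man's proposal set is exactly the upper contour set of his final partner in his submitted list, and handle the accomplice separately via the structure of the push up operation. You simply spell out the details more explicitly than the paper does.
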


\begin{proof}
Since $m$ does not incur regret under $\>'$, we have that $\mu \succeq_M \mu'$ (\Cref{cor:PushUp}). Under the \DA{} algorithm, men propose in decreasing order of their preference. Therefore, any proposal made by a truthful man under $\>$ is also made under $\>'$. Furthermore, the push up assumption implies that the accomplice $m$ proposes to the women in $\>^L_m$ (i.e., the woman strictly preferred by $m$ over $\mu(m)$ according to his true list $\>_m$) under $\>^X$ as well.
\end{proof}

Let $P_{\> \setminus \>'} \coloneqq P_{\>} \setminus P_{\>'}$ denote the set of proposals that occur under the profile $\>$ but not under $\>'$. Our next result (\Cref{lem:NoRegretProfileSubset_Helper}) shows that the set $P_{\>^z \setminus \>^X}$, where $\>^z$ is a no-regret profile obtained by pushing up some woman ${ w_z \in X \setminus \{w_a\} }$ (as established in \Cref{lem:No_Regret_Profiles}), is contained in the set $P_{\>}$. 

\begin{restatable}{lemma}{NoRegretProfileSubset_Helper}
For any woman $w_z \in X \setminus \{w_a\}$, let $\>^z$ be the no-regret profile obtained by pushing up $w_z$ (as discussed in \Cref{lem:No_Regret_Profiles}). Then, $P_{\>^z \setminus \>^X} \subseteq P_{\>}$.
\label{lem:NoRegretProfileSubset_Helper}
\end{restatable}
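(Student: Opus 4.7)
My plan is to prove $P_{\succ^z} \setminus P_{\succ^X} \subseteq P_{\succ}$ by contradiction, using a first-offender argument on the DA execution of $\succ^z$. Throughout, I invoke Proposition~\ref{prop:Permuting_Falsified_Lists} to fix a canonical placement: in $\succ^X_m$, place $w_a$ immediately above $\mu(m)$ and stack the remaining women of $X$ (in particular $w_z$) above $w_a$; in $\succ^z_m$, place $w_z$ immediately above $\mu(m)$. This placement is essential because it ensures that $m$ proposes in the $\succ^X$ execution to \emph{every} element of $X \setminus \{w_a\}$ (including $w_z$) before settling with $w_a$. Fix a canonical execution (e.g., McVitie--Wilson) of DA on $\succ^z$ and let $(m_1, w_1)$ be the earliest proposal in this execution with $(m_1, w_1) \in P_{\succ^z} \setminus P_{\succ^X}$ and $(m_1, w_1) \notin P_{\succ}$; I show no such pair can exist.

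First, $m_1 \neq m$: under the placement above, the set of $\succ^z$-proposals by $m$ equals the true-list prefix above $\mu(m)$ together with $w_z$ and $\mu(m)$; the prefix and $\mu(m)$ lie in $P_{\succ}$, while $w_z \in X$ lies in $P_{\succ^X}$. So $m_1$ is truthful and $\succ_{m_1} = \succ^z_{m_1} = \succ^X_{m_1}$. Let $w_2$ be the woman immediately above $w_1$ in $\succ_{m_1}$; since $m_1$ proposes to $w_1$ in the $\succ^z$ execution, $w_2$ must reject him there, so $(m_1, w_2) \in P_{\succ^z}$ and $\mu^z(w_2) \succ_{w_2} m_1$. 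Minimality of $(m_1, w_1)$ forces $(m_1, w_2) \in P_{\succ^X} \cup P_{\succ}$. If $(m_1, w_2) \in P_{\succ^X}$ and $\mu^X(w_2) \neq m_1$, then $w_2$ also rejects $m_1$ in the $\succ^X$ execution and his next proposal there is $w_1$, contradicting $(m_1, w_1) \notin P_{\succ^X}$. If $(m_1, w_2) \in P_{\succ^X}$ and $\mu^X(w_2) = m_1$, let $m_3 := \mu^z(w_2) \succ_{w_2} m_1$; then $(m_3, w_2) \in P_{\succ^z}$ occurs strictly earlier than $(m_1, w_1)$ and cannot lie in $P_{\succ}$ (otherwise DA women-monotonicity on $\succ$ would force $\mu(w_2) \succeq_{w_2} m_3 \succ m_1$, which together with $\mu^X(w_2) = m_1$ is impossible). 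Minimality then forces $(m_3, w_2) \in P_{\succ^X}$, whence $\mu^X(w_2) \succeq_{w_2} m_3 \succ_{w_2} m_1$, contradicting $\mu^X(w_2) = m_1$.

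The remaining case $(m_1, w_2) \in P_{\succ} \setminus P_{\succ^X}$ is the most delicate, since it forces $\mu(m_1) = w_2 \prec_{m_1} \mu^X(m_1)$; I handle it by iterating the rejection-chain argument upward through $m_1$'s preference list in the $\succ^X$ execution, each step showing that the triggering proposer to some $w'$ with $w_2 \prec_{m_1} w' \preceq_{m_1} \mu^X(m_1)$ cannot lie in $P_{\succ}$ (by women-monotonicity, as above) and so by minimality lies in $P_{\succ^X}$, ultimately producing a $\succ$-blocking pair of $\mu^X$ not involving $m$, contradicting Proposition~\ref{prop:mStabilityExtension}. The principal obstacle is precisely this last case: because $\succ^X$ is with-regret, $\mu^X \notin S_{\succ}$ and the clean lattice/join trick (Proposition~\ref{prop:Lattice}) used in the no-regret analog (Lemma~\ref{lem:PushUpPropsalsContainment}) is unavailable. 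We substitute the weaker $m$-stability of $\mu^X$ (Proposition~\ref{prop:mStabilityExtension}) together with the canonical placement, which pins down $m$'s $\succ^X$-proposal set to be the true-above prefix together with all of $X$, bounding how far any chain of rejections can diverge between $\succ^z$ and $\succ^X$.
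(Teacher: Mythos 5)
There is a genuine gap, and it sits exactly where you flag the "principal obstacle." In your Case B you rule out $(m_3,w_2)\in P_{\succ}$ by arguing that $\mu(w_2)\succeq_{w_2} m_3 \succ_{w_2} m_1$ together with $\mu^X(w_2)=m_1$ is "impossible." That impossibility requires $\mu^X(w_2)\succeq_{w_2}\mu(w_2)$, i.e., the women-side monotonicity of \Cref{cor:PushUp} applied to the passage from $\succ$ to $\succ^X$ --- but $\succ^X$ is a \emph{with-regret} push-up profile, for which \Cref{cor:PushUp} does not apply; \Cref{eg:WithRegret} exhibits women who are strictly worse off after exactly such a manipulation, so $\mu(w_2)\succ_{w_2}\mu^X(w_2)$ is entirely possible and the case does not close. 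The same invalid monotonicity is invoked again in your final case ("cannot lie in $P_{\succ}$ by women-monotonicity, as above"), which is otherwise only a sketch: no concrete blocking pair of $\mu^X$ is exhibited, so the appeal to \Cref{prop:mStabilityExtension} is not cashed out. A secondary slip: $(m_3,w_2)$ with $m_3:=\mu^z(w_2)$ need not occur before $(m_1,w_1)$ in the $\succ^z$ execution --- the proposal that triggers $w_2$'s rejection of $m_1$ may come from a man other than her eventual partner, who may propose to her only later. That one is repairable by taking the triggering proposer instead, but the repaired version runs straight back into the monotonicity gap.

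The paper avoids all of this by never comparing $\mu$ with $\mu^X$ directly. It splits the $\succ^z$ execution at the moment $m$ proposes to $\mu(m)$: every proposal \emph{before} that point is shown to lie in $P_{\succ^X}$ by a first-offender argument that compares only $\succ^z$ with $\succ^X$ (using that $m$'s proposals up to $\mu(m)$ under $\succ^z$ are a subset of his proposals under $\succ^X$), and every proposal \emph{after} that point is shown to lie in $P_{\succ}$ by a \emph{last}-offender argument that compares only $\succ^z$ with $\succ$, using the stability of $\mu$ with respect to $\succ$, the no-regret property of $\succ^z$ (so $m$ proposes to no one after $\mu(m)$), and $P_{\succ}\subseteq P_{\succ^z}$ from \Cref{lem:PushUpProposals}. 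Each phase uses only a pair of profiles between which a valid monotonicity or stability relation holds; your single-pass induction forces a three-way comparison in which the needed relation between $\mu$ and $\mu^X$ simply does not exist. To fix your proof you would need to restructure it along these lines rather than patch the cases.
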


\begin{proof}
We start by showing that any proposal in $P_{\>^z \setminus \>^X}$ must occur after $m$ proposes to $\mu(m)$ during the \DA{} execution on $\>^z$. Suppose, for contradiction, that this is not true. Then, let $(m_1, w_1)$ be the \emph{first} proposal during the \DA{} execution on $\>^z$ such that it does not belong to $P_{\>^X}$ and occurs before $(m, \mu(m))$. Note that the proposals made by the accomplice $m$ before $\mu(m)$ under $\>^z$ are only to the women above and including $\mu(m)$ in $\>_m$ as well as $w_z$, all of whom he also proposes to under $\>^z$. Thus, $m_1 \neq m$, implying that $m_1$ is a truthful agent.

Since men propose in decreasing order of their preference and it is assumed that $(m_1, w_1) \notin P_{\>^X}$, $m_1$ must have been rejected by $w_2 \coloneqq \mu^X(m_1)$ under $\>^z$ before proposing to $w_1$. Then, under $\>^z$, $w_2$ must have received a proposal from some man, say $m_2$, such that $m_2 \>_{w_2} m_1$. We assumed $(m_1,w_1)$ to be the first proposal during the $\DA$ execution on $\>^z$ to not belong to $P_{\>^X}$. Since $(m_2,w_2)$ occurs before $(m_1,w_1)$ under $\>^z$, we must have that $(m_2,w_2)$ also occurs under $\>^X$. We already established that $m_2 \>_{w_2} m_1$ and know that women are truthful. Since women match with their favorite proposers, this implies that $w_2$ does not match with $m_1$ under $\>^X$. However, this contradicts the statement that $w_2 = \mu^X(m_1)$.

We proceed by showing that any proposal that occurs after $m$ proposes to $\mu(m)$ during the \DA{} execution on $\>^z$ must also occur in $P_{\>}$. Suppose, for contradiction, that this is not true. Then, let $(m_1, w_1)$ be the \emph{last} proposal during the \DA{} execution on $\>^z$ such that it does not belong to $P_{\>}$ and occurs after $(m, \mu(m))$. We infer that $w_1$ accepts and matches with $m_1$ under $\>^z$, since otherwise $m_1$ must make additional proposals, contradicting our assumption that $(m_1, w_1)$ is the last proposal to not occur in $P_{\>}$. Note that $m$ does not propose to anyone after $\mu(m)$ by the no-regret assumption of $\>^z$. Thus, $m_1 \neq m$, implying that $m_1$ is a truthful agent.

Let $m_2$ be the man $w_1$ is temporarily engaged to before she receives a proposal from $m_1$ under $\>^X$. We assumed that $w_2 = \mu^X(m_1)$ which means that $w_1$ rejects $m_2$ in favor of $m_1$ under $\>^z$, causing $m_2$ to propose to the next woman in $\>^z_{m_2}$, say $w_2$. Since $P_{\>} \subseteq P_{\>^z}$ (\Cref{lem:PushUpProposals}), we know that $(\mu(w_1), w_1) \in P_{\>^z}$. We also defined $m_2$ to be $w_1$'s favorite proposer under $\>^z$ before matching with $m_1$, which implies that $m_2 \succeq_{w_1} \mu(w_1)$. If $w_1 \>_{m_2} \mu(m_2)$, then the pair $(m_2, w_1)$ blocks $\mu$ with respect to $\>$. Thus, we have that $\mu(m_2) \succeq_{m_2} w_1$. Note that the accomplice $m$ does not make any proposals after $(m_1, w_1)$ under $\>^z$, since  $(m_1, w_1)$ occurs after $(m, \mu(m))$ and $\>^z$ is a no-regret profile. We know that $m_2$ proposes to $w_2$ after $(m_1, w_1)$, implying that $m_2 \neq m$. Therefore, $\mu(m_2) \succeq_{m_2} w_1 \>_{m_2} w_2$, and $m_2$ does not propose to $w_2$ under $\>$. However, this contradicts the assumption that $(m_1, w_1)$ is the last proposal during the \DA{} execution on $\>^X$ to not belong to $P_{\>}$. 

Thus, we have shown that (1) any proposal that belongs to $P_{\>^z \setminus \>^X}$ occurs after $(m, \mu(m))$ during the \DA{} execution on $\>^z$, and (2) any proposal that occurs after $(m, \mu(m))$ during the \DA{} execution on $\>^z$ belongs to $P_{\>}$. These two statements imply that $P_{\>^z \setminus \>^X} \subseteq P_{\>}$.
\end{proof}

Our next result (\Cref{lem:PushUpPropsalsContainment_Regret_Helper}) shows that the set $P_{\>^X \setminus \>^{\bar{X}}}$ is contained in the set $P_{\>^a}$. 

\begin{restatable}{lemma}{PushUpProposalsContainment_Regret_Helper}
Let $\>^{\bar{X}}$ be the profile obtained by pushing up $\bar{X} \coloneqq X \setminus \{w_a\}$ in the accomplice $m$'s true preference list. Then, $P_{\>^X \setminus \>^{\bar{X}}} \subseteq P_{\>^a}$.
\label{lem:PushUpPropsalsContainment_Regret_Helper}
\end{restatable}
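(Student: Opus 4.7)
My plan is to mirror the two-step structure used in the proof of \Cref{lem:NoRegretProfileSubset_Helper}. By \Cref{prop:Permuting_Falsified_Lists}, I may assume without loss of generality that $w_a$ sits immediately above $\mu(m)$ in both $\>^X_m$ and $\>^a_m$; thus $\>^X_m$ and $\>^{\bar{X}}_m$ share the identical prefix $(\>^L_m, \bar{X})$ and diverge only at the very next entry, which is $w_a$ in $\>^X_m$ and $\mu(m)$ in $\>^{\bar{X}}_m$.

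For the first step, I would show by induction on the proposal timestamp that the \DA{} executions on $\>^X$ and $\>^{\bar{X}}$ agree verbatim until the accomplice $m$ is about to propose to his first woman after being rejected by every member of $\>^L_m \cup \bar{X}$. Every agent other than $m$ behaves identically in both executions because they are truthful and receive the same history of proposals; $m$'s next proposal also coincides as long as he is choosing from $\>^L_m \cup \bar{X}$, which occupies the common prefix of $\>^X_m$ and $\>^{\bar{X}}_m$. Consequently, every proposal in $P_{\>^X} \setminus P_{\>^{\bar{X}}}$ occurs at or after $(m, w_a)$ in the $\>^X$ execution.

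For the second step, I would take the \emph{last} proposal $(m_1, w_1)$ at or after $(m, w_a)$ in the $\>^X$ execution that fails to lie in $P_{\>^a}$. If $m_1 = m$, then $w_1$ must equal $w_a$ (since the only woman $m$ proposes to from this point onward in $\>^X$ is $w_a$), and $(m, w_a) \in P_{\>^a}$ because $m$ is matched with $w_a$ under $\mu^a$ as well by \Cref{lem:Single_With_Regret_Profile_Match}; this contradicts the choice of $(m_1, w_1)$. Hence $m_1$ is a truthful agent, and $(m_1, w_1) \notin P_{\>^a}$ forces $\mu^a(m_1) \>_{m_1} w_1$. Following the template of \Cref{lem:NoRegretProfileSubset_Helper}, I would then argue that $w_1$ must accept $m_1$ under $\>^X$ (otherwise $m_1$'s subsequent proposal, which by the last-counterexample property lies in $P_{\>^a}$, propagates a chain of rejections that forces $(m_1, w_1)$ itself into $P_{\>^a}$). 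Tracing $w_1$'s previously engaged partner under $\>^X$ and comparing it with $\mu^a(w_1)$ then yields a blocking-pair-style contradiction.

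The main obstacle is the loss of stability for $\mu^a$ with respect to the true profile $\>$: unlike in \Cref{lem:NoRegretProfileSubset_Helper}, where both the source and target matchings were stable with respect to $\>$, here only $\mu^{\bar{X}}$ enjoys full stability while $\mu^a$ is merely $m$-stable (\Cref{prop:mStabilityExtension}). Step~1 is precisely what sidesteps this issue: by restricting attention to proposals at or after $(m, w_a)$, the accomplice's ``regret-causing'' role has already been spent on matching with $w_a$, so the candidate blocking pair constructed in Step~2 involves only truthful agents, and the $m$-stability of $\mu^a$ suffices to close the contradiction.
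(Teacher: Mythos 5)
Your argument is correct in outline but takes a genuinely different route from the paper's. The paper never traces the $\>^X$ execution against $\>^{\bar{X}}$ directly: it ``re-roots'' the entire analysis at $\>^a$, observing that $\>^X$ is a \emph{no-regret} push up of $\bar{X}$ from $\>^a$ (pivot $w_a=\mu^a(m)$, by \Cref{lem:Single_With_Regret_Profile_Match}) while $\>^{\bar{X}}$ is a \emph{with-regret} push up of $\hat{X}=\bar{X}\cup\{\mu(m)\}$ from $\>^a$ (by \Cref{lem:XBar_NoRegret}); it checks via strategyproofness that each single-woman push up $\>^{az}$ is no-regret with respect to $\>^a$, and then simply composes two lemmas already in hand: $P_{\>^X}\subseteq\bigcup_z P_{\>^{az}}$ from \Cref{lem:PushUpPropsalsContainment} and $P_{\>^{az}}\setminus P_{\>^{\bar{X}}}\subseteq P_{\>^a}$ from \Cref{lem:NoRegretProfileSubset_Helper}. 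That buys pure reuse, with no fresh execution analysis. Your proof is more self-contained and, in a sense, more transparent: your Step~2 is exactly the second half of the proof of \Cref{lem:NoRegretProfileSubset_Helper} instantiated with $\>^a$ in the role of the true profile and $\>^X$ in the role of the no-regret push up, and your Step~1 correctly isolates $P_{\>^X}\setminus P_{\>^{\bar{X}}}$ inside the post-$(m,w_a)$ portion of the $\>^X$ execution. The price is that you must re-verify the delicate parts rather than cite them.

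Two spots need tightening. First, the common prefix $(\>^L_m,\bar{X})$ of $\>^X_m$ and $\>^{\bar{X}}_m$ is only available after you know where the pivots sit: you need $\mu^X(m)=w_a$ (\Cref{lem:accomplice-match-after-regret-manipulation,lem:Single_With_Regret_Profile_Match}) and $\mu^{\bar{X}}(m)=\mu(m)$ (\Cref{lem:XBar_NoRegret}) before \Cref{prop:Permuting_Falsified_Lists} lets you normalize both lists this way. Second, the closing contradiction in Step~2 is not literally a blocking pair: from $\mu^a(m_1)\>_{m_1} w_1=\mu^X(m_1)$ and $m_1\>_{w_1}\mu^a(w_1)$ nothing contradictory follows, since two matchings that are stable with respect to $\>^a$ can disagree in exactly this way. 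As in \Cref{lem:NoRegretProfileSubset_Helper}, you must introduce $m_2$, the best proposer to $w_1$ under $\>^X$ other than $m_1$; he exists and satisfies $m_2\succeq_{w_1}\mu^a(w_1)$ because $P_{\>^a}\subseteq P_{\>^X}$ (\Cref{lem:PushUpProposals} applied with $\>^a$ as the base profile), and one checks $m_2\neq m$ since $m$ is never rejected after proposing to $w_a$. The non-existence of the blocking pair $(m_2,w_1)$ for $\mu^a$ with respect to $\>^a$ (which, as you note, coincides with $m$-stability with respect to $\>$ for this pair of truthful agents) then gives $\mu^a(m_2)\succeq_{m_2}w_1\>_{m_2}w_2$, so $m_2$'s follow-up proposal $(m_2,w_2)$ occurs after $(m_1,w_1)$ yet lies outside $P_{\>^a}$ --- and it is \emph{this} violation of the maximality of $(m_1,w_1)$ that closes the argument. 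With those repairs your proof goes through.
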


\begin{proof}
For this proof, let $\>^a$ be the ``true'' preference profile. Remember, from \Cref{lem:XBar_NoRegret}, that $m$ matches with $\mu(m)$ under $\>^{\bar{X}}$.  Thus, $\>^{\bar{X}}$ is a with-regret profile with respect to $\>^a$, and is obtained by pushing up $\hat{X} \coloneqq \bar{X} \cup \{\mu(m)\}$ in $\>^a_m$. On the other hand, $m$ matches with $w_a$ under $\>^a$ and $\>^X$ (\Cref{lem:Single_With_Regret_Profile_Match}). Thus, $\>^X$ is a no-regret profile with respect to $\>^a$, and is obtained by pushing up $\bar{X} = \hat{X} \setminus \{\mu(m)\} = X \setminus \{w_a\}$.

For any woman $w_z \in \bar{X}$, let $\>^{az}$ be the profile obtained by pushing up $w_z$ in $\>^a_m$. It is easy to see that $\>^{az}$ is a no-regret profile with respect to $\>^a$. Indeed, suppose $\>^{az}$ is with-regret. Then, $m$ must match with $w_z$ under $\>^{az}$ (\Cref{lem:accomplice-match-after-regret-manipulation}). In light of \Cref{prop:Permuting_Falsified_Lists} we can assume, without loss of generality, that $m$ ranks the set $\bar{X}$ strictly above $w_a$ in the list $\>^X_m$. Given profile $\>^X$, $m$ could then manipulate by submitting $\>^{az}_m$ in order to match with $w_z$. However, this would contradict strategyproofness of the \DA{} algorithm since $w_z \>_m w_a$. Therefore, $\>^{az}$ is no-regret with respect to $\>^a$.

Given this observation, from \Cref{lem:NoRegretProfileSubset_Helper}, we get that for any woman $w_z \in \bar{X} = \hat{X} \setminus \{\mu(m)\}$, $P_{\>^{az}} \setminus P_{\>^{\bar{X}}} \subseteq P_{\>^a}$. Since $m$ matches with $w_a$ under $\>^{az}$ and $\>^X$, we have that $\>^{az}$ is no-regret with respect to $\>^X$. Thus, we invoke \Cref{lem:PushUpPropsalsContainment} to claim that any proposal that occurs under $\>^X$ is contained in $P_{\>^{az_1}} \cup P_{\>^{az_2}} \cup \dots \cup P_{\>^{az_k}}$, where $\bar{X} \coloneqq \{w_{z_1}, w_{z_2}, \dots, w_{z_k}\}$. Since $\{P_{\>^{az_1}} \cup P_{\>^{az_2}} \cup \dots \cup P_{\>^{az_k}}\} \setminus P_{\>^{\bar{X}}} \subseteq P_{\>^a}$, we get that $P_{\>^X \setminus \>^{\bar{X}}} \subseteq P_{\>^a}$.
\end{proof}

Recall from \Cref{lem:PushUpPropsalsContainment} that, in the no-regret setting, any proposal that occurs during the \DA{} execution on the profile $\>^X$ must also occur during the \DA{} execution on at least one of the profiles $\>^a, \>^b, \>^c, \dots$. Our next result (\Cref{lem:PushUpPropsalsContainment_Regret}) establishes the with-regret analogue of this result.

\begin{restatable}{lemma}{PushUpProposalsContainment_Regret}
Let $X = \{w_a,w_b,w_c,\dots\}$ be an arbitrary finite set of women such that the accomplice $m$ incurs regret after pushing up $X$. Then, any proposal that occurs under $\>^X$ also occurs under at least one of the profiles $\>^{a}, \>^{b}, \>^{c}, \dots$.
\label{lem:PushUpPropsalsContainment_Regret}
\end{restatable}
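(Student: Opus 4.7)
The plan is to decompose the proposal set $P_{\>^X}$ into two pieces using the profile $\>^{\bar{X}}$, where $\bar{X} \coloneqq X \setminus \{w_a\}$ and $w_a$ is the woman to which the accomplice is matched under $\>^X$, and then to bound each piece using a helper lemma established earlier.

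First, I would extract $w_a$ and collect the relevant structural facts. Since $m$ incurs regret under $\>^X$, Lemma~\ref{lem:accomplice-match-after-regret-manipulation} forces $\mu^X(m) \in X$; I set $w_a \coloneqq \mu^X(m)$. By Lemma~\ref{lem:Single_With_Regret_Profile_Match}, we also have $\mu^a(m) = w_a$, so pushing up $w_a$ alone from the true profile is still a with-regret manipulation. By Lemma~\ref{lem:No_Regret_Profiles}, for every other $w_z \in \bar{X}$ the single-push-up profile $\>^z$ is no-regret, and by Lemma~\ref{lem:XBar_NoRegret} the simultaneous push-up profile $\>^{\bar{X}}$ is likewise no-regret.

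Next, I would partition $P_{\>^X} = (P_{\>^X} \setminus P_{\>^{\bar{X}}}) \cup (P_{\>^X} \cap P_{\>^{\bar{X}}})$. The first piece, which in the paper's notation is $P_{\>^X \setminus \>^{\bar{X}}}$, is contained in $P_{\>^a}$ by Lemma~\ref{lem:PushUpPropsalsContainment_Regret_Helper}. For the second piece, I use the fact that $\>^{\bar{X}}$ is a no-regret push-up of $\bar{X}$ from the true profile, so Lemma~\ref{lem:PushUpPropsalsContainment} (the no-regret version already proved) applies and gives that every proposal in $P_{\>^{\bar{X}}}$ occurs under at least one of the single-woman profiles $\>^b, \>^c, \ldots$ indexed by women in $\bar{X}$. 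Hence $P_{\>^X} \cap P_{\>^{\bar{X}}} \subseteq P_{\>^b} \cup P_{\>^c} \cup \cdots$.

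Combining the two containments yields $P_{\>^X} \subseteq P_{\>^a} \cup P_{\>^b} \cup P_{\>^c} \cup \cdots$, which is the desired conclusion. I expect no serious obstacle at this step because the technical work has been absorbed into the earlier helper lemmas; the only subtlety is recognizing that the correct way to split $X$ is into $\{w_a\}$ and $\bar{X}$, so that the genuinely with-regret part of the argument is localized in the single profile $\>^a$ (handled by Lemma~\ref{lem:PushUpPropsalsContainment_Regret_Helper}) while the remainder reduces to the already-proved no-regret case.
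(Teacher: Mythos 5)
Your proposal is correct and follows essentially the same route as the paper: both decompose $P_{\>^X}$ via the no-regret profile $\>^{\bar{X}}$ with $\bar{X} = X \setminus \{w_a\}$, bound $P_{\>^X} \setminus P_{\>^{\bar{X}}}$ by $P_{\>^a}$ using Lemma~\ref{lem:PushUpPropsalsContainment_Regret_Helper}, and bound the remainder by $P_{\>^b} \cup P_{\>^c} \cup \cdots$ via Lemma~\ref{lem:XBar_NoRegret} and the no-regret containment Lemma~\ref{lem:PushUpPropsalsContainment}. The only cosmetic difference is that you write the second piece as $P_{\>^X} \cap P_{\>^{\bar{X}}}$ where the paper simply uses $P_{\>^{\bar{X}}}$, which is immaterial.
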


\begin{proof}
Let $\>^{\bar{X}}$ be the profile after $m$ pushes up $\bar{X} \coloneqq X \setminus \{w_a\}$ in his true preference list. Then, from \Cref{lem:PushUpPropsalsContainment_Regret_Helper}, we know that $P_{\>^X} \setminus P_{\>^{\bar{X}}} \subseteq P_{\>^a}$. Additionally, from \Cref{lem:XBar_NoRegret}, we know that $\>^{\bar{X}}$ is a no-regret profile. It then follows from  \Cref{lem:PushUpPropsalsContainment} that $P_{\>^{\bar{X}}} \subseteq P_{\>^b} \cup P_{\>^c} \cup \dots$.

Any proposal that occurs under $\>^X$ is contained in either $P_{\>^X} \setminus P_{\>^{\bar{X}}}$ or $P_{\>^{\bar{X}}}$. By combining the aforementioned observations, we get that any such proposal is contained in $P_{\>^a} \cup P_{\>^b} \cup P_{\>^c} \cup \dots$, as desired.
\end{proof}

We are now ready to prove \Cref{lem:PushUpOneWomanWithRegret}.

\PushUpOneWomanWithRegret*

\begin{proof}
Let $m_X \coloneqq \mu^X(w)$, where $\mu^X \coloneqq \DA(\>^X)$. We assume that the accomplice $m$ matches with a woman $w_a \in X$ (\Cref{lem:accomplice-match-after-regret-manipulation}) under $\>^X$. From \Cref{lem:Single_With_Regret_Profile_Match}, we know that he also matches with $w_a$ under $\>^a$. If $m_X = m$, then the lemma follows trivially since $m$ is matched with his $\mu^X$-partner, namely $w$, under the matching $\mu^a$, and therefore the $\mu^X$-partner of $w$ can also be achieved through profile $\>^a$. Thus, for the remainder of the proof, we assume that $m_X \neq m$; in other words, $m_X$ is a truthful agent.

Suppose, for contradiction, that $m_X$ is not matched to $w$ under any of the profiles $\>^a, \>^b, \>^c, \dots$. Starting from the with-regret profile $\>^a$ (as established in \Cref{lem:Single_With_Regret_Profile_Match}), we can obtain the profile $\>^X$ via a no-regret push up operation of the set $X \setminus \{w_a\}$ in the list $\>_m^{a}$ of the accomplice. Therefore, from \Cref{cor:PushUp}, we have that $\mu^a(m_X) \succeq_{m_X} \mu^X(m_X) = w$, where ${\mu^a \coloneqq \DA(\>^a)}$. Since $\mu^a(m_X) \neq w$ by the contradiction assumption, we further obtain that $\mu^a(m_X) \>_{m_X} w$. 

Now, consider the no-regret profiles $\>^b, \>^c, \>^d, \dots$ (as established in \Cref{lem:No_Regret_Profiles}). Suppose $\mu^X(m_X) \>_{m_X} \mu^b(m_X)$, where $\mu^b \coloneqq \DA(\>^b)$. Then, since men propose in decreasing order of their preference, $(m_X, \mu^b(m_X)) \in P_{\>^b \setminus \>^X}$. \Cref{lem:NoRegretProfileSubset_Helper} consequently implies that $(m_X, \mu^b(m_X)) \in P_{\>}$, and thus $\mu^b(m_X) \succeq_{m_X} \mu(m_X)$. Since $\mu(m_X) \succeq_{m_X} \mu^b(m_X)$ by \Cref{cor:PushUp}, we infer that $\mu^b(m_X) = \mu(m_X)$. This, combined with $\mu^X(m_X) \>_{m_X} \mu^b(m_X)$, gets us $w = \mu^X(m_X) \>_{m_X} \mu(m_X)$. From the accomplice manipulation assumption, we also know that $m_X = \mu^X(w) \>_{w} \mu(w)$. However, this implies that the pair $(m_X, w)$ blocks $\mu$ with respect to $\>$, posing a contradiction. Thus, we have shown that $w = \mu^X(m_X) \nsucc_{m_X} \mu^b(m_X)$. Since $\mu^b(m_X) \neq w$ by the initial contradiction assumption, we further obtain that $\mu^b(m_X) \>_{m_X} w$. By a similar argument, we get that the women $\mu^c(m_X), \mu^d(m_X), \dots$ are also placed above $w$ in the list $\>_{m_X}$.

Since $m_X$ is matched with $w$ under $\mu^X$, he must propose to $w$ during the execution of \DA{} algorithm on $\>^X$, i.e., $(m_X,w) \in P_{\>^X}$. From \Cref{lem:PushUpPropsalsContainment_Regret}, we have that ${(m_X,w) \in P_{\>^a} \cup P_{\>^b} \cup P_{\>^c} \dots}$. Since $m_X$ is a truthful agent, his preference list remains unchanged, and therefore he ranks the women $\mu^a(m), \mu^b(m_X), \mu^c(m_X), \dots$ strictly above $w$ under each of the profiles $\>^a, \>^b, \>^c, \dots$. This, however, poses a contradiction since men propose in decreasing order of their preference.
\end{proof}

We are now ready to prove \Cref{thm:Inconspicuous_WithRegret}.

\InconspicuousWithRegret*
\begin{proof}
From \Cref{prop:Permuting_Falsified_Lists} (and subsequent remarks), we know that any accomplice manipulation can be simulated via push up and push down operations. \Cref{lem:CombiningPushUpPushDown} shows that any beneficial manipulation that is achieved by some combination of pushing up a set $X \subseteq W$ and pushing down $Y \subseteq W$ can be weakly improved by only pushing up $X \subseteq W$.
Finally, from \Cref{lem:PushUpOneWomanWithRegret}, we know that any match for the manipulating woman $w$ that is achieved by pushing up $X \subseteq W$ is also achieved by pushing up exactly one woman in $X$, thus establishing the desired inconspicuousness property.
\end{proof}

\subsection{Proof of Corollary~\ref{cor:Optimal_With-regret_PolynomialTime}}

\PolynomialTimeWithRegret*

\begin{proof}(sketch)
The algorithm simply promotes each woman that is below $\mu(m)$ in the accomplice's true preference list to some position above $\mu(m)$ and checks the \DA{} outcome. The total number of such checks is $\O(n)$, and for each check, running the \DA{} algorithm takes $\O(n^2)$ time.
\end{proof}

\section{Omitted Material from Section~\ref{sec:Experiments}}

Recall from our previous experiments that we generated 1000 profiles uniformly at random for each value of $n \in {3, . . . , 40}$ (where $n$ is the number of men/women) and allowed any man to be chosen as an accomplice. We follow the same setup for all subsequent experiments unless stated otherwise. 

\subsection{Fraction of Women Who Improve (Cont'd)}

We revisit the experiment in which we compared the fraction of women who are able to improve through no-regret accomplice and self manipulation. In \Cref{tab:WomenWhoImprove}, we catalog the number of women who benefit from both strategies when $n = 20$.

\begin{table*}[ht]
\centering
\resizebox{\textwidth}{!}{
\begin{tabular}{lcccccccccccccccccccccc} 
    %\hline
    No. of women who benefit & 0 & 1 & 2 & 3 & 4 & 5 & 6 & 7 & 8 & 9 & 10 & 11 & 12 & 13 & 14 & 15 & 16 & 17 & 18 & 19 & 20\\
    \hline%\hline
    No. of instances (accomplice) & 307 & 0 & 101 & 102 & 88 & 79 & 88 & 59 & 49 & 54 & 28 & 21 & 11 & 8 & 3 & 1 & 1 & 0 & 0 & 0 & 0\\
    %\hline
    No. of instances (self) & 411 & 151 & 178 & 128 & 68 & 33 & 20 & 7 & 1 & 2 & 1 & 0 & 0 & 0 & 0 & 0 & 0 & 0 & 0 & 0 & 0%\\
    %\hline
\end{tabular}}
\caption{The number of instances (out of 1000) where a varying numbers of women benefit through no-regret accomplice manipulation and self manipulation when $n$ = 20.}
\label{tab:WomenWhoImprove}
\end{table*}

Not only are there more instances where at least one woman improves through no-regret accomplice manipulation, but there are also more instances where larger numbers of women improve. For example, there are no instances where more than ten women improve through self manipulation. This is a stark contrast to no-regret accomplice manipulation through which sixteen women are able to improve in one of the instances. Interestingly, there are no instances where \emph{exactly one} woman improves through no-regret accomplice manipulation. This is due to the sequence of proposals that occur after a no-regret push up operation. We formalize this observation in \Cref{prop:StrictPushUp}.

\begin{restatable}{proposition}{StrictPushUp}
Let $\succ$ be a preference profile and $\mu \coloneqq \DA(\succ)$. For any man $m$, let $\succ' \coloneqq \{\succ_{-m}, \succ_m^{X\uparrow}\}$ and $\mu'\coloneqq \DA(\succ')$. If $m$ does not incur regret and $\mu' \neq \mu$, then there exist at least two distinct women $w',w'' \in W$ such that $\mu'(w') \>_{w'} \mu(w')$ and $\mu'(w'') \>_{w''} \mu(w'')$, and at least two distinct men $m',m'' \in M$ such that $\mu(m') \>_{m'} \mu'(m')$ and $\mu(m'') \>_{m''} \mu'(m'')$.
\label{prop:StrictPushUp}
\end{restatable}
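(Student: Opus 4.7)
The plan is to combine \Cref{cor:PushUp} with a structural observation about the symmetric difference of two perfect matchings. First, I would recall that a no-regret push up by the accomplice $m$ yields $\mu'(m)=\mu(m)$ and, by \Cref{cor:PushUp}, $\mu\succeq_M\mu'$ and $\mu'\succeq_W\mu$. In particular, the accomplice $m$ himself is not one of the men whose partner changes, and every truthful man $m'\in M\setminus\{m\}$ with $\mu(m')\neq\mu'(m')$ satisfies $\mu(m')\succ_{m'}\mu'(m')$ (since preference lists are strict total orders), and every woman $w'$ with $\mu(w')\neq\mu'(w')$ satisfies $\mu'(w')\succ_{w'}\mu(w')$.

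The heart of the argument is to rule out having a single man (equivalently, a single woman) with a changed partner. For this, I would consider the symmetric difference $\mu\triangle\mu'$ viewed as an edge set in the complete bipartite graph on $M\cup W$. Since both $\mu$ and $\mu'$ are perfect matchings on a balanced bipartite vertex set, every vertex has degree $0$ or $2$ in $\mu\triangle\mu'$, so $\mu\triangle\mu'$ decomposes into a disjoint union of alternating cycles. Each such cycle is bipartite and alternates between edges of $\mu$ and edges of $\mu'$, hence has even length; and it cannot have length $2$ because that would force a repeated edge belonging to both matchings (and hence not to the symmetric difference). Thus every cycle has length at least $4$, meaning it contains at least two distinct men and at least two distinct women, all of whom have different partners under $\mu$ and $\mu'$.

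Finally, since $\mu'\neq\mu$, the symmetric difference $\mu\triangle\mu'$ is nonempty, so at least one such cycle exists. Picking two distinct men $m',m''$ and two distinct women $w',w''$ from this cycle and invoking the strict inequalities noted in the first paragraph immediately gives $\mu(m')\succ_{m'}\mu'(m')$, $\mu(m'')\succ_{m''}\mu'(m'')$, $\mu'(w')\succ_{w'}\mu(w')$, and $\mu'(w'')\succ_{w''}\mu(w'')$, as required.

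The only step that needs a little care is the cycle-decomposition argument, but it is standard: it just uses that two perfect matchings on the same vertex set have symmetric difference that is $2$-regular, hence a union of simple cycles of length $\geq 4$ in the bipartite setting. Everything else is a direct invocation of \Cref{cor:PushUp} together with strictness of preferences.
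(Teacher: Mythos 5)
Your proof is correct, but it takes a genuinely different route from the paper's. You combine \Cref{cor:PushUp} (which forces every agent with a changed partner to change strictly in the predicted direction) with the classical decomposition of the symmetric difference $\mu\,\triangle\,\mu'$ of two perfect matchings into alternating cycles: each vertex has degree $0$ or $2$, cycles alternate between the two matchings and so have even length, length $2$ is impossible, hence any nonempty cycle supplies at least two men and two women whose partners change. This is clean, self-contained, and makes no reference to the mechanics of the \DA{} execution. The paper instead first proves an auxiliary result (\Cref{lem:WeakPushUp}) showing that if every pushed-up woman prefers her $\mu$-partner to $m$ then the matching cannot change at all; its contrapositive yields a specific woman $w'\in X$ with $m \succ_{w'} \mu(w')$, and the paper then chases the proposal--rejection chain during $\DA(\succ')$ to exhibit explicit witnesses $w', w'', m', m''$. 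What your approach buys is brevity and generality -- it would apply verbatim to any situation where $\mu \succeq_M \mu'$, $\mu' \succeq_W \mu$, and $\mu \neq \mu'$, with no need for \Cref{lem:WeakPushUp}. What the paper's approach buys is constructive information: it locates an improving woman inside the pushed-up set $X$ and ties the second pair to the displaced partner $\hat{m} = \mu(w')$, and it establishes \Cref{lem:WeakPushUp} along the way, which is of independent interest. One small remark: your observation that the accomplice $m$ is not on any alternating cycle (since $\mu(m)=\mu'(m)$ by no-regret) is true but not actually needed -- any two men on a cycle work, whoever they are.
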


Before proving \Cref{prop:StrictPushUp}, we show that if a man $m$ performs a no-regret push up operation such that all the women he pushes up prefer him less than their original \DA{} partners, then the push up operation is \emph{weak} (i.e., the manipulated matching is the same as the original matching). 

\begin{restatable}{lemma}{WeakPushUp}
Let $\succ$ be a preference profile and $\mu \coloneqq \DA(\>)$. For any man $m$, let $\>' = \{ \succ_{-m}, \succ_m^{X\uparrow} \}$ and $\mu' = \DA(\>')$ such that $m$ does not incur regret. If $\mu(w') \>_{w'} m$ for all $w' \in X$, then $\mu' = \mu$.
\label{lem:WeakPushUp}
\end{restatable}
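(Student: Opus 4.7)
The plan is to establish $\mu' = \mu$ by first showing that the unchanged matching $\mu$ remains stable under the manipulated profile $\>'$, and then using men-optimality of $\DA(\>')$ among $S_{\>'}$ together with the inequality from Corollary~\ref{cor:PushUp} to pin down each man's partner.

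For the stability step, suppose for contradiction that some pair $(m',w')$ blocks $\mu$ under $\>'$, so $w' \>'_{m'} \mu(m')$ and $m' \>'_{w'} \mu(w')$. If $m' \neq m$, then the preferences of $m'$ and of $w'$ (a woman, hence not $m$) agree under $\>$ and $\>'$, so $(m',w')$ would also block $\mu$ under the true profile, contradicting $\mu \in S_\>$. Hence $m' = m$, i.e.\ $w' \>'_m \mu(m)$ and $m \>_{w'} \mu(w')$. By construction of $\>_m^{X\uparrow}$, the women strictly above $\mu(m)$ in $\>'_m$ are exactly $\>^L_m \cup X$. If $w' \in \>^L_m$, then $w' \>_m \mu(m)$ as well, and again $(m,w')$ blocks $\mu$ under $\>$, a contradiction. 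The only remaining possibility is $w' \in X$, but the hypothesis $\mu(w') \>_{w'} m$ then directly contradicts $m \>_{w'} \mu(w')$. Therefore $\mu \in S_{\>'}$.

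To conclude, note that $\mu' = \DA(\>')$ is men-optimal in $S_{\>'}$ by Proposition~\ref{prop:DA_stable_MenOptimal_WomenPessimal}, so since $\mu \in S_{\>'}$ we have $\mu'(m'') \succeq'_{m''} \mu(m'')$ for every man $m''$. Corollary~\ref{cor:PushUp}, applied under the no-regret assumption, gives $\mu \succeq_M \mu'$ with respect to the true profile, i.e.\ $\mu(m'') \succeq_{m''} \mu'(m'')$ for every $m''$. For any $m'' \neq m$, the preferences agree under $\>$ and $\>'$, so combining the two inequalities forces $\mu(m'') = \mu'(m'')$; for $m'' = m$, the no-regret hypothesis itself gives $\mu(m) = \mu'(m)$. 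Hence $\mu$ and $\mu'$ coincide on every man, which means $\mu' = \mu$.

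The only substantive step is the stability verification, and within it the single delicate case is a candidate blocker $(m,w')$ with $w' \in X$; the hypothesis that every $w' \in X$ prefers $\mu(w')$ to $m$ is exactly what is needed to rule this case out, and every other ingredient is a direct appeal to a result already established earlier in the excerpt.
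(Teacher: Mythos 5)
Your proof is correct, but it takes a genuinely different route from the paper's. The paper argues dynamically: it invokes \Cref{thm:No-regret_StableLatticeContainment} and \Cref{cor:PushUp} to get $\mu' \in S_{\>}$ and $\mu \succeq_M \mu'$, then supposes $\mu' \neq \mu$, traces the execution of \DA{} on $\>'$, identifies the \emph{earliest} man rejected by his $\mu$-partner, and extracts either a blocking pair of $\mu$ under $\>$ or a woman $w' \in X$ with $m \>_{w'} \mu(w')$. You instead argue statically: you verify directly that $\mu$ itself is stable under the \emph{manipulated} profile $\>'$ (the only nontrivial case being a candidate blocker $(m, w')$ with $w' \in X$, which the hypothesis kills), and then sandwich each man's partner between the men-optimality of $\DA(\>')$ over $S_{\>'}$ and the inequality $\mu \succeq_M \mu'$ from \Cref{cor:PushUp}. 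Your version avoids any reasoning about the order of proposals, which makes it shorter and less error-prone; the paper's execution-trace argument is more self-contained in the sense that it exposes exactly where the hypothesis on $X$ enters the \DA{} dynamics. Both proofs ultimately hinge on the same case analysis (a would-be blocking interaction between $m$ and a woman in $X$), and both lean on \Cref{cor:PushUp}, so neither is more general than the other. One small remark: your stability step does not use the no-regret assumption at all --- it is needed only for the sandwich via \Cref{cor:PushUp} and for the man $m$ himself --- which your write-up could make explicit.
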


\begin{proof}
Since $m$ does not incur regret after a push up operation, $\mu'$ must be stable with respect to the true preferences, i.e., $\mu' \in S_{\>}$ (\Cref{thm:No-regret_StableLatticeContainment}). Additionally, \Cref{cor:PushUp} implies that $\mu'$ is weakly better for all women (i.e., $\mu' \succeq_W \mu$) and weakly worse for all men (i.e., $\mu \succeq_M \mu'$).

Suppose, for contradiction, that $\mu' \neq \mu$. Let $Z \coloneqq \{z \in W: \mu'(z) \>_z \mu(z)\}$ denote the set of women with a strictly more preferable partner under $\mu'$. Thus, by assumption, $Z \neq \emptyset$. Let $Y \coloneqq \{y \in M: \mu'(y) \in Z\}$ denote the set of men whose $\mu'$-partners are in the set $Z$. Note that any man not in $Y$ has the same partner under $\mu$ and $\mu'$; in particular, $m \notin Y$ by the no-regret assumption. Also note that each man in $Y$ strictly prefers its partner under $\mu$ than under $\mu'$, i.e., $\mu \>_Y \mu'$ (this is an easy consequence of the stability of $\mu'$ with respect to the true profile $\>).$

Consider the execution of the \DA{} algorithm on the profile $\>'$. Since the men propose in decreasing order of their preference, each man in $Y$ must be rejected by his $\mu$-partner during the algorithm. Let $m_1 \in Y$ denote the man who is the \emph{earliest} to be rejected by his $\mu$-partner (i.e., the woman he is matched to under the matching $\mu$), say $w_1 \coloneqq \mu(m_1)$. Then, $w_1$ must have at hand a more preferable proposal, say $m_2$ (i.e., $m_2 \>_{w_1} m_1$), that she does not receive under the execution of $\DA(\>)$.

Since $m_2$ is not rejected by his $\mu$-partner yet, we have that $w_1 \>'_{m_2} \mu(m_2)$. Now, if $m_2 \neq m$, then we get $w_1 \>_{m_2} \mu(m_2)$, which contradicts the stability of $\mu$ with respect to $\>$ as $(m_2,w_2)$ would constitute a blocking pair. Otherwise, if $m_2 = m$, then it must be that $w_1 \in X$ (since the women in $X$ are the only ones who $m$ proposes to under $\>'$ but not under $\>$). Then, from the above condition, we will have $m = m_2 \>_{w'} m_1 = \mu(w')$ for some $w' \in X$, which contradicts the condition given in the lemma. Hence, $\mu=\mu'$, as desired.
\end{proof}

We are now ready to prove \Cref{prop:StrictPushUp}.

\begin{proof}[Proof of \Cref{prop:StrictPushUp}]
Let $X \subseteq W$ be the set of women that $m$ pushes up. Since it is assumed that $\mu' \neq \mu$, the contrapositive of \Cref{lem:WeakPushUp} implies there exists a woman $w' \in X$ such that $\mu(w') \nsucc_{w'} m$. By the push up assumption, $\mu(m) \neq w'$; otherwise $m$ would not have been able to push up $w'$. This implies that $m \neq \mu(w')$ and thus, $m \>_{w'} \mu(w')$. Since men propose in decreasing order, $m$ proposes to $w'$ under $\DA{(\>')}$. By no-regret assumption $\mu(m) = \mu'(m)$, we know that $w'$ rejects $m$ at some point to be matched with another man $m'$ such that $m' \>_{w'} \mu(w')$.
This implies that $m'$ did not propose to $w'$ under $\DA{(\>)}$, and thus, $m'$ is matched to a more preferred woman under $\>$, i.e., $\mu(m') \>_{m'} \mu'(m')$.

Now let $\hat{m} \coloneqq \mu(w')$ be $w'$'s partner under $\>$. By \Cref{cor:PushUp}, it must be the case that $\mu(\hat{m}) \>_{\hat{m}} \mu'(\hat{m})$. Let $w'' \coloneqq \mu'(\hat{m})$. Following the same reasoning as above, since $\mu(m) = \mu'(m)$ and $m \neq \mu(w'')$, we have $m \>_{w''} \mu(w'')$. The order of proposals under $\>'$ indicates that $w''$ rejects $m$'s proposal to be matched to a more preferred man $m''$ under $\DA{(\>')}$, which consequently implies that $\mu'(w'') \>_{w''} \mu(w'')$ while $\mu(m'') \>_{m''} \mu(m'')$. Therefore, $w'$ and $w''$'s partners are strictly improved whereas $m'$ and $m''$'s partners are strictly worse off under $\DA{(\>')}$.
\end{proof}

In the same experiment, we additionally computed the fraction of instances in which it is possible for at least one woman to improve through no-regret accomplice and self manipulation. The results in \Cref{fig:ManipulableInstances} once again suggest that no-regret accomplice strategies are more prevalent than self manipulation.

\begin{figure}[!htb]
    \centering
    \begin{minipage}{\textwidth}
        \centering
        \includegraphics[width=0.72\linewidth]{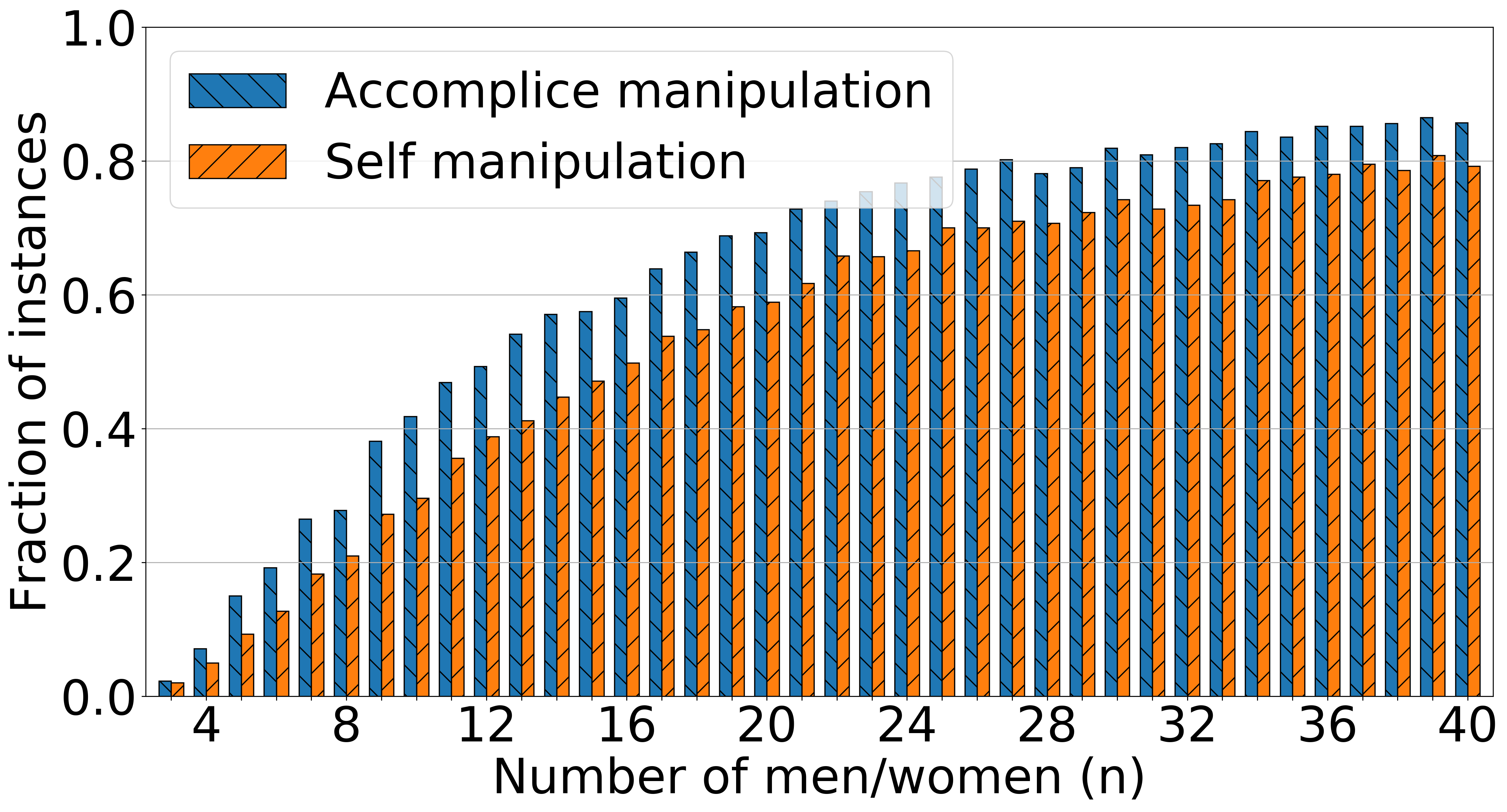}
    \end{minipage}\hfill
    \caption{Comparing fractions of instances that are manipulable by at least one woman through no-regret accomplice and self manipulation.}
    \label{fig:ManipulableInstances}
\end{figure}

\subsection{Regret vs. Improvement}
We examine the tradeoff between regret (of the accomplice) and improvement (of the strategic woman) under the with-regret manipulation model. Rather than allowing any man to be chosen as an accomplice, we ran the with-regret manipulation strategy on a fixed woman $w$ and recorded the outcomes after individually using each man as an accomplice.  In other words, if there were multiple optimal strategies for $w$, we chose the one that caused the accomplice to incur the least amount of regret. \Cref{fig:RegretvImprovement} illustrates the distribution of improvement achieved for $w$ and regret incurred by the accomplices in terms of the difference in the ranks of their matched partners before and after manipulation. Interestingly, the expected regret for the accomplice is greater than the expected improvement for the manipulating woman.

\begin{figure}[t]
    \centering
    \begin{minipage}{\linewidth}
        \centering
        \includegraphics[width=0.72\linewidth]{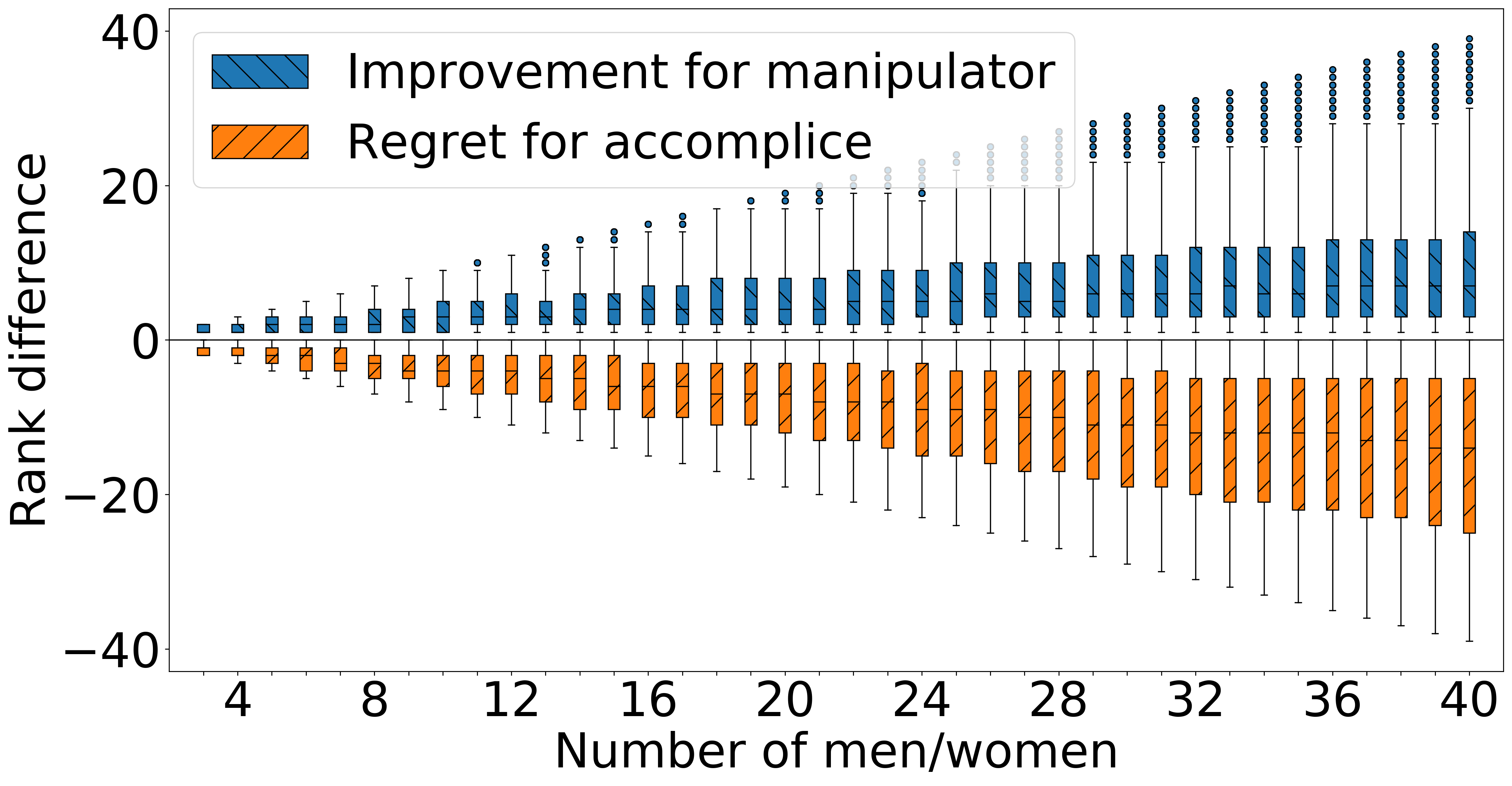}
    \end{minipage}\hfill
    \caption{Comparing distributions of improvement for the strategic woman $w$ and regret for the accomplices. The solid bars, whiskers, and dots denote the interquartile range, range excluding outliers, and outliers, respectively.}
    \label{fig:RegretvImprovement}
\end{figure}

\section{Suboptimal Accomplice Manipulation}

We have already shown that any \emph{optimal} accomplice manipulation strategy admits an equivalent inconspicuous strategy (\Cref{thm:Inconspicuous_NoRegret,thm:Inconspicuous_WithRegret}). \Cref{thm:Suboptimal_Inconspicuous} strengthens this result by showing that \emph{any} beneficial (i.e., optimal or suboptimal) accomplice manipulation admits an equivalent inconspicuous strategy for both the no-regret and with-regret settings. In order to prove this, we start by introducing some new results (\Cref{lem:Push_Up_Subset_Single_Agent,lem:CombiningPushUpPushDown_Suboptimal}).

\begin{restatable}{lemma}{PushUpSubsetSingleAgent}
Let $(m,w)$ be a manipulating pair. Let $X$ be a set of women who $m$ can push up and $\mu(w)$ be $w$’s match after $m$ pushes up all women in $X$. Let $w' \in X$ be the single woman $m$ needs to push up to get $w$ matched with $\mu(w)$ (as guaranteed by \Cref{lem:PushUpOneWoman,lem:PushUpOneWomanWithRegret}). Then, $m$ can push up any subset of women in $X$ that contains $w'$ (i.e., any subset $S \subseteq X$ such that $w' \in S$) to get $w$ matched with $\mu(w)$.
\label{lem:Push_Up_Subset_Single_Agent}
\end{restatable}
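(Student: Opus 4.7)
The plan is to prove $\mu^S(w) = \mu^X(w)$---where $\>^S \coloneqq \{\>_{-m},\>_m^{S\uparrow}\}$ and $\mu^S \coloneqq \DA(\>^S)$---by sandwiching $\mu^S(w)$ between $\mu^{w'}(w)$ and $\mu^X(w)$, then invoking the equality $\mu^{w'}(w) = \mu^X(w)$ already supplied by \Cref{lem:PushUpOneWoman} and \Cref{lem:PushUpOneWomanWithRegret}. The argument proceeds uniformly across the no-regret and with-regret settings by reinterpreting $\>^S$ alternately as a push-down starting from $\>^X$ and as a push-up starting from $\>^{w'}$.

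For the upper bound $\mu^X(w) \succeq_w \mu^S(w)$, I would first invoke \Cref{prop:Permuting_Falsified_Lists} to normalize $\>^X_m$, without loss of generality, into the canonical form $(\>_m^L,\, X \setminus \{w'\},\, w',\, \mu(m),\, \>_m^R \setminus X)$. In the with-regret case this is crucial, since \Cref{lem:accomplice-match-after-regret-manipulation} pins $w' = w_a = \mu^X(m)$, and placing $w_a$ immediately above $\mu(m)$ makes it the lowest-ranked element of $X$ in $\>^X_m$; in the no-regret case $\mu^X(m) = \mu(m)$ and the placement of $w'$ within $X$ is merely cosmetic. The hypothesis $w' \in S$ now guarantees $X \setminus S \subseteq X \setminus \{w'\}$ sits strictly above $\mu^X(m)$ in $\>^X_m$, so pushing $X \setminus S$ down from $\>^X$ reproduces (up to a further application of \Cref{prop:Permuting_Falsified_Lists} below the anchor) exactly $\>^S$. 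Applying \Cref{prop:PushDown} with $\>^X$ in the role of the true profile then yields $\mu^S(m) = \mu^X(m)$, and \Cref{lem:PushDown_Worse_For_Women} yields $\mu^X(w) \succeq_w \mu^S(w)$.

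For the lower bound $\mu^S(w) \succeq_w \mu^{w'}(w)$, I would view $\>^S$ instead as the push-up of $S \setminus \{w'\}$ from $\>^{w'}$. The no-regret status of this push-up is automatic: \Cref{lem:SingleAgentPushUpNoRegret} (no-regret case) or \Cref{lem:Single_With_Regret_Profile_Match} (with-regret case) supplies $\mu^{w'}(m) = \mu^X(m)$, while the previous step has just established $\mu^S(m) = \mu^X(m)$, so $\mu^S(m) = \mu^{w'}(m)$. Applying \Cref{cor:PushUp} with $\>^{w'}$ as the base profile then delivers $\mu^S(w) \succeq_w \mu^{w'}(w)$. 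Concatenating with the upper bound and with $\mu^{w'}(w) = \mu^X(w)$ produces the chain $\mu^X(w) \succeq_w \mu^S(w) \succeq_w \mu^{w'}(w) = \mu^X(w)$, forcing $\mu^S(w) = \mu^X(w)$.

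The main obstacle is the first step: the push-down reinterpretation is valid only if every woman being pushed down lies strictly above $\mu^X(m)$ in $\>^X_m$. This is immediate in the no-regret case, but in the with-regret case it rests delicately on arranging $w_a$ as the lowest-ranked element of $X$ in $\>^X_m$ via \Cref{prop:Permuting_Falsified_Lists}, using precisely the hypothesis $w' \in S$ to keep $w_a$ out of the set $X \setminus S$ being pushed down---without this, the transition $\>^X \to \>^S$ would require an additional push-up correction and the clean monotonicity argument would collapse.
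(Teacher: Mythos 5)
Your sandwich $\mu^X(w) \succeq_w \mu^S(w) \succeq_w \mu^{w'}(w) = \mu^X(w)$ is exactly the paper's argument, and your first leg (read $\succ^S$ as the push-down of $X \setminus S$ from $\succ^X$ and apply \Cref{lem:PushDown_Worse_For_Women}) is identical to the paper's. The only divergence is the second leg: the paper obtains $\mu^S \succeq_W \mu^{w'}$ by a \emph{second} push-down, of $S \setminus \{w'\}$ from $\succ^S$ down to $\succ^{w'}$, so it never invokes \Cref{cor:PushUp} and never has to certify a no-regret condition. You instead run that transition upward from $\succ^{w'}$ to $\succ^S$ and must therefore verify $\mu^S(m) = \mu^X(m) = \mu^{w'}(m)$ before \Cref{cor:PushUp} applies. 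Your verification goes through, but it buys nothing over the two-push-down route and drags in \Cref{lem:SingleAgentPushUpNoRegret} and \Cref{lem:Single_With_Regret_Profile_Match} as extra dependencies.

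One caution on the with-regret case: you assert that \Cref{lem:accomplice-match-after-regret-manipulation} ``pins $w' = w_a = \mu^X(m)$.'' It does not --- that lemma only says $\mu^X(m) \in X$, whereas the witness $w'$ supplied by \Cref{lem:PushUpOneWomanWithRegret} is whichever single woman reproduces $w$'s match, and nothing in its statement forces that woman to coincide with $\mu^X(m)$. Your anchor argument (keeping $\mu^X(m)$ out of the demoted set $X \setminus S$ so that the transition $\succ^X \to \succ^S$ is a legitimate push-down relative to $\mu^X(m)$) genuinely needs $\mu^X(m) \in S$, and the hypothesis $w' \in S$ delivers that only under this unproven identification. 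To be fair, the paper's own three-line proof passes over the same point with a ``without loss of generality,'' so you are being more attentive than the source --- but the justification you offer for the key normalization is not the one the cited lemma actually provides.
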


\begin{proof}
Let $\>^X \coloneqq \{ \>_{-m}, \>_{m}^{X \uparrow} \}$, $\>^S \coloneqq \{ \>_{-m}, \>_{m}^{S \uparrow} \}$, and $\>^{w'} \coloneqq \{ \>_{-m}, \>_{m}^{w' \uparrow} \}$. Without loss of generality, $\>^S$ is derived from $\>^X$ if $m$ pushes down the set $X \setminus S$. Similarly, $\>^{w'}$ is derived from $\>^S$ if $m$ pushes down the set $S \setminus \{w'\}$. From \Cref{lem:PushDown_Worse_For_Women}, we get that $\mu^X \succeq_W \mu^S$ and $\mu^S \succeq_W \mu^{w'}$, where $\mu^X \coloneqq \DA(\>^X)$, $\mu^S \coloneqq \DA(\>^S)$, and $\mu^{w'} \coloneqq \DA(\>^{w'})$. Since it is assumed that $\mu^X(w) = \mu^{w'}(w)$, it must be the case that $\mu^S(w) = \mu^X(w)$. Thus, $m$ can push up any subset to get $w$ matched with $\mu(w)$. 
\end{proof}

The next result (\Cref{lem:CombiningPushUpPushDown_Suboptimal}) shows that a strictly beneficial accomplice manipulation that uses a combination of push up and push down operations can be achieved through push up operations alone. This strengthens \Cref{lem:CombiningPushUpPushDown} which showed that a combination of push up and push down operations is \emph{weakly worse} than push up operations alone. 

\begin{restatable}{lemma}{CombiningPushUpPushDown_Suboptimal}
Let $(m, w)$ be a manipulating pair, and let $\>$ be a preference profile. For any subsets of women $X \subseteq W$ and $Y \subseteq W$, let $\>' \coloneqq \{ \succ_{-m}, \succ_m^{X\uparrow} \}$ denote the preference profile after pushing up the set $X$ and $\>'' \coloneqq \{ \succ_{-m}, \succ_m^{X\uparrow, Y\downarrow} \}$ denote the profile after pushing up $X$ and pushing down $Y$ in the true preference list $\>_m$ of man $m$. Let $\mu \coloneqq \DA(\>)$, $\mu' \coloneqq \DA(\>')$, and $\mu'' \coloneqq \DA(\>'')$. If $\mu''(w) \succ_w \mu(w)$, then $\mu''(w) = \mu'(w)$.
\label{lem:CombiningPushUpPushDown_Suboptimal}
\end{restatable}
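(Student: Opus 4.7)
The plan is to strengthen \Cref{lem:CombiningPushUpPushDown} from weak preference to equality using the added hypothesis $\mu''(w) \succ_w \mu(w)$. I would begin by applying \Cref{lem:CombiningPushUpPushDown} to obtain $\mu'(w) \succeq_w \mu''(w)$; combined with the hypothesis this also yields $\mu'(w) \succ_w \mu(w)$, so $\>'$ is itself a strictly beneficial push-up manipulation. Following the case analysis of the original proof, Case~II.a (where $\mu'(m) \in X$) already gives $\mu' = \mu''$ by \Cref{prop:Permuting_Falsified_Lists}, while Case~II.b is vacuous by \Cref{lem:accomplice-match-after-regret-manipulation} (if $m$ incurs regret under $\>'$ his partner must lie in $X$). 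Hence it suffices to handle Case~I, where $\mu'(m) = \mu(m)$ and $m$ is no-regret under $\>'$, and to show $\mu''(w) \succeq_w \mu'(w)$.

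In Case~I, I would invoke \Cref{lem:PushUpOneWoman} to obtain a single witness $w^* \in X$ such that $\mu^*(w) = \mu'(w)$, where $\succ^* \coloneqq \{\>_{-m}, \>_m^{\{w^*\}\uparrow}\}$ and $\mu^* \coloneqq \DA(\succ^*)$. Next I introduce the auxiliary profile $\succ^\dagger \coloneqq \{\>_{-m}, \>_m^{\{w^*\}\uparrow, Y\downarrow}\}$ and set $\mu^\dagger \coloneqq \DA(\succ^\dagger)$. Reaching $\succ^\dagger$ from $\succ^*$ is a push-down of $Y$, so \Cref{lem:PushDown_Worse_For_Women} gives $\mu^*(w) \succeq_w \mu^\dagger(w)$; reaching $\>''$ from $\succ^\dagger$ is a push-up of $X \setminus \{w^*\}$, and both endpoints have the accomplice matched to $\mu(m)$ (using \Cref{prop:PushDown} and \Cref{lem:SingleAgentPushUpNoRegret}), making this a no-regret push-up, so \Cref{cor:PushUp} gives $\mu''(w) \succeq_w \mu^\dagger(w)$. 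If I can additionally establish $\mu^\dagger(w) = \mu^*(w)$, then $\mu''(w) \succeq_w \mu^\dagger(w) = \mu^*(w) = \mu'(w)$, which combined with $\mu'(w) \succeq_w \mu''(w)$ yields the desired equality.

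The hard part will be proving $\mu^\dagger(w) = \mu^*(w)$, which amounts to the single-woman instance ($|X|=1$) of the very lemma being proved. The approach is a proposal-set analysis paralleling \Cref{lem:PushUpProposals} and \Cref{lem:NoRegretProfileSubset_Helper}: in $\DA(\succ^*)$, the accomplice proposes to every woman in $Y \subseteq \>_m^L$ but is rejected by all of them (since $\mu^*(m) = \mu(m)$), so these proposals are ``wasted''; pushing $Y$ below $\mu(m)$ to form $\succ^\dagger$ simply suppresses them. A careful tracking of the DA execution---showing that removing these wasted proposals leaves the set of effective proposals intact and hence preserves $w$'s best proposer---should yield $\mu^\dagger = \mu^*$ and in particular $\mu^\dagger(w) = \mu^*(w)$, completing the proof.
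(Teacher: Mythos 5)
Your reduction to Case~I is sound (Case~II.a gives $\mu'=\mu''$ via \Cref{prop:Permuting_Falsified_Lists}, and Case~II.b is indeed ruled out by \Cref{lem:accomplice-match-after-regret-manipulation}), and the chain $\mu''(w)\succeq_w\mu^\dagger(w)$ and $\mu^*(w)\succeq_w\mu^\dagger(w)$ is correctly derived. But the proof collapses at exactly the step you flag as hard: the claim that $\mu^\dagger=\mu^*$ because the accomplice's proposals to $Y$ under $\succ^*$ are all eventually rejected and hence ``wasted.'' That principle is false. Suppressing proposals that are ultimately rejected can strictly change the \DA{} outcome; this is precisely the mechanism behind \citet{H06cheating}'s coalitional manipulation, and it is why \Cref{prop:PushDown} asserts only $\mu^\dagger\succeq_M\mu^*$ and $\mu^\dagger(m)=\mu^*(m)$, not $\mu^\dagger=\mu^*$. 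Concretely, take $m_1\colon w_2\succ w_1\succ w_3$; $m_2\colon w_1\succ w_2\succ w_3$; $m_3\colon w_1\succ w_3\succ w_2$; $w_1\colon m_1\succ m_3\succ m_2$; $w_2\colon m_2\succ m_1\succ m_3$; $w_3\colon m_3\succ m_1\succ m_2$. Here \DA{} yields $(m_1,w_1),(m_2,w_2),(m_3,w_3)$, but if $m_3$ pushes $w_1$ (who rejected him) below his match $w_3$, the outcome becomes $(m_1,w_2),(m_2,w_1),(m_3,w_3)$: two men strictly improve and two women are strictly hurt, even though every dropped proposal was a rejected one. If the manipulating woman $w$ is such a hurt woman, then $\mu^*(w)\succ_w\mu^\dagger(w)$ and your chain no longer closes. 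Note also that what you are trying to establish here is exactly the $|X|=1$ instance of the lemma itself, whose hypothesis $\mu^\dagger(w)\succ_w\mu(w)$ you do not have (you only know $\mu''(w)\succ_w\mu(w)$ and $\mu''(w)\succeq_w\mu^\dagger(w)$), and your proposal-tracking argument never uses any such hypothesis---so it cannot be correct as stated.

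The paper avoids this trap with a different decomposition. It introduces the profile obtained from $\succ$ by pushing \emph{all} of $Z=X\cup Y$ below $\mu(m)$; by \Cref{prop:PushDown} the accomplice still gets $\mu(m)$ there, so $\succ$, $\succ'$, and $\succ''$ are all \emph{pure push-ups} from this common baseline, of $Y$, $Z$, and $X$ respectively. Applying \Cref{lem:PushUpOneWoman} to the push-up of $Z$ produces a single witness $w'\in Z$ achieving $\mu'(w)$, and \Cref{lem:Push_Up_Subset_Single_Agent} then yields a contradiction either way: if $w'\in X$ then $\mu''(w)=\mu'(w)$, contradicting the assumption $\mu''(w)\neq\mu'(w)$; if $w'\in Y$ then $\mu(w)=\mu'(w)$, contradicting $\mu'(w)\succ_w\mu(w)$. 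You would need an argument of comparable strength---one that actually exploits the hypothesis $\mu''(w)\succ_w\mu(w)$---to close your gap.
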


\begin{proof}
Suppose, for contradiction, that $\mu''(w) \neq \mu'(w)$. This, combined with $\mu'(w) \succeq_w \mu''(w)$ (\Cref{lem:CombiningPushUpPushDown}), gets us $\mu'(w) \succ_w \mu''(w)$. Additionally, we have assumed that $\mu''(w) \succ_w \mu(w)$, which gets us $\mu'(w) \succ_w \mu(w)$. 

\begin{figure}[h]
\centering
\begin{tikzcd}[cramped, column sep=tiny]
    \succ_m: & \cdots & Y & \cdots & \mu(m) & \cdots & X & \cdots\\
    \succ'_m: & \cdots & X & \cdots & Y & \cdots & \mu(m) & \cdots\\
    \succ''_m: & \cdots & X & \cdots & \mu(m) & \cdots & Y & \cdots\\
    \succ^*_m: & \cdots & \mu(m) & \cdots & X & \cdots & Y & \cdots
\end{tikzcd}
\caption{An illustration of man $m$'s preference lists under the profiles $\succ$,  $\succ^*$, $\succ'$, and $\succ''$ in the proof of \Cref{lem:CombiningPushUpPushDown_Suboptimal}.}
\label{fig:combining-push-up-push-down-suboptimal}
\end{figure}

Now, let $Z \coloneqq X \cup Y$. Consider a preference profile $\>^*$ derived from the true profile $\>$ by pushing down all women in $Z$ below $\mu(m)$ in the accomplice $m$'s true preference list (see \Cref{fig:combining-push-up-push-down-suboptimal}). From \Cref{prop:PushDown}, we have that $\mu^*(m) = \mu(m)$, where $\mu^* \coloneqq \DA(\>^*)$. This implies that a push up/down operation with respect to $\mu^*(m)$ is equivalent to the same operation with respect to $\mu(m)$. Therefore, starting with $\succ^*_m$, if $m$ pushes up all women in $Z$, then we obtain the profile $\succ'$. Similarly, starting with $\succ^*_m$, if $m$ instead pushes up all women in $X$ (respectively, $Y$), then we obtain the profile $\succ''$ (respectively, $\>$).

Since profile $\>'$ is derived from $\>^*$ via a push up operation of the set $Z$, \Cref{lem:PushUpOneWoman} implies that the same match for $w$, namely  $\mu'(w)$, can be achieved by promoting just one woman, say $w' \in Z$, in the preference list $\>'_m$. Since $X$ and $Y$ are disjoint sets, we have that either $w' \in X$ or $w' \in Y$. If $w' \in X$, then from \Cref{lem:Push_Up_Subset_Single_Agent}, we have that $\mu''(w) = \mu'(w)$, contradicting our original assumption. On the other hand, if $w' \in Y$, then again from \Cref{lem:Push_Up_Subset_Single_Agent} we get $\mu(w) = \mu'(w)$, contradicting the condition $\mu'(w) \succ_w \mu(w)$ that we showed above.
\end{proof}

\begin{restatable}{theorem}{Suboptimal_Inconspicuous}
Any beneficial accomplice manipulation is, without loss of generality, inconspicuous.
\label{thm:Suboptimal_Inconspicuous}
\end{restatable}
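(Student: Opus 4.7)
The plan is to mirror the proofs of Theorems~\ref{thm:Inconspicuous_NoRegret} and~\ref{thm:Inconspicuous_WithRegret}, but substitute the sharper Lemma~\ref{lem:CombiningPushUpPushDown_Suboptimal} in place of Lemma~\ref{lem:CombiningPushUpPushDown}. Fix any beneficial manipulation $\>''_m$ by the accomplice $m$ for the manipulating pair $(m,w)$; that is, the resulting \DA{} outcome $\mu'' \coloneqq \DA(\>'')$ satisfies $\mu''(w) \>_w \mu(w)$, where $\mu \coloneqq \DA(\>)$. By Proposition~\ref{prop:Permuting_Falsified_Lists}, I may assume without loss of generality that $\>''_m$ is obtained from the true list $\>_m$ by a push up of some set $X \subseteq W$ combined with a push down of some disjoint set $Y \subseteq W$, so $\>''_m = \>_m^{X\uparrow, Y\downarrow}$.

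The first step is to eliminate the push-down component. Let $\>' \coloneqq \{\>_{-m}, \>_m^{X\uparrow}\}$ and $\mu' \coloneqq \DA(\>')$. Since the manipulation is beneficial ($\mu''(w) \>_w \mu(w)$), Lemma~\ref{lem:CombiningPushUpPushDown_Suboptimal} immediately yields $\mu'(w) = \mu''(w)$. Thus the \emph{same} partner that $w$ obtains under the combined manipulation can be reproduced using a pure push-up operation; crucially, I do not merely weakly improve $w$'s match but reproduce it exactly, so beneficiality is preserved.

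The second step is to reduce the push-up of $X$ to the promotion of a single woman. I split on whether $m$ incurs regret under $\>'$. If not, Lemma~\ref{lem:PushUpOneWoman} produces a single woman $w^\star \in X$ whose promotion alone yields the match $\mu'(w)$ for $w$; if so, Lemma~\ref{lem:PushUpOneWomanWithRegret} gives the analogous conclusion. Either way, I exhibit an inconspicuous manipulation that matches $w$ with $\mu''(w) \>_w \mu(w)$, so the original beneficial manipulation admits an inconspicuous counterpart.

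I do not anticipate a serious obstacle: all of the heavy lifting has been done in Lemmas~\ref{lem:CombiningPushUpPushDown_Suboptimal}, \ref{lem:PushUpOneWoman}, and~\ref{lem:PushUpOneWomanWithRegret}. The only subtle point is recognizing that the suboptimal strengthening is exactly the upgrade from a weak preference inequality (as in Lemma~\ref{lem:CombiningPushUpPushDown}) to an equality of matched partners; without this upgrade, a strictly beneficial combined manipulation could in principle collapse into a merely weakly beneficial push-up, which would break the argument for suboptimal strategies. With Lemma~\ref{lem:CombiningPushUpPushDown_Suboptimal} in hand, this concern vanishes and the two-step reduction goes through cleanly in both the no-regret and with-regret regimes.
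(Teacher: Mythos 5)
Your proposal is correct and follows essentially the same route as the paper: reduce to push-up/push-down via \Cref{prop:Permuting_Falsified_Lists}, eliminate the push-down with \Cref{lem:CombiningPushUpPushDown_Suboptimal}, and then collapse the push-up to a single promotion. If anything, your explicit case split between \Cref{lem:PushUpOneWoman} and \Cref{lem:PushUpOneWomanWithRegret} in the final step is slightly more careful than the paper's own write-up, which cites only the no-regret lemma.
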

\begin{proof}
From \Cref{prop:Permuting_Falsified_Lists} (and the subsequent remarks), we know that any accomplice manipulation can be simulated via push up and push down operations. \Cref{lem:CombiningPushUpPushDown_Suboptimal} shows that \emph{any} beneficial (i.e., optimal or suboptimal) manipulation that is achieved by some combination of pushing up a set $X \subseteq W$ and pushing down $Y \subseteq W$ can also be achieved by only pushing up $X \subseteq W$. Finally, from \Cref{lem:PushUpOneWoman}, we know that any match for the manipulating woman $w$ that is achieved by pushing up $X \subseteq W$ is also achieved by pushing up exactly one woman in $X$, thus establishing the desired inconspicuousness property.
\end{proof}

\end{document}